\renewcommand{\footnoterule}{%
  \kern -3pt
  \hrule width .2\textwidth height 1pt
  \kern 2pt
}
\newcommand{\algTab}{\hspace{5mm}}
\renewcommand{\Pr}[2][]{\ensuremath{\mathop{\rm Pr}_{#1}\left(#2\right)}}
\newcommand{\BetaD}{\ensuremath{\mathtt{Beta}}\xspace}
\newcommand{\OurTitle}{Bayesian Incentive-Compatible Bandit Exploration}
\newcommand{\propref}[1]{Property~\refprop{#1}}
\newcommand{\ALG}{\ensuremath{\mA}\xspace}
\newcommand{\ALGIC}{\ensuremath{\ALG^{\mathtt{IC}}}\xspace}
\renewenvironment{proof}[1][Proof]
{\Trivlist\item[\hspace*{1em}\hskip\labelsep{\it #1.\enskip}]\ignorespaces}
{{\hfill \Halmos}\endTrivlist\addvspace{2pt}}
\begin{document}
%%%%%%%%%%%%%%%%

% Outcomment only when entries are known. Otherwise leave as is and
%   default values will be used.
%\setcounter{page}{1}
%\VOLUME{00}%
%\NO{0}%
%\MONTH{Xxxxx}% (month or a similar seasonal id)
%\YEAR{0000}% e.g., 2005
%\FIRSTPAGE{000}%
%\LASTPAGE{000}%
%\SHORTYEAR{00}% shortened year (two-digit)
%\ISSUE{0000} %
%\LONGFIRSTPAGE{0001} %
%\DOI{10.1287/xxxx.0000.0000}%

% Author's names for the running heads
% Sample depending on the number of authors;
% \RUNAUTHOR{Jones}
% \RUNAUTHOR{Jones and Wilson}
% \RUNAUTHOR{Jones, Miller, and Wilson}
% \RUNAUTHOR{Jones et al.} % for four or more authors
% Enter authors following the given pattern:
\RUNAUTHOR{Mansour, Slivkins and Syrgkanis}

% Title or shortened title suitable for running heads. Sample:
% \RUNTITLE{Bundling Information Goods of Decreasing Value}
% Enter the (shortened) title:
\RUNTITLE{\OurTitle}

% Full title. Sample:
% \TITLE{Bundling Information Goods of Decreasing Value}
% Enter the full title:
\TITLE{\OurTitle}

% Block of authors and their affiliations starts here:
% NOTE: Authors with same affiliation, if the order of authors allows,
%   should be entered in ONE field, separated by a comma.
%   \EMAIL field can be repeated if more than one author
\ARTICLEAUTHORS{%
\AUTHOR{Yishay Mansour%
    \footnote{Most of this research has been done while Y. Mansour was a researcher at Microsoft Research, Herzelia, Israel.}}
\AFF{Tel Aviv University, Tel Aviv, Israel, \EMAIL{mansour.yishay@gmail.com}}
\AUTHOR{Aleksandrs Slivkins}
\AFF{Microsoft Research,  New York, NY 10011, USA, \EMAIL{slivkins@microsoft.com}}
\AUTHOR{Vasilis Syrgkanis}
\AFF{Microsoft Research, Cambridge, MA 02142, USA, \EMAIL{vasy@microsoft.com}}
} % end of the block

\ABSTRACT{%
As self-interested individuals ("agents") make decisions over time, they utilize information revealed by other agents in the past, and produce information that may help agents in the future. This phenomenon is common in a wide range of scenarios in the Internet economy, as well as in medical decisions. Each agent would like to \emph{exploit}: select the best action given the current information, but would prefer the previous agents to \emph{explore}: try out various alternatives to collect information. A social planner, by means of a carefully designed recommendation policy, can incentivize the agents to balance the exploration and exploitation so as to maximize social welfare.

We model the planner's recommendation policy as a multi-arm bandit algorithm under incentive-compatibility constraints induced by agents' Bayesian priors. We design a bandit algorithm which is incentive-compatible and has asymptotically optimal performance, as expressed by regret. Further, we provide a black-box reduction from an arbitrary multi-arm bandit algorithm to an incentive-compatible one, with only a constant multiplicative increase in regret. This reduction works for very general bandit setting that incorporate contexts and arbitrary partial feedback.

\OMIT{ OLD ABSTRACT
Individual decision-makers consume information revealed by the previous decision makers, and produce information that may help in future decisions. This phenomenon is common in a wide range of scenarios in the Internet economy, as well as in other domains such as medical decisions. Each decision-maker would individually prefer to \emph{exploit}: select an action with the highest expected reward given her current information. At the same time, each decision-maker would prefer previous decision-makers to \emph{explore}, producing information about the rewards of various actions. A social planner, by means of carefully designed information disclosure, can incentivize the agents to balance the exploration and exploitation so as to maximize social welfare.

We formulate this problem as a multi-armed bandit problem (and various generalizations thereof) under incentive-compatibility constraints induced by the agents' Bayesian priors. We design an  incentive-compatible bandit algorithm for the social planner whose regret is asymptotically optimal among all bandit algorithms (incentive-compatible or not). Further, we provide a black-box reduction from an arbitrary multi-arm bandit algorithm to an incentive-compatible one, with only a constant multiplicative increase in regret. This reduction works for very general bandit setting that incorporate contexts and arbitrary auxiliary feedback.
} %%%
}%

% Sample
%\KEYWORDS{deterministic inventory theory; infinite linear programming duality;
%  existence of optimal policies; semi-Markov decision process; cyclic schedule}

% Fill in data. If unknown, outcomment the field
\KEYWORDS{mechanism design, multi-armed bandits, regret, Bayesian incentive-compatibility}
\HISTORY{First version: February 2015. This version: May 2019.\footnote{An extended abstract of this paper has been published in \emph{ACM EC 2015} (16th ACM Conf. on Economics and Computation). Compared to the version in conference proceedings, this version contains complete proofs, revamped introductory sections, and thoroughly revised presentation of the technical material. Further, two major extensions are fleshed out, resp. to more than two actions and to more general machine learning settings, whereas they were only informally described in the conference version. The main results are unchanged, but their formulation and presentation is streamlined, particularly regarding assumptions on the common prior. This version also adds a discussion of potential applications to medical trials.}}

\maketitle

\newpage
\tableofcontents
\vfill
\newpage

% !TEX root=main.tex

\section{Introduction}
\label{sec:intro}

Decisions made by an individual often reveal information about the world that can be useful to others. For example, the decision to dine in a particular restaurant may reveal some observations about this restaurant. This revelation could be achieved, for example, by posting a photo, tweeting, or writing a review. Others can consume this information either directly (via photo, review, tweet, etc.) or indirectly through aggregations, summarizations or recommendations. Thus, individuals have a dual role: they both consume information from previous individuals and produce information for future consumption. This phenomenon applies very broadly: the choice of a product or experience, be it a movie, hotel, book, home appliance, or virtually any other consumer's choice, leads to an individual's subjective observations pertaining to this choice. These subjective observations can be recorded and collected, e.g., when the individual ranks a product or leaves a review, and can help others make similar choices in similar circumstances in a more informed way. Collecting, aggregating and presenting such observations is a crucial value proposition of numerous businesses in the modern Internet economy, such as TripAdvisor, Yelp, Netflix, Amazon, Waze and many others (see \reftbl{recommendation-systems}). Similar issues, albeit possibly with much higher stakes, arise in medical decisions:
selecting a doctor or a hospital, choosing a drug or a treatment,
%to take for a given health condition,
or deciding whether to participate in a medical trial.
First the individual can consult information from similar individuals in the past, to the extent that such information is available, and later he can contribute her experience as a review or as an outcome in a medical trial.
%Each such decision potentially reveals an observation about the health outcomes of this doctor and/or this drug for a particular health condition and patient demographics.

\begin{table}[h]
\begin{center}
\begin{tabular}{l|c}
Watch this movie        & Netflix  \\
Dine in this restaurant & Yelp \\
Vacation in this resort & TripAdvisor \\
Buy this product        & Amazon \\
Drive this route        & Waze \\ \hline
Do this exercise        & FitBit \\
See this doctor         & SuggestADoctor \\
Take this medicine      & medical trial
\end{tabular}
\end{center}
\caption{Systems for recommendations and collecting feedback.}
\label{tbl:recommendation-systems}
\end{table}

If a social planner were to direct the individuals in the information-revealing decisions discussed above, she would have two conflicting goals: \emph{exploitation}, choose the best alternative given the information available so far, and \emph{exploration}, trying out less known alternatives for the sake of gathering more information, at the risk of worsening the individual experience. A social planner would like to combine exploration and exploitation so as to maximize the social welfare, which results in the \emph{exploration-exploitation tradeoff}, a well-known subject in Machine Learning, Operation Research and Economics.

However, when the decisions are made by individuals rather than enforced by the planner, we have another problem dimension based on the individuals' incentives. While the social planner benefits from both exploration and exploitation, each individuals' incentives are typically skewed in favor of the latter. (In particular, many people prefer to benefit from exploration done by others.) Therefore, the society as a whole may suffer from insufficient amount of exploration. In particular, if a given alternative appears suboptimal given the information available so far, however sparse and incomplete, then this alternative may remain unexplored -- even though it may be the best.

The focus of this work is how to incentivize self-interested decision-makers to explore. We consider a social planner who cannot control the decision-makers, but can communicate with them, e.g., recommend an action and observe the outcome later on.
Such a planner would typically be implemented via a website, either one dedicated to recommendations and feedback collection (such as Yelp or Waze), or one that actually provides the product or experience being recommended (such as Netflix or Amazon). In medical applications, the planner would be either a website that rates/recommends doctors and collects reviews on their services
(such as \emph{rateMDs.com} or \emph{SuggestADoctor.com}), or an organization conducting a medical trial. We are primarily interested in exploration that is efficient from the social planner's perspective, \ie exploration that optimizes the social welfare.%
\footnote{In the context of Internet economy, the ``planner" would be a for-profit company. Yet, the planner's goal, for the purposes of incentivinzing exploration, would typically be closely aligned with the social welfare.}

Following \citet{Kremer-JPE14}, we consider a basic scenario when the only incentive offered by the social planner is the recommended experience itself (or rather, the individual's belief about the expected utility of this experience). In particular, the planner does not offer payments for following the recommendation.
On a technical level, we study a mechanism design problem with an explore-exploit tradeoff and auxiliary incentive-compatibility constraints. Absent these constraints, our problem reduces to \emph{multi-armed bandits} (\emph{MAB}) with stochastic rewards, the paradigmatic and well-studied setting for exploration-exploitation tradeoffs, and various generalizations thereof. The interaction between the planner and a single agent can be viewed as a version of the \emph{Bayesian Persuasion} game \citep{Kamenica-aer11} in which the planner has more information due to the feedback from the previous agents; in fact, this \emph{information asymmetry} is crucial for ensuring the desired incentives.

%Our main focus is on the case of stochastic rewards and on minimizing regret.

\subsection{Our model and scope}
\label{sec:intro-scope}

We consider the following abstract framework, called \emph{incentive-compatible exploration}. The social planner is an algorithm that interacts with the self-interested decision-makers (henceforth, \emph{agents}) over time. In each round, an agent arrives, chooses one action among several alternatives, receives a reward for the chosen action, and leaves forever. Before an agent makes her choice, the planner sends a message to the agent which includes a recommended action.
Everything that happens in a given round is observed by the planner, but not by other agents. The agent has a Bayesian prior on the reward distribution, and chooses an action that maximizes its Bayesian expected reward given the algorithm's message (breaking ties in favor of the recommended action). The agent's prior is known to the planner, either fully or partially. We require the planner's algorithm to be \emph{Bayesian incentive-compatible} (henceforth, \emph{BIC}), in the sense that each agent's Bayesian expected reward is maximized by the recommended action. The basic goal is to design a BIC algorithm so as to maximize social welfare, \ie the cumulative reward of all agents.

The algorithm's message to each agent is restricted to the recommended action (call such algorithms \emph{message-restricted}). Any BIC algorithm can be turned into a message-restricted BIC algorithm which chooses the same actions, as long as the agents' priors are exactly known to the planner.%
\footnote{This is by ``revelation principle", as observed in \citet{Kremer-JPE14} (see Remark~\ref{rem:revelation-principle} in the Preliminaries).}
Note that a message-restricted algorithm (BIC or not) is simply an MAB-like learning algorithm for the same setting.

A paradigmatic example is the setting where the reward is an independent draw from a distribution determined only by the chosen action. All agents share a common Bayesian prior on the reward distribution; the prior is also known to the planner. No other information is received by the algorithm or an agent (apart from the prior, the recommended action, and the reward for the chosen action).
We call this setting \emph{BIC bandit exploration}. Absent the BIC constraint, it reduces to the MAB problem with IID rewards and Bayesian priors. We also generalize this setting in several directions, both in terms of the machine learning problem being solved by the planner's algorithm, and in terms of the mechanism design assumptions on the information structure.

We assume that each agent knows which round he is arriving in. This assumption is without loss of generality, because an algorithm that is BIC under this assumption is also BIC for a more general version in which each agent has a Bayesian belief about his round. (Also, the round can simply be included in the planner's message to this agent.)

\xhdr{Discussion.}
BIC exploration can be interpreted as a protection from the  \emph{selection bias}, when the population that participates in the experiment differs from the target population, potentially affecting the findings. More specifically, both action and the observed outcome may depend on agents' properties, \eg the people who rate a particular niche movie are the people who chose to see this movie, and therefore (this being a niche movie) are much more likely to enjoy it.

BIC exploration does not rely on ``external" incentives such as monetary payments or discounts, social status distinctions (\eg leaderboards or merit badges for prolific feedback contributors), or people's affinity towards experimentation. Such approaches can themseleves cause some selection bias. Indeed, paying patients for participation in a medical trial may be more appealing to poorer patients; offering discounts for new services may attract customers who are more sensitive to such discounts; and relying on people who like to explore for themselves would lead to a dataset that represents this category of people rather than the general population.  While all these approaches are reasonable and in fact widely used (with well-developed statistical tools to mitigate the selection bias), an alternative intrinsically less prone to selection bias is, in our opinion, worth investigating.

The ``intrinsic" incentives offered by BIC exploration can be viewed as a guarantee of \emph{fairness} for the agents. Indeed, even though the planner imposes experimentation on the agents, the said experimentation does not degrade expected utility of any one agent. This is simply because an agent can always choose to ignore the planner's recommendation and select an action with the highest prior mean reward. This is particularly important for settings in which ``external" incentives described above do not fully substitute for low intrinsic utility of the chosen actions. For example, a monetary payment does not fully substitute for an adverse outcome of a medical trial, and a discounted meal at a restaurant does not fully substitute for a bad experience.

However, the BIC property is stronger than the fairness property discussed above. Indeed, following the recommendation should be compared not only to ignoring it (and going with the best action according to the prior), but also to choosing some other action \emph{given the recommendation}. In other words, an agent may see a particular action being recommended, do a Bayesian update on this fact, and decide to choose a different action; BIC property does not allow this to happen.

We focus on message-restricted algorithms, and rely on the BIC property to convince agents to follow our recommendations. We do not attempt to make our recommendations more convincing by revealing additional information, because doing so does not help in our model, and because the desirable kinds of additional information to be revealed are likely to be application-specific (whereas with message-restricted algorithms we capture many potential applications at once). Further, message-restricted algorithms are allowed, and even recommended, in the domain of medical trials (see Section~\ref{sec:medical} for discussion).

We make the standard assumptions of trust and rationality. We assume the ``power of commitment": the planner can commit to a particular policy, and the agents trust that the planner actually implements it. We posit that the agents are Bayesian and risk-neutral.%
\footnote{A version of risk-aversion may be modeled in a standard way, by redefining rewards for the undesirable outcomes.}
Finally, we assume that the agents either trust that our mechanisms are Bayesian incentive-compatible, or are able to verify this property themselves or via third party. While non-trivial, such assumptions are standard and tremendously useful in many scenarios throughout economics.

\xhdr{Objectives.} We seek BIC algorithms whose performance is near-optimal for the corresponding setting without the BIC constraint. This is a common viewpoint for welfare-optimizing mechanism design problems, which often leads to strong positive results, both prior-independent and Bayesian, even if Bayesian-optimal BIC algorithms are beyond one's reach. Prior-independent guarantees are particularly desirable because priors are almost never completely correct in practice.

We express prior-independent performance of an algorithm via a standard notion of \emph{regret}: the difference, in terms of the cumulative expected reward, between the algorithm and the the best fixed action. Intuitively, it is the extent to which the algorithm ``regrets" not knowing the best action in advance. For Bayesian performance, we consider \emph{Bayesian regret}: ex-post regret in expectation over the prior, and also the average Bayesian-expected reward per agent. (For clarity, we will refer to the prior-independent version as \emph{ex-post regret}.) Moreover, we consider a version in which the algorithm outputs a prediction after each round (visible only to the planner), e.g., the predicted best action; then we are interested in the rate at which this prediction improves over time.

\subsection{Our contributions}
\label{sec:contributions}

On a high level, we make two contributions:
\vspace{2mm}
\begin{description}
\item{\emph{Regret minimization}}
We provide an algorithm for BIC bandit exploration whose ex-post regret is asymptotically optimal among all MAB algorithms (assuming a constant number of actions). Our algorithm is \emph{detail-free}, in that it requires only a limited knowledge of the prior.
\vspace{2mm}
\item{\emph{Black-box reduction}} We provide a reduction from an arbitrary learning algorithm to a BIC one, with only a minor loss in performance; this reduction ``works" for a very general setting.
\end{description}
\vspace{2mm}
In what follows we discuss our results in more detail.

\OMIT{%%%%
the rich field of Algorithmic Mechanism Design \ascomment{what to cite?}, whenever algorithms for the same setting and the same objective are well-defined.%
\footnote{This is the case for many welfare-optimization problems, but not for most revenue-maximization problems.}
is essentially a holy grail of Algorithmic Mechanism Design, as applied to a particular application domain. \ascomment{does ``holy grail" sound OK here? citation?} Armed with a reduction such as
} %%%%%%%

\xhdr{Regret minimization.}
Following the literature on regret minimization, we focus on the asymptotic ex-post regret rate as a function of the time horizon (which in our setting corresponds to the number of agents). We establish that the BIC restriction does not affect the asymptotically optimal ex-post regret rate. The optimality is two-fold: in the worst case over all realizations of the common prior (\ie for every possible vector of expected rewards), and for every particular realization of the prior (which may allow much smaller ex-post regret than in the worst-case). More formally, if $T$ is the time horizon and $\Delta$ is the ``gap" in expected reward between the best action and the second-best action, then our algorithm achieves ex-post regret
\begin{align}\label{eq:intro-MAB-regret}
    c_\mP + c_0\cdot\min(\tfrac{m}{\Delta} \log T, \sqrt{mT\log T}),
\end{align}
where $c_0$ is an absolute constant, and $c_\mP<0$ depends only on the common prior $\mP$. Without the BIC restriction, bandit algorithms achieve a similar regret bound, but without the $c_\mP$ term \citep{bandits-ucb1}. No bandit algorithm, BIC or not, can achieve ex-post regret better than
     $O(\min(\tfrac{m}{\Delta} \log T,\; \sqrt{mT}))$
     \citep{Lai-Robbins-85,bandits-exp3}.
\footnote{More precisely, this is the best regret bound that can be achieved for all problem instances with gap $\Delta$.}

\OMIT{The per-realization result is expressed in terms of the time horizon and the ``gap" between the best and second-best action.}

\OMIT{Third, the per-realization regret rate quantifies the advantage of ``nice" problem instances (those with a large ``gap"), and of stochastic rewards vs. rewards selected in advance by an adversary.}

Conceptually, our algorithm implements \emph{adaptive exploration}: the exploration schedule is adapted to the observations, so that low-performing actions are phased out early. This is known to be a vastly superior approach compared to exploration schedules that are fixed in advance.%
\footnote{For multi-armed bandit algorithms with fixed exploration schedules, regret can be no better than $\Omega(m^{1/3}\,T^{2/3})$ in the worst case over all problem instances, and (essentially) no better than $O(\Delta T^{\Omega(1)})$ in the worst case over all problem instances with gap $\Delta$ \citep{MechMAB-ec09}.}

Further, our algorithm is \emph{detail-free}, requiring only a limited knowledge of the common prior. This is desirable because in practice it may be complicated or impossible to elicit the  prior exactly. Moreover, this feature allows the agents to have different priors (as long as they are ``compatible" with the planner's prior, in a precise sense specified later). In fact, an agent does not even need to know her prior exactly: instead, she would trust the planner's recommendation as long as she believes that their priors are compatible.

The prior-dependent additive term $c_\mP$ in \eqref{eq:intro-MAB-regret} can get arbitrarily large depending on the prior, even for two arms. Informally, the magnitude of $c_\mP$ is proportional to how difficult it is to convince an agent to try a seemingly suboptimal action. This term can also grow exponentially with the number of arms $m$ in some examples. While (very) suboptimal for multi-armed bandits, we conjecture that such dependence is an inevitable ``price of incentive-compatibility".

\xhdr{Black-box reduction.}
Given an arbitrary MAB algorithm $\ALG$, we provide a BIC algorithm $\ALGIC$ which internally uses $\ALG$ as a ``black-box". That is, $\ALGIC$ simulates a run of $\ALG$, providing inputs and recording the respective outputs, but does not depend on the internal workings of $\ALG$. In addition to recommending an action, the original algorithm $\ALG$ can also output a \emph{prediction} after each round (visible only to the planner), e.g. the predicted best action; then $\ALGIC$ outputs a prediction, too. A reduction such as ours allows a modular design: one can design a non-BIC algorithm (or take an existing one), and then use the reduction to inject incentive-compatibility. Modular designs are very desirable in complex economic systems, especially for settings such as MAB with a rich body of existing work.

Our reduction incurs only a small loss in performance, which can be quantified in several ways. In terms of Bayesian regret, the performance of $\ALGIC$ worsens by at most a constant multiplicative factor that only depends on the prior. In terms of the average rewards, we guarantee the following: for any duration $T$, the average Bayesian-expected reward of $\ALGIC$ between rounds $c_\mP$ and $c_\mP+T$ is at least that of the first $T/L_\mP$ rounds in the original algorithm $\ALG$; here $c_\mP$ and $L_\mP$ are prior-dependent constants.%
\footnote{As in \eqref{eq:intro-MAB-regret}, the prior-dependent constants can get arbitrarily large depending on the prior, even for $m=2$ arms, and can scale exponentially with $m$ in some examples.}
Finally, if $\ALG$ outputs a prediction $\phi_t$ after each round $t$, then $\ALGIC$ learns as fast as $\ALG$, up to a prior-dependent constant factor $c_\mP$: for every realization of the prior, its prediction in round $t$ has the same distribution as $\phi_{\flr{t/c_\mP}}$.%
\footnote{So if the original algorithm $\ALG$ gives an asymptotically optimal error rate as a function of $t$, compared to the ``correct" prediction, then so does the transformed algorithm $\ALGIC$, up to a prior-dependent multiplicative factor.}

The black-box reduction has several benefits other than ``modular design". Most immediately,
one can plug in an MAB algorithm that takes into account the Bayesian prior or any other auxiliary information that a planner might have. Moreover, one may wish to implement a particular approach to exploration, e.g., incorporate some constraints on the losses, or preferences about which arms to favor or to throttle. Further, the planner may wish to predict things other than the best action. To take a very stark example, the planner may wish to learn what are the \emph{worst} actions (in order to eliminate these actions later by other means such as  legislation). While the agents would not normally dwell on low-performing actions, our reduction would then incentivize them to explore these actions in detail.

\xhdr{Beyond BIC bandit exploration.}
Our black-box reduction supports much richer scenarios than BIC \emph{bandit} exploration. Most importantly, it allows for agent heterogeneity, as expressed by observable signals. We adopt the framework of \emph{contextual bandits}, well-established in the machine learning literature (see Section~\ref{sec:related-work} for citations). In particular, each agent is characterized by a signal, called \emph{context}, observable by both the agent and the planner before the planner issues the recommendation. The context can include demographics, tastes, preferences and other agent-specific information. It impacts the expected rewards received by this agent, as well as the agent's beliefs about these rewards. Rather than choose the best action, the planner now wishes to optimize a \emph{policy} that maps contexts to actions. This type of agent heterogeneity is practically important: for example, websites that issue recommendations may possess a huge amount of information about their customers, and routinely use this ``context" to adjust their recommendations (e.g., Amazon and Netflix). Our reduction turns an arbitrary contextual bandit algorithm into a BIC one, with performance guarantees similar to those for the non-contextual version. From the selection bias perspective, this extension distinguishes the agents based on their observed attributes (\ie those included in the context), but protects against the selection bias that arises due to the \emph{unobserved} attributes.

Moreover, the reduction allows learning algorithms to incorporate arbitrary \emph{auxiliary feedback} that agents' actions may reveal. For example, a restaurant review may contain not only the overall evaluation of the agent's experience (\ie her reward), but also reveal her culinary preferences, which in turn may shed light on the popularity of other restaurants (\ie on the expected rewards of other actions). Further, an action can consist of multiple ``sub-actions", perhaps under common constraints, and the auxiliary feedback may reveal the reward for each sub-action. For instance, a detailed restaurant recommendation may include suggestions for each course, and a review may contain evaluations thereof. Such problems (without incentive constraints) have been actively studied in machine learning, under the names ``MAB with partial monitoring" and ``combinatorial semi-bandits"; see Section~\ref{sec:related-work} for relevant citations.

We allow for scenarios when the planner wishes to optimize his own utility which is misaligned with the agents'. Then rewards in our model still correspond to the agents' utilities, and planner's utility is observed by the algorithm as auxiliary feedback. For example, a vendor who recommends products to customers may favor more expensive products or products that are tied in with his other offerings. In a different setting, a planner may prefer \emph{less} expensive options: in a medical trial with substantial treatment costs, patients (who are getting these treatments for free) are only interested in their respective health outcomes, whereas a socially responsible planner may also factor in the treatment costs. Another example is a medical trial of several available immunizations for the same contagious disease, potentially offering different tradeoffs between the strength and duration of immunity and the severity of side effects. Hoping to free-ride on the immunity of others, a patient may assign a lower utility to a successful outcome than the government, and therefore prefer safer but less efficient options.

A black-box reduction such as ours is particularly desirable for the extended setting described above, essentially because it is not tied up to a particular variant of the problem. Indeed, contextual bandit algorithms in the literature heavily depend on the class of policies to optimize over, whereas our reduction does not. Likewise, algorithms for bandits with auxiliary feedback heavily depend on the particular kind of feedback.

\subsection{Our techniques and assumptions}

An essential challenge in BIC exploration is to incentivize agents to explore actions that appear suboptimal according to the agent's prior and/or the information currently available to the planner. The desirable incentives are created due to \emph{information asymmetry}: the planner knows more than the agents do, and the recommendation reveals a carefully calibrated amount of additional information. The agent's beliefs are then updated so that the recommended action now seems preferable to others, even though the algorithm may in fact be exploring in this round, and/or the prior mean reward of this action may be small.

Our problem is hopeless for some priors. For a simple example, consider a prior on two actions (whose expected rewards are denoted $\mu_1$ and $\mu_2$) such that $\E[\mu_1]>\E[\mu_2]$ and $\mu_1$ is statistically independent from $\mu_1-\mu_2$. Then, since no amount of samples from action $1$ has any bearing on $\mu_1-\mu_2$, a BIC algorithm cannot possibly incentivize agents to try action $2$. To rule out such pathological examples, we make some assumptions. Our detail-free result assumes that the prior is independent across actions, and additionally posits minor restrictions in terms of bounded rewards and full support. The black-box reduction posits a more abstract condition which allows for correlated priors, and includes independent priors as a special case (with similar minor restrictions). This condition is necessary for the case of two arms, if the algorithm is ``strongly BIC", \ie if each agent's utility is strictly maximized by the recommended action.

Our algorithms are based on (versions of) a common building block: an algorithm that incentivizes agents to explore at least once during a relatively short time interval (a ``phase"). The idea is to hide one round of exploration among many rounds of exploitation. An agent receiving a recommendation does not know whether this recommendation corresponds to exploration or to exploitation. However, the agents' Bayesian posterior favors the recommended action because the exploitation is much more likely. Information asymmetry arises because the agent cannot observe the previous rounds and the algorithm's randomness.

To obtain BIC algorithms with good performance, we overcome a number of technical challenges, some of which are listed below. First, an algorithm needs to convince an agent not to switch to several other actions: essentially, all actions with larger prior mean reward than the recommended action. In particular, the algorithm should accumulate sufficiently many samples of these actions beforehand. Second, we ensure that phase length --- \ie the sufficient size of exploitation pool --- does not need to grow over time. In particular, it helps not to reveal any
information to future agents (e.g., after each phase or at other ``checkpoints" throughout the algorithm). Third, for the black-box reduction we ensure that the choices of the bandit algorithm that we reduce from do not reveal any information about rewards. In particular, this consideration was essential in formulating the main assumption in the analysis (\propref{prop:general-persuasion} on page~\pageref{prop:general-persuasion}). Fourth, the detail-free algorithm cannot use Bayesian inference, and relies on sample average rewards to make conclusions about Bayesian posterior rewards, even though the latter is only an approximation for the former.

\OMIT{ %%%%%
\xhdr{Map of the technical content.}
We discuss technical preliminaries in Sections~\ref{sec:prelims}. The first technical result in the paper is a BIC algorithm for initial exploration in the special case of two arms (Section~\ref{sec:building-block}), the most lucid incarnation of the ``common building block" discussed above. Then we present the main results for BIC bandit exploration: the black-box reduction (Section~\ref{sec:black-box}) and the detail-free algorithm  with optimal ex-post regret (Section~\ref{sec:arms-elimination}). Then we proceed with a major extension to contexts and auxiliary feedback (Section~\ref{sec:general}). The proofs pertaining to the properties of the common prior are deferred to Section~\ref{sec:properties}. Conclusions and open questions are in Section~\ref{sec:conclusions}. The detail-free algorithm becomes substantially simpler for the special case of two actions. For better intuition, we provide a standalone exposition of this special case in Appendix~\ref{sec:DF-two-arms}.
} %%%%%%

\subsection{Further discussion on medical trials}
\label{sec:medical}

We view patients' incentives as one of the major obstacles that inhibit medical trials in practice, or prevent some of them from happening altogether. This obstacle may be particularly damaging for large-scale trials that concern wide-spread medical conditions with relatively inexpensive treatments. Then finding suitable patients and providing them with appropriate treatments would be fairly realistic, but incentivizing patients to participate in sufficient numbers may be challenging. BIC exploration is thus a theoretical (and so far, highly idealized) attempt to mitigate this obstacle.

Medical trials have been one of the original motivations for studying MAB and exploration-exploitation tradeoff \citep{Thompson-1933,Gittins-index-79}. Bandit-like designs for medical trials belong to the realm of \emph{adaptive} medical trials
\citep{Chow-adaptive-2008}, which can also include other ``adaptive" features such as early stopping, sample size re-estimation, and changing the dosage.

``Message-restricted" algorithms (which recommend particular treatments to patients and reveal no other information) are appropriate for this domain. Revealing some (but not all) information about the medical trial is required to meet the standards of ``informed consent", as prescribed by various guidelines and regulations
\citep[see][for background]{CITI-GCP-2012}.
However, revealing information about clinical outcomes in an ongoing trial is currently not required, to the best of our understanding. In fact, revealing such information is seen as a significant threat to the statistical validity of the trial (because both patients and doctors may become biased in favor of better-performing treatments), and care is advised to prevent information leaks as much as possible
\citep[see ][for background and discussion, particularly pp. 26-30]{PCORI-adaptive-2012}.

Medical trials provide additional motivation for BIC bandit exploration with multiple actions. While traditional medical trials compare a new treatment against the placebo or a standard treatment, designs of medical trials with multiple treatments have been studied in the biostatistics literature \citep[e.g., see][]{Hellmich-multiarm-2001,Freidlin-multiarm-2008}, and are becoming increasingly important in practice \citep{Lancet-multiarm-2014,Redig-BasketTrials-2015}.
Note that even for the placebo or the standard treatment the expected reward is often not known in advance, as it may depend on the particular patient population.

BIC \emph{contextual} bandit exploration is particularly relevant to medical trials, as patients come with a lot of ``context" which can be used to adjust and personalize the recommended treatment. The context can include age, fitness levels, race or ethnicity, various aspects of the patient's medical history, as well as genetic markers (increasingly so as genetic sequencing is becoming more available). Context-dependent treatments (known as \emph{personalized medicine}) has been an important trend in the pharmaceutical industry in the past decade, especially genetic-marker-dependent treatments in oncology
\citep[e.g., see][]{Maitland-PersonalizedOncology-2011,Garimella-PersonalizedOncology-2015,Cell-PersonalizedOncology-2012}.
Medical trials for context-dependent treatments are more complex, as they must take the context into account. To reduce costs and address patient scarcity, a number of novel designs for such trials have been deployed
\citep[e.g., see][]{Maitland-PersonalizedOncology-2011,Garimella-PersonalizedOncology-2015}.
Some of the deployed designs are explicitly ``contextual", in that they seek the best policy --- mapping from patient's context to treatment \citep{Redig-BasketTrials-2015}. More advanced ``contextual" designs have been studied in biostatistics
\citep[e.g., see][]{Simon-trials-2005,Simon-trials-2007,Simon-trials-2010}.

\section{Related work}
\label{sec:related-work}

\xhdr{Exploration, exploitation, and incentives.}
There is a growing literature about a three-way interplay of exploration, exploitation, and incentives, comprising a variety of scenarios.

The study of mechanisms to incentivize exploration has been initiated by \citet{Kremer-JPE14}.
They mainly focus on deriving the Bayesian-optimal policy for the case of only two actions and deterministic rewards, and only obtain a preliminary result for stochastic rewards; a detailed comparison is provided below. \citet{Bimpikis-exploration-ms17}%
\footnote{\citet{Bimpikis-exploration-ms17} is concurrent and independent work with respect to the conference publication of this paper.}
consider a similar model with time-discounted rewards, focusing on the case of two arms. If expected rewards are known for one arm, they provide a BIC algorithm that achieves the ``first-best" utility. For the general case, they design an optimal BIC algorithm that is computationally inefficient, and propose a tractable heuristic based on the same techniques.
 Motivated by similar applications in the Internet economy, \citet{Che-13} propose a model with a continuous information flow and a continuum of consumers arriving to a recommendation system and derive a Bayesian-optimal incentive-compatible policy. Their model is technically different from ours, and is restricted to two arms and binary rewards. \citet{Frazier-ec14} consider a similar setting with monetary transfers, where the planner not only recommends an action to each agent, but also offers a payment for taking this action. In their setting, incentives are created via the offered payments rather than via information asymmetry.

%\citet{FKKK14} also consider the problem of incetivizing exploration but in a very different setting.
%In their setting the planner uses payments to incetivize the individual agents to explore. The paper studies the trade-off between the payments and the social welfare obtained, giving a complete characterization.

Exploration-exploitation problems with self-interested agents have been studied in several other scenarios:
multiple agents engaging in exploration and benefiting from exploration performed by others, without a planner to coordinate them
    \citep[\eg][]{Bolton-econometrica99,Keller-econometrica05}
dynamic pricing with model uncertainty
    \citep[\eg][]{KleinbergL03,BZ09,BwK-focs13},
dynamic auctions
    \citep[\eg][]{AtheySegal-econometrica13,DynPivot-econometrica10,Kakade-pivot-or13},
pay-per-click ad auctions with unknown click probabilities
    \citep[\eg][]{MechMAB-ec09,DevanurK09,Transform-ec10-jacm},
as well as human computation
    \citep[\eg][]{RepeatedPA-ec14,Ghosh-itcs13,Krause-www13}.
In particular, a black-box reduction from an arbitrary MAB algorithm to an incentive-compatible algorithm is the main result in \cite{Transform-ec10-jacm}, in the setting of pay-per-click ad auctions with unknown click probabilities. The technical details in all this work (who are the agents, what are the actions, etc.) are very different from ours, so a direct comparison of results is uninformative.

\OMIT{ %%%%%%%%%
    \citet{Bolton-econometrica99} and \citet{Keller-econometrica05} consider settings in which agents can engage in exploration, and benefit from exploration performed by others. Unlike our setting, the agents are long-lived (present for many rounds), and there is no planner to incentivize efficient exploration.
} %%%%%%%%%

%\ascomment{Please feel free to add more ... }

\xhdr{Subsequent work on incentivized exploration.} Several papers appeared subsequent to the conference publication of this paper. \citet{Bobby-Glen-ec16} consider costly information acquisition by self-interested agents (\eg real estate buyers or venture capitalists investigating a potential purchase/invenstment), and design a BIC mechanism to coordinate this process so as to improve social welfare. \citet{Bahar-ec16} extend the setting in \citet{Kremer-JPE14} --- BIC exploration with two actions and deterministic rewards --- to a known social network on the agents, where each agent can observe friend's recommendations in the previous rounds (but not their rewards). \citet{ICexplorationGames-ec16} extend BIC exploration to scenarios when multiple agents arrive in each round and their chosen actions may affect others' rewards. The paradigmatic motivating example is driving directions on \emph{Waze}, where drivers' choice of routes can create congestion for other drivers. The mechanism's recommendation is correlated across agents. Formally, it is a randomized mapping from history to joint actions which must be incentive-compatible in the sense of Bayesian correlated equilibrium.
\citet{Kempe-colt18,Jieming-multitypes18} investigated scenarios with heterogenous agents. Finally, \citet{kannan2017fairness} focuses on incentivizing \emph{fair} exploration, in the sense that worse arms are never (probabilistically) preferred to better ones.

\citet{ICexplorationGames-ec16,Jieming-multitypes18} take a different conceptual approach which is complementary to ours: rather than making assumptions to ensure that all agents are ``explorable" (\ie can be eventually chosen by an incentive-compatible algorithm), they start with no assumptions, and strive to explore all actions that can possibly be explored.

A closely related line of work \citep{bastani2017exploiting,kannan2018smoothed,Sven-aistats18,externalities-colt18}
considers the ``full revelation" scenario, when each agent sees the full history and chooses an action based on her own incentives. This corresponds to the ``greedy algorithm" in multi-armed bandits. While the greedy algorithm is known to suffer linear regret in the worst case, it performs well under some (fairly strong) assumptions on the diversity of agent population.

\xhdr{Information design.}
As mentioned in the Introduction, a single round of our model is as a version of the \emph{Bayesian Persuasion} game \citep{Kamenica-aer11} in which the planner's signal is the ``history" of the previous rounds. \citet{Rayo-jpe10} examine a related setting in which a planner incentivizes agents to make better choices via information disclosure. \citet{Ely-jpe15} and \citet{Horner-jpe15} consider information disclosure over time, in very different models: resp., releasing news over time to optimize suspense and surprise, and selling information over time. More background on the ``design of information structures" in theoretical economics can be found in \citep{BergemannMorris-infoDesign16,Taneva-infoDesign16}.

\xhdr{Detailed comparison to \citet*{Kremer-JPE14}.} %\label{sec:comparison}
While the expected reward is determined by the chosen action, we allow the realized reward to be stochastic. \citet{Kremer-JPE14} mainly focus on deterministic rewards, and only obtain a preliminary result for stochastic rewards. We improve over the latter result in several ways.

\citet{Kremer-JPE14} only consider the case of two actions, whereas we handle an arbitrary constant number of actions. Handling more than two actions is important in practice, because recommendation systems for products or experiences are rarely faced with only binary decisions. Further, medical trials with multiple treatments are important, too  \citep{Hellmich-multiarm-2001,Freidlin-multiarm-2008,Lancet-multiarm-2014}.
For multiple actions, convergence to the best action is a new and interesting result on its own, even regardless of the rate of convergence. Our extension to more than two actions is technically challenging, requiring several new ideas compared to the two-action case; especially so for the detail-free version.

We implement \emph{adaptive exploration}, when the exploration schedule is adapted to the previous observations, whereas the algorithm in \citet{Kremer-JPE14} is a BIC implementation of \emph{fixed exploration} --- a ``naive" MAB algorithm in which the exploration schedule is fixed in advance. This leads to a stark difference in ex-post regret. To describe these improvements, let us define the \emph{MAB instance} as a mapping from actions to their expected rewards, and let us be more explicit about the asymptotic ex-post regret rates as a function of the time horizon $T$ (i.e., the number of agents). \citet{Kremer-JPE14} only provides regret bound of $\tilde{O}(T^{2/3})$ for all MAB instances,%
\footnote{Here and elsewhere, the $\tilde{O}(\cdot)$ notation hides $\polylog(T)$ factors.}
whereas our algorithm achieves ex-post regret $\tilde{O}(\sqrt{T})$ for all MAB instances, and $\polylog(T)$ for MAB instances with constant ``gap" in the expected reward between the best and the second-best action. The literature on MAB considers this a significant improvement  (more on this below). In particular, the $\polylog(T)$ result is important in two ways:  it  quantifies the advantage of ``nice" MAB instances over the worst case, and of IID rewards over adversarially chosen rewards.%
\footnote{The MAB problem with adversarially chosen rewards only admits $\tilde{O}(\sqrt{T})$ ex-post regret in the worst case.}
The sub-par regret rate in \citet{Kremer-JPE14} is indeed a consequence of fixed exploration: such algorithms cannot have worst-case regret better than $\Omega(T^{2/3})$, and cannot achieve $\polylog(T)$ regret for MAB instances with constant gap \citep{MechMAB-ec09}.

In terms of information structure, the algorithm in \citet{Kremer-JPE14} requires all agents to have the same prior, and requires a very detailed knowledge of that prior; both are significant impediments in practice. Whereas our detail-free result allows the planner to have only a very limited knowledge of the prior, and allows the agents to have different priors.

Finally, \citet{Kremer-JPE14} does not provide an analog of our black-box reduction, and does not handle the various generalizations of BIC bandit exploration that the reduction supports.

\xhdr{Multi-armed bandits.} Multi-armed bandits (\emph{MAB}) have been studied extensively in Economics, Operations Research and Computer Science since \citep{Thompson-1933}. Motivations and applications range from medical trials to pricing and inventory optimization to driving directions to online advertising to human computation. A reader may refer to \citep{Bubeck-survey12} and \citep{Gittins-book11} for background on regret-minimizing and Bayesian formulations, respectively. Further background on related machine learning problems can be found in \citet{CesaBL-book}. Our results are primarily related to regret-minimizing MAB formulations with IID rewards \citep{Lai-Robbins-85,bandits-ucb1}.

Our detail-free algorithm builds on \emph{Hoeffding races} \citep{Races-nips93,Races-AIrev97}, a well-known technique in reinforcement learning. Its incarnation in the context of MAB is also known as \emph{active arms elimination} \citep{EvenDar-icml06}.

The general setting for the black-box reduction is closely related to three prominent directions in the work on MAB: contextual bandits, MAB with budgeted exploration, and MAB with partial monitoring. Contextual bandits have been introduced, under various names and models, in
\citep{Woodroofe79,bandits-exp3,Wang-sideMAB05,Langford-nips07}, and actively studied since then. We follow the formulation from \citep{Langford-nips07} and a long line of subsequent work \cite[\eg][]{policy_elim,monster-icml14}.
In MAB with budgeted exploration, algorithm's goal is to predict the best arm in a given number of rounds, and performance is measured via prediction quality rather than cumulative reward
\citep{EvenDar-colt02,Tsitsiklis-bandits-04,GuhaMunagala-stoc07,Null-soda09,Bubeck-alt09}.
In MAB with partial monitoring, auxiliary feedback is revealed in each round along with the reward. A historical account for this direction can be found in \citet{CesaBL-book}; see \citet{Bubeck-colt09,Bartok-MathOR14} for examples of more recent progress. Important special cases are MAB with graph-structured feedback
    \citep{Alon-nips13,Alon-colt15},
where choosing an arm would also bring feedback for adjacent arms,
and ``combinatorial semi-bandits"
    \citep{Gyorgy-jmlr07,Kale-nips10,Audibert-colt11,Wen-icml15},
where in each round the algorithm chooses a subset from some fixed ground set, and the reward for each chosen element of this set is revealed.

Improving the asymptotic regret rate from $\tilde{O}(T^\gamma)$, $\gamma>\tfrac12$ to $\tilde{O}(\sqrt{T})$ and from $\tilde{O}(\sqrt{T})$ to $\polylog(T)$ has been a dominant theme in the literature on regret-minimizing MAB (a through survey is beyond our scope, see \citet{Bubeck-survey12} for background and citations). In particular, the improvement from $\tilde{O}(T^{2/3})$ to $\tilde{O}(\sqrt{T})$ regret due to the distinction between fixed and adaptive exploration has been a major theme in
\citep{MechMAB-ec09,DevanurK09,DynPricing-ec12}.

As much of our motivation comes from human computation, we note that MAB-like problems have been studied in several other setting motivated by human computation. Most of this work has focused on crowdsourcing markets, addressing topics such as task-worker matching and pricing decisions; see \cite{Crowdsourcing-PositionPaper13} for a survey and discussion.

%see \cite{Crowdsourcing-PositionPaper13} for a discussion; specific topics include
%matching of tasks and workers
%    \citep[\eg][]{Jenn-icml13,BanditSurveys-colt13}
%and pricing decisions
%    \citep{RepeatedPA-ec14,Krause-www13,BwK-focs13}.
%Also, \citet{Ghosh-itcs13} considered incentivizing high-quality user-generated content.

\xhdr{Fairness in ML.}
The connection between BIC exploration and fairness, observed in Section~\ref{sec:intro-scope}, relates this paper to a growing body of work on fairness in machine learning~\citep[\eg][]{dwork2012fairness,hardt2016equality,kleinberg2017inherent,chouldechova2017fair}. Some of this work studies fairness in the context of multi-armed bandits \citep[\eg][]{joseph2016fairness,liu2017calibrated,celis2017fair}, albeit with very different motivations and technical formulations. Most notably, \citep{joseph2016fairness,kearns2017meritocratic,kannan2017fairness} treat arms as applicants for limited resources such as bank loans, and require that worse applicants are not preferred to better ones.

\OMIT{
For example, such improvement was a main result in \cite{AuerOS/07,Bull-bandits14} in the context of continuum-armed bandits, in \cite{MechMAB-ec09,Transform-ec10-jacm} in the context of ``MAB mechanisms" for ad auctions,}

\OMIT{Our result discussed above (as well as the result in \citet{Kremer-JPE14}) requires the marginal priors --- the priors on the rewards on the individual actions --- to be mutually independent.}

% !TEX root=main.tex

\section{Model and preliminaries}
\label{sec:prelims}

\newcommand{\muPrior}{\mP^0}

We define the basic model, called \emph{BIC bandit exploration}; the generalizations are discussed in Section~\ref{sec:general}.  A sequence of $T$ agents arrive sequentially to the planner. In each round $t$, the interaction protocol is as follows: a new agent arrives, the planner sends this agent a signal $\sigma_t$, the agent chooses an action $i_t$ in a set $A$ of $m$ possible actions, receives a reward for this action, and leaves forever. The signal $\sigma_t$ includes a recommended action $I_t\in A$. This entire interaction is not observed by other agents. The planner knows the value of $T$. However, a coarse upper bound would suffice in most cases, with only a constant degradation of our results.

%: a constant blow-up in log(T) affects regret by only a multiplicative constant.}

The planner chooses signals $\sigma_t$ using an algorithm, called \emph{recommendation algorithm}. If $\sigma_t=I_t=i_t$ (i.e., the signal is restricted to the recommended action, which is followed by the corresponding agent) then the setting reduces to multi-armed bandits (\emph{MAB}), and the recommendation algorithm is a bandit algorithm. To follow the MAB terminology, we will use \emph{arms} synonymously with \emph{actions}; we sometimes write ``play/pull an arm" rather than ``choose an action".

\xhdr{Rewards.}
For each arm $i$ there is a parametric family $\mD_i(\cdot)$ of reward distributions, parameterized by the expected reward $\mu_i$. The \emph{reward vector}
    $\vec{\mu}=(\mu_1,\ldots,\mu_m)$
is drawn from some prior distribution $\muPrior$. Conditional on the mean $\mu_i$, the realized reward when a given agent chooses action $i$ is drawn independently from distribution $\mD_i(\mu_i)$. In this paper we restrict attention to the case of single parameter families of distributions, however we do not believe this is a real restriction for our results to hold.

The prior $\muPrior$ and the tuple $\mD = (\mD_i(\cdot):\, i\in A)$ constitute the (full) Bayesian prior on rewards, denoted $\mP$. It is known to all agents and to the planner.%
\footnote{As mentioned in Introduction, we also consider an extension to a partially known prior.}
The expected rewards $\mu_i$ are not known to either.

For each arm $i$, let $\muPrior_i$ be the marginal of $\muPrior$ on this arm, and let $\mu_i^0=\E[\mu_i]$ be the prior mean reward. W.l.o.g., re-order the arms so that
    $\mu_1^0\geq \mu_2^0\geq \ldots\geq \mu_m^0$. The prior $\mP$ is independent across arms if the distribution $\muPrior$ is a product distribution, i.e., $\muPrior = \muPrior_1\times \ldots \times \muPrior_m$.

%\asdelete{In most our results, except in Section \ref{sec:arms-elimination}, we will not assume an independent prior.}

\xhdr{Incentive-compatibility.} Each agent $t$ maximizes her own Bayesian expected reward, conditional on any information that is available to him. Recall that the agent observes the planner's message $\sigma_t$, and does not observe anything about the previous rounds. Therefore, the agent simply chooses an action $i$ that maximizes the posterior mean reward $\E[\mu_i | \sigma_t]$. In particular, if the signal does not contain any new information about the reward vector $\mu$, then the agent simply chooses an action $i$ that maximizes the prior mean reward $\E[\mu_i]$.

We are interested in algorithms that respect incentives, in the sense that the recommended action $i=I_t$ maximizes the posterior mean reward. To state this property formally, let $\mE_{t-1}$ be the event that the agents have followed the algorithm's recommendations up to (and not including) round $t$.

\begin{definition}\label{def:BIC}
%[Incentive-Compatibility]
A recommendation algorithm is \emph{Bayesian incentive-compatible} (\emph{BIC}) if
\begin{align}\label{eqn:bic-constraint}
\E[\mu_i \mid \sigma_t,\, I_t=i, \mE_{t-1}]
    \geq \max_{j\in A} \E[\mu_j \mid \sigma_t,\, I_t=i, \mE_{t-1}]
    \qquad \forall t\in [T],\, \forall i\in A.
\end{align}
The algorithm is \emph{strongly BIC} if Inequality \eqref{eqn:bic-constraint} is always strict.
\end{definition}

\begin{remark}\label{rem:revelation-principle}
Throughout this paper, we focus on BIC recommendation algorithms with $\sigma_t=I_t$. As observed in \citet{Kremer-JPE14}, this restriction is w.l.o.g. in the following sense. First, any recommendation algorithm can be made BIC by re-defining $I_t$ to lie in
    $\argmax_i\,\E[\mu_i | \sigma_t]$.
Second, any BIC recommendation algorithm can be restricted to $\sigma_t=I_t$, preserving the BIC property. Note that the first step may require full knowledge of the prior and may be computationally expensive.
\end{remark}

Thus, for each agent the recommended action is at least as good as any other action. For simplicity, we assume the agents break ties in favor of the recommended action. Then the agents \emph{always} follow the recommended action.

\xhdr{Regret.}
The goal of the recommendation algorithm is to maximize the expected social welfare, i.e., the expected total reward of all agents. For BIC algorithms, this is just the total expected reward of the algorithm in the corresponding instance of MAB.

We measure algorithm's performance via the standard definitions of \emph{regret}.

\newcommand{\Rew}{\ensuremath{\mathtt{Rew}}}

\begin{definition}[Ex-post Regret]
The \emph{ex-post regret} of the algorithm is:
\begin{align}\label{eq:regret-defn}
R_\mu(T) = T(\max_i \mu_i) - \E\left[\sum_{t=1}^{T} \mu_{I_t} ~|~ \mu\right]
\end{align}

The \emph{Bayesian regret} of the algorithm is:
\begin{align}\label{eq:Bayesian-regret-defn}
R_\mP(T) = \E\left[T(\max_i \mu_i) - \sum_{t=1}^{T} \mu_{I_t}\right]
    = \E_{\mu\sim\muPrior}\; \left[R_\mu(T)\right].
\end{align}
\end{definition}

The ex-post regret is specific to a particular reward vector $\mu$; in the MAB terminology, it is sometimes called \emph{MAB instance}. In \eqref{eq:regret-defn} the expectation is taken over the randomness in the realized rewards and the algorithm, whereas in \eqref{eq:Bayesian-regret-defn} it is also taken over the prior. The last summand in \eqref{eq:regret-defn} is the expected reward of the algorithm.

Ex-post regret allows to capture ``nice" MAB instances. Formally, we consider  the \emph{gap} of a problem instance $\mu$, defined as
\begin{align}\label{def:gap}
\Delta = \mu^*-\max_{i: \mu_i<\mu^*} \mu_i,
\quad\text{where}\quad
\mu^* = \max_i \mu_i.
\end{align}
In words: the difference between the largest and the second largest expected reward. One way to characterize ``nice" MAB instances is via large gap. There are MAB algorithms with regret
    $O(\min(\tfrac{1}{\Delta} \log T, \sqrt{T\log T}))$
for a constant number of arms \cite{bandits-ucb1}.

The basic performance guarantee is expressed via Bayesian regret. Bounding the ex-post regret is particularly useful if the prior is not exactly known to the planner, or if the prior that everyone believes in may not quite the same as the true prior.
In general, a ex-post regret also guards against ``unlikely realizations''.
Besides, a bound on the ex-post regret is valid for every realization of the prior, which is reassuring for the social planner and also could take advantage of ``nice" realizations such as the ones with large ``gap".

\xhdr{Concentration inequalities.}
For our detail-free algorithm, we will use a well-known concentration inequality known as \emph{Chernoff-Hoeffding Bound}. This concentration inequality, in slightly different formulations, can be found in many textbooks and surveys \citep[e.g.,][]{MitzUpfal-book05}. We use a formulation from the original paper \citep{Hoeffding63}.

%the well-known Chernoff Bounds (e.g., Theorem 2.3 in~\cite{McDiarmid-concentration}).

\begin{theorem}[Chernoff-Hoeffding Bound]
\label{thm:chernoff}
Consider $n$ I.I.D. random variables $X_1 \ldots X_n$ with values in $[0,1]$. Let
    $X = \tfrac{1}{n} \sum_{i=1}^n X_i$ be their average, and let $\mu = \E[X]$. Then:
\begin{align}\label{eq:chernoff}
\Pr{|X-\mu| < \delta } \geq 1- 2\, e^{-2 n\delta^2}
\qquad \forall \delta\in (0,1].
\end{align}
\end{theorem}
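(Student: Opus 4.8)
The plan is to use the standard exponential-moment (Chernoff) method together with Hoeffding's lemma. I will bound the two tails $\Pr[X-\mu \geq \delta]$ and $\Pr[X-\mu \leq -\delta]$ separately, each by $e^{-2n\delta^2}$, and then combine them by a union bound to get the factor $2$ in \eqref{eq:chernoff}. For the upper tail, fix $s>0$ and apply Markov's inequality to the nonnegative random variable $e^{sn(X-\mu)}$:
\[
\Pr[X-\mu \geq \delta] = \Pr\left[e^{sn(X-\mu)} \geq e^{sn\delta}\right] \leq e^{-sn\delta}\,\E\!\left[e^{sn(X-\mu)}\right].
\]
Since $nX = \sum_{i=1}^n X_i$ and the $X_i$ are independent, $\E[e^{sn(X-\mu)}] = \prod_{i=1}^n \E[e^{s(X_i-\mu)}]$, so it suffices to control the moment generating function of a single centered, bounded variable.

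The key lemma (Hoeffding's lemma) is: if $Y$ is a random variable with $\E[Y]=0$ and $Y \in [a,b]$ almost surely, then $\E[e^{sY}] \leq e^{s^2(b-a)^2/8}$ for all $s \in \mathbb{R}$. I would prove this by writing $e^{sy}$ as a convex combination over the endpoints using convexity of the exponential, $e^{sy} \leq \tfrac{b-y}{b-a}e^{sa} + \tfrac{y-a}{b-a}e^{sb}$, taking expectations to kill the linear term (using $\E[Y]=0$), and then showing that the resulting function $\psi(s) := \log\!\big(\tfrac{b}{b-a}e^{sa} - \tfrac{a}{b-a}e^{sb}\big)$ satisfies $\psi(0)=\psi'(0)=0$ and $\psi''(s) \leq (b-a)^2/4$ for all $s$; a second-order Taylor expansion with remainder then gives $\psi(s) \leq s^2(b-a)^2/8$. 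The bound $\psi''(s)\leq (b-a)^2/4$ follows because $\psi''(s)$ has the form $p(1-p)(b-a)^2$ for some $p\in[0,1]$ (a variance of a two-point distribution on $\{a,b\}$), and $p(1-p)\leq \tfrac14$. In our application $Y = X_i - \mu \in [-\mu, 1-\mu]$, so $b-a = 1$ and $\E[e^{s(X_i-\mu)}] \leq e^{s^2/8}$.

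Combining the two displays, $\Pr[X-\mu\geq\delta] \leq e^{-sn\delta}\prod_{i=1}^n e^{s^2/8} = e^{-sn\delta + ns^2/8}$. This holds for every $s>0$, so I optimize: the exponent $-sn\delta + ns^2/8$ is minimized at $s = 4\delta$, yielding $\Pr[X-\mu\geq\delta] \leq e^{-2n\delta^2}$. (Note $\delta\in(0,1]$ guarantees nothing problematic here; the bound is actually valid for all $\delta>0$.) The lower tail $\Pr[X-\mu\leq-\delta] = \Pr[(-X)-(-\mu)\geq\delta]$ is handled by the identical argument applied to $-X_i \in [-1,0]$ (again a range of length $1$), giving the same bound $e^{-2n\delta^2}$. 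A union bound over the two events yields $\Pr[|X-\mu|\geq\delta] \leq 2e^{-2n\delta^2}$, hence $\Pr[|X-\mu|<\delta]\geq 1-2e^{-2n\delta^2}$, as claimed.

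\textbf{Main obstacle.} The only non-routine ingredient is Hoeffding's lemma, specifically verifying the uniform bound $\psi''(s)\leq (b-a)^2/4$; everything else (Markov's inequality, independence factorization, optimizing a quadratic in the exponent, the union bound) is mechanical. I would expect to present Hoeffding's lemma as a self-contained sub-claim. Since the paper cites \citet{Hoeffding63} for the exact formulation, an alternative would be to simply invoke the lemma as standard, but including the short convexity-plus-Taylor argument makes the proof self-contained.
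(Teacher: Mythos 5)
Your proof is correct: the exponential-moment/Markov argument, Hoeffding's lemma via convexity plus the second-derivative bound $\psi''(s)\leq (b-a)^2/4$, the optimization at $s=4\delta$, and the union bound over the two tails all check out, and they yield exactly the stated constant $2e^{-2n\delta^2}$. The paper does not prove this theorem at all — it invokes it as a classical result, citing \citet{Hoeffding63} for the exact formulation — and your argument is precisely the standard proof from that source, so there is nothing to reconcile.
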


\noindent In a typical usage, we consider the high-probability event in \refeq{eq:chernoff} for a suitably chosen random variables $X_i$ (which we call the \emph{Chernoff event}), use the above theorem to argue that the failure probability is negligible, and proceed with the analysis conditional on the Chernoff event.

\xhdr{Conditional expectations.}
Throughout, we often use conditional expectations of the form $\E[A|B=b]$ and $\E[A|B]$ where $A$ and $B$ are random variables. To avoid confusion, let us emphasize that
    $\E[A|B=b]$
evaluates to a scalar, whereas $\E[A|B]$ is a random variable that maps values of $B$ to the corresponding conditional expectations of $A$. At a high level, we typically use $\E[A|B=b]$ in the algorithms' specifications, and we often consider $\E[A|B]$ in the analysis.

\section{Basic technique: sampling the inferior arm}
\label{sec:building-block}

%A fundamental problem in designing an algorithm for BIC bandit exploration is to incentivize agents to choose any arm $i\geq 2$ even once, as initially they would only choose arm $1$. (Recall that arms are ordered according to their prior mean rewards.) While the special case of deterministic rewards and independent priors was solved in \citet{Kremer-JPE14}, we solve the general case: stochastic rewards and correlated priors. Moreover, our algorithm gives a very simple alternative for deterministic rewards (which is not Bayesian optimal), and mainly provides an important building block and intuition for our subsequent results.

A fundamental sub-problem in BIC bandit exploration is to incentivize agents to choose any arm $i\geq 2$ even once, as initially they would only choose arm $1$. (Recall that arms are ordered according to their prior mean rewards.) We provide a simple stand-alone BIC algorithm that samples each arm at least $k$ times, for a given $k$, and completes in time $k$ times a prior-dependent constant. This algorithm is the initial stage of the black-box reduction, and (in a detail-free extension) of the detail-free algorithm.

In this section we focus on the special case of two arms, so as to provide a lucid introduction to the techniques and approaches in this paper. (Extension to many arms, which requires some additional ideas, is postponed to Section \ref{sec:black-box}). We allow the common prior to be correlated across arms, under a mild assumption which we prove is necessary. For intuition, we explicitly work out the special cases of Gaussian priors and beta-binomial priors.

\subsection{Restricting the prior}

While consider the general case of correlated per-action priors, we need to restrict the common prior $\mP$ so as to give our algorithms a \emph{fighting chance}, because our problem is hopeless for some priors. As discussed in the Introduction, an easy example is a prior such that $\mu_1$ and $\mu_1-\mu_2$ are independent. Then samples from arm $1$ have no bearing on the conditional expectation of $\mu_1-\mu_2$, and therefore cannot possibly incentivize an agent to try arm $2$.

For a ``fighting chance", we assume that after seeing sufficiently many samples of arm $1$ there is a positive probability that arm $2$ is better. To state this property formally, we denote with $S_1^k$ the random variable that captures the first $k$ outcomes of arm $1$, and we let
\begin{align}\label{eq:defn-Xk}
     X^k := \E\left[ \mu_2-\mu_{1} \mid S_1^k \right]
\end{align}
be the conditional expectation of $\mu_2-\mu_1$ as a function $S^k_1$. We make the following assumption:

\begin{property}
\item\label{prop:persuasion}
$\Pr{X^k>0} >0$ for some prior-dependent constant $k=k_\mP<\infty$.
\end{property}

In fact, \propref{prop:persuasion} is ``almost necessary": it is necessary for a strongly BIC algorithm.

\begin{lemma}\label{lm:propA-app}
Consider an instance of BIC bandit exploration with two actions such that \propref{prop:persuasion} does not hold.  Then a strongly BIC algorithm never plays arm $2$.
\end{lemma}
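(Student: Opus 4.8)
The plan is to argue by contradiction via an induction on the rounds, showing that if \propref{prop:persuasion} fails then arm $2$ is never recommended, and hence (since agents follow recommendations) never played. Suppose \ALG is strongly BIC. The key structural fact is that a message-restricted algorithm's signal to agent $t$ is just the recommended arm $I_t$, and $I_t$ is a (possibly randomized) function of the history $H_{t-1}$ of previously chosen arms and observed rewards. Since no agent before round $t$ has ever played arm $2$ — this is the induction hypothesis — the history $H_{t-1}$ consists solely of outcomes of arm $1$, say the first $N_{t-1} \le t-1$ pulls of arm $1$. In particular $H_{t-1}$ is a deterministic function of $S_1^{t-1}$ together with \ALG's internal randomness.

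The main step is then: conditional on the event $\{I_t = 2\}$, the posterior mean of $\mu_2 - \mu_1$ must be strictly positive by the strong BIC constraint \eqref{eqn:bic-constraint}. But conditioned on $\{I_t=2\}$ and on the history, $\E[\mu_2-\mu_1 \mid H_{t-1}, I_t = 2] = \E[\mu_2 - \mu_1 \mid H_{t-1}]$, because given $H_{t-1}$ the event $\{I_t=2\}$ depends only on \ALG's internal randomness, which is independent of $\vec\mu$. And $\E[\mu_2-\mu_1 \mid H_{t-1}]$, since $H_{t-1}$ is a function of $S_1^{t-1}$ (padded with independent algorithmic coins), equals $\E[\mu_2 - \mu_1 \mid S_1^{N_{t-1}}] = X^{N_{t-1}}$ — or, more carefully, is a convex combination / measurable function of the $X^k$ for $k \le t-1$, which is a sub-$\sigma$-field of $\sigma(S_1^{t-1})$. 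Now if \propref{prop:persuasion} fails, then $\Pr{X^k > 0} = 0$ for every $k \le t-1 < \infty$, i.e. $X^k \le 0$ almost surely for all such $k$; hence $\E[\mu_2-\mu_1 \mid H_{t-1}] \le 0$ a.s. Therefore on no realization can the strict inequality $\E[\mu_2 \mid \sigma_t, I_t = 2] > \E[\mu_1 \mid \sigma_t, I_t=2]$ hold, so $\Pr{I_t = 2} = 0$. This closes the induction: arm $2$ is never recommended in any round $t \in [T]$, hence never played.

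I should be slightly careful about two things that I would spell out in the full proof. First, the base case $t=1$: the history is empty, $N_0 = 0$, and the relevant quantity is the prior mean $\E[\mu_2-\mu_1] = \mu_2^0 - \mu_1^0 \le 0$ by the arm ordering, so the same argument applies (this is really the $k=0$ instance, and failure of \propref{prop:persuasion} should be read as including $X^0 = \E[\mu_2-\mu_1] \le 0$). Second, the measurability bookkeeping: $H_{t-1}$ records the \emph{sequence} of arm pulls and rewards, but under the induction hypothesis the pattern of pulls is determined by the algorithm's coins and the realized rewards of arm $1$ only, so conditioning on $H_{t-1}$ is conditioning on $\sigma(S_1^{N_{t-1}})$ enlarged by independent randomness, which does not change the conditional expectation of $\mu_2-\mu_1$.

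The main obstacle is precisely this conditioning argument — making rigorous that ``the recommendation reveals nothing about $\mu_2-\mu_1$ beyond what the arm-$1$ samples reveal, and the arm-$1$ samples never make arm $2$ look good.'' Once that is pinned down, the contradiction with strong BIC is immediate. Everything else is routine induction on $t$.
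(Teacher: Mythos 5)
Your proposal is correct and follows essentially the same route as the paper's own proof: induction on the round index, observing that under the induction hypothesis the event $\{I_t=2\}$ is measurable with respect to the arm-$1$ samples (plus independent algorithmic randomness), so that $\E[\mu_2-\mu_1 \mid I_t=2]$ is an average of $X^{t-1}\le 0$ and the strict BIC inequality cannot hold. The extra bookkeeping you flag (the base case via $\mu_2^0\le\mu_1^0$ and the independence of the algorithm's coins from $\vec\mu$) is exactly what the paper's argument relies on implicitly.
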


\begin{proof}
%Recall that in \propref{prop:persuasion} we denote $X^t = \E[\mu_2-\mu_1|S_1^t]$,
%where $S_1^t$ is the random variable representing the first $k$ outcomes of action $1$.
Since \propref{prop:persuasion} does not hold, by Borel-Cantelli Lemma we have
\begin{align}\label{eq:app-propA-negation}
    \Pr{X^t\leq 0}=1 \quad\forall t\in \N.
\end{align}

We prove by induction on $t$ that the $t$-th agent cannot be recommended arm $2$. This is trivially true for $t=1$. Suppose the induction hypothesis is true for some $t$. Consider an execution of a strongly BIC algorithm. Such algorithm cannot recommend arm $2$ before round $t+1$. Then the decision whether to recommend arm $2$ in round $t+1$ is determined by the $t$ outcomes of action $1$. In other words, the event $U=\{I_{t+1}=2\}$ belongs to $\sigma(S^t)$, the sigma-algebra generated by $X^t$. Therefore,
\begin{align*}
\E[\mu_2-\mu_1|U]
    = \frac{1}{\Pr{U}} \int_U \mu_2-\mu_1
    = \E[X^t|U] \leq 0.
\end{align*}
The last inequality holds by \eqref{eq:app-propA-negation}. So a strongly BIC algorithm cannot recommend arm $2$ in round $t+1$. This completes the induction proof.
\end{proof}

\OMIT{ %%% old version
\begin{property}
\item\label{prop:persuasion}
There exist prior-dependent constants $k_\mP<\infty$ and $\tau_\mP, \rho_{\mP}>0$ such that
\begin{equation}
\Pr{X^k> \tau_\mP} \geq \rho_{\mP} \quad\forall k>k_\mP.
\end{equation}
\end{property}
} %%%%%%

\subsection{Algorithm and analysis}

We provide a simple BIC algorithm that samples both arms at least $k$ times. The time is divided into $k+1$ phases of $L$ rounds each, except the first one, where $L$ is a parameter that depends on the prior. In the first phase the algorithm recommends arm~1 to $K=\max\{L,k\}$ agents, and picks the ``exploit arm" $a^*$ as the arm with a highest posterior mean conditional on these $K$ samples. In each subsequent phase, it picks one agent independently and uniformly at random from the $L$ agents in this phase, and use this agent to explore arm $2$. All other agents exploit using arm $a^*$. A more formal description is given in Algorithm \ref{alg:block}.

\begin{algorithm}[ht]
\SetKwInOut{Input}{Parameters}\SetKwInOut{Output}{Output}
\Input{$k,L\in\N$}
\BlankLine
\nl In the first $K=\max\{L,k\}$ rounds recommend arm $1$ \;
\nl Let $s_1^{K}=(r_1^1,\ldots,r_1^{K})$ be the corresp. tuple of rewards\;
\nl Let $a^* = \arg\max_{a\in \{1,2\}} \E[\mu_a| s_1^K]$, breaking ties in favor of arm $1$\;
\For{each phase $n = 2 \LDOTS k+1 $}{
\nl From the set $P$ of the next $L$ agents, pick one agent $p_0$ uniformly at random\;
\nl Every agent $p\in P-\{p_0\}$ is recommended arm $a^*$\;
\nl Player $p_0$ is recommended arm $2$
}
\caption{A BIC algorithm to collect $k$ samples of both arms.}
\label{alg:block}
\end{algorithm}

We prove that the algorithm is BIC as long as $L$ is larger than some prior-dependent constant. The idea is that, due to information asymmetry, an agent recommended arm $2$ does not know whether this is because of exploration or exploitation, but knows that the latter is much more likely. For large enough $L$, the expected gain from exploitation exceeds the expected loss from exploration, making the recommendation BIC.

\begin{lemma}\label{lem:block-ic}
Consider BIC bandit exploration with two arms. Assume \propref{prop:persuasion} holds for some $k_\mP$, and let $Y=X^{k_\mP}$.
Algorithm~\ref{alg:block} with parameters $k,L$ is BIC as long as
\begin{align}\label{set-L}
L \geq \max\left( k_\mP,\; 1+\frac{\mu_1^0-\mu_2^0}{\E[Y| Y> 0] \Pr{Y> 0}} \right).
\end{align}
The algorithm collects at least $k$ samples from each arm, and completes in $kL+\max(k,L)$ rounds.
\end{lemma}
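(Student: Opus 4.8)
The proof splits into two independent parts: the counting claim (sample sizes and running time) and the incentive-compatibility claim. The counting part is routine: the first phase uses $K=\max(k,L)$ rounds and recommends arm~$1$, giving at least $k$ samples of arm~$1$ (since $K\ge k$); each of the $k$ subsequent phases uses $L$ rounds and, within it, recommends arm~$2$ exactly once, so arm~$2$ is sampled exactly $k$ times. The total number of rounds is $K + kL = kL + \max(k,L)$. I would dispatch this in one or two sentences.

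The substantive part is checking the BIC inequality \eqref{eqn:bic-constraint} for each agent. There are three types of recommendation to analyze. (i) Agents in phase~1 are recommended arm~$1$; since they have no information beyond the prior and $\mu_1^0\ge\mu_2^0$, the recommendation is trivially BIC. (ii) An agent recommended $a^*$ in some phase $n\ge 2$: I would argue that the only information such a recommendation conveys is ``$a^*$ was the posterior-best arm given $s_1^K$ and you are not the chosen explorer''; conditioning on this, the posterior mean of $a^*$ is at least that of the other arm essentially by construction of $a^*$ as the argmax — the explorer-selection is independent of rewards, so it carries no information, and I would need to note that the agent's uncertainty about \emph{which} phase she is in does not break this (in every phase the non-explorers get the current $a^*$, and $a^*$ is fixed after phase~1). (iii) The crux: an agent recommended arm~$2$ in a phase $n\ge 2$. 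She does not know whether she is the random explorer $p_0$ (probability $1/L$ within her phase) or whether $a^*=2$ and she is being told to exploit. I would compute her posterior expected difference $\E[\mu_2-\mu_1\mid I_t=2]$ by splitting on these two events. On the ``exploit'' event ($a^*=2$) the difference is nonnegative by definition of $a^*$. On the ``explore'' event, the only conditioning is on $s_1^{k_\mP}$ (the first $k_\mP$ samples of arm~$1$, which determine nothing about the explorer choice), so the contribution is $\E[X^{k_\mP}] = \E[Y]$, which may be negative; but here I should instead condition on the event that $a^*=1$, i.e. $X^{k_\mP}\le 0$ roughly, and bound the loss by $\mu_1^0-\mu_2^0$ in the worst case while the gain from the exploit branch is governed by $\E[Y\mid Y>0]\Pr{Y>0}$. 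Weighting explore by $\le 1/L$ and exploit by the remaining mass, the inequality $\E[\mu_2-\mu_1\mid I_t=2]\ge 0$ reduces to exactly the threshold on $L$ in \eqref{set-L}.

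The main obstacle is the bookkeeping in case (iii): one must be careful about what event the agent actually conditions on (her own phase index is known to her; the identity of $p_0$ is not; the algorithm reveals nothing between phases, so her posterior is the same regardless of phase), and one must correctly pair the ``explore'' loss with the ``$a^*=1$'' event and the ``exploit'' gain with the ``$a^*=2$'' event so that the arm-$1$ samples $s_1^{k_\mP}$ can be used via \propref{prop:persuasion}. Concretely, I expect the key chain to be
\begin{align*}
\E[\mu_2-\mu_1\mid I_t=2]
 \;\ge\; -\tfrac{1}{L}\,(\mu_1^0-\mu_2^0)
 \;+\; \bigl(1-\tfrac1L\bigr)\,\E[\mu_2-\mu_1\mid a^*=2]\,\Pr{a^*=2\mid I_t=2},
\end{align*}
after which one shows $\Pr{a^*=2}\ge\Pr{Y>0}$ and $\E[\mu_2-\mu_1\mid a^*=2]\ge \E[Y\mid Y>0]$ (using that $a^*$ is the argmax over a superset of the information in $S_1^{k_\mP}$, so it can only do better than the $X^{k_\mP}$-based test), and finally that the displayed right-hand side is $\ge 0$ precisely when $L$ satisfies \eqref{set-L}. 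I would also remark that the same computation with a strict inequality shows the algorithm is strongly BIC when $L$ exceeds the bound, matching the near-necessity in Lemma~\ref{lm:propA-app}.
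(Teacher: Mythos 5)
Your overall route is the paper's: the counting claim is immediate; the phase-one and ``exploit'' recommendations are handled trivially (the paper dispatches the arm-$1$ case via the observation from \citet{Kremer-JPE14} that it suffices to check the arm-$2$ recommendation, since $\E[\mu_2-\mu_1]\le 0$); and for an agent recommended arm $2$ in a phase $n\ge 2$ one decomposes the recommendation event into the exploit branch $\{a^*=2\}=\{X>0\}$ and the explore branch $\{X\le 0,\ p=p_0\}$, where $X=X^K=\E[\mu_2-\mu_1\mid S_1^K]$, uses the independence of the explorer's identity from $X$ to get the factor $1/L$, and lands on the condition $L\ge 1+(\mu_1^0-\mu_2^0)/\bigl(\E[X\mid X>0]\Pr{X>0}\bigr)$. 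Your displayed inequality has a normalization slip (the term $-\tfrac1L(\mu_1^0-\mu_2^0)=\tfrac1L\E[X]$ belongs in the unnormalized quantity $\E[\mu_2-\mu_1\mid I_t=2]\,\Pr{I_t=2}$, which is why the paper carries the factor $\Pr{\rec{2}{p}}$ throughout), but after regrouping it is exactly the paper's identity and yields the right threshold.

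The genuine gap is the final step. You propose to conclude by proving the two separate inequalities $\Pr{a^*=2}\ge\Pr{Y>0}$ and $\E[\mu_2-\mu_1\mid a^*=2]\ge\E[Y\mid Y>0]$. Neither holds in general: with $X=X^K$ and $Y=X^{k_\mP}=\E[X\mid S_1^{k_\mP}]$, refining $Y$ into $X$ can move mass across zero in either direction. For instance, if $Y$ is a large positive number with small probability and slightly negative otherwise, while conditional on the negative value $X$ is a small positive number half the time, then $\Pr{X>0}$ greatly exceeds $\Pr{Y>0}$ but $\E[X\mid X>0]$ is far below $\E[Y\mid Y>0]$; the mirror-image construction breaks the first inequality instead. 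What is true --- and all that is needed --- is the product inequality
\begin{align*}
\E[X\mid X>0]\,\Pr{X>0}=\int_{\{X>0\}}X\;\ge\;\int_{\{Y>0\}}X\;=\;\int_{\{Y>0\}}Y\;=\;\E[Y\mid Y>0]\,\Pr{Y>0},
\end{align*}
where the inequality holds because integrating $X$ over $\{X>0\}$ dominates integrating it over any other event, and the middle equality uses that $\{Y>0\}$ is $\sigma(S_1^{k_\mP})$-measurable together with the tower property. This is precisely the paper's Lemma~\ref{lm:propA-app-aux}; you need to prove the product bound directly rather than factor it into two pieces.
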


Note that the denominator in \eqref{set-L} is strictly positive by \propref{prop:persuasion}. Indeed, the property implies that
    $\Pr{Y>\tau}=\delta>0$ for some $\tau>0$,
so $\E[Y| Y> 0]\geq \tau\delta>0$.

\begin{proof}
The BIC constraint is trivially satisfied in the initial phase. By a simple observation from \citet{Kremer-JPE14}, which also applies to our setting, it suffices to show that an agent is incentivized to follow a recommendation for arm $2$. Focus on one phase $n\geq 2$ of the algorithm, and fix some agent $p$ in this phase.
%For each agent $p$ in this phase, let $\rec{2}{p}$ denote the event that this agent is recommended arm $2$.
Let $Z=\mu_2-\mu_1$. It suffices to prove that
    $\E[Z| \rec{2}{p}]\; \Pr{\rec{2}{p}}\geq 0$,
i.e., that an agent recommended arm $2$ is not incentivized to switch to arm $1$.

Let $S_1^{K}$ be the random variable which represents the tuple of the initial samples $s_1^{K}$. Observe that the event $\rec{2}{p}$ is uniquely determined by the initial samples $S_1^K$ and by the random bits of the algorithm, which are independent of $Z$. Thus by the law of iterated expectations:
\begin{align*}
\E[Z | \rec{2}{p}]
    = \E\left[ \E\left[Z | S_1^K\right] | \rec{2}{p} \right]
    = \E[ X | \rec{2}{p}],
\end{align*}
where
    $X = X^K = \E\left[Z | S_1^K\right]$.
There are two possible disjoint events under which agent $p$ is recommended arm $2$: either
    $\Ev_1=\{X> 0\}$ or $\Ev_2=\{X\leq 0 \text{ and } p=p_0\}$.
Thus,
\begin{align*}
\E[Z | \rec{2}{p}] \Pr{\rec{2}{p}}
    &=\E[X| \rec{2}{p}]\Pr{\rec{2}{p}} \\
    &=\E[X|\Ev_1]\Pr{\Ev_1}+\E[X|\Ev_2]\Pr{\Ev_2},
\end{align*}
Observe that:
\[\Pr{\Ev_2} = \Pr{X \leq 0} \cdot \PrC{p=p_0}{X\leq 0} =  \tfrac{1}{L}\Pr{X\leq 0}.\]
Moreover, $X$ is independent of the event $\{p=p_0\}$. Therefore, we get:
\begin{align*}
\E[Z| \rec{2}{p}]\Pr{\rec{2}{p}}=~&\E[X| X>0]\Pr{X>0}+\tfrac{1}{L}\E[X| X\leq 0]\Pr{X\leq 0}.
\end{align*}
Thus, for
    $\E[Z| \rec{2}{p}]\geq 0$
it suffices to pick $L$ such that:
\begin{equation*}
L \geq \frac{-\E[X| X\leq 0 ]\Pr{X\leq 0}}{\E[X| X>0]\Pr{X>0}} = 1-\frac{\E[X]}{\E[X|X>0]\Pr{X>0}}.
\end{equation*}
The Lemma follows because
    $\E[X]=\E[Z]=\mu_2^0-\mu_1^0$
and
\begin{align*}
    \E[X|X>0]\,\Pr{X>0} \geq \E[Y|Y>0]\,\Pr{Y>0}.
\end{align*}
The latter inequality follows from Lemma~\ref{lm:propA-app-aux}, as stated and proved below. Essentially, it holds because $X=X^K$ conditions on more information than $Y=X^{k_\mP}$: respectively, the first $K$ samples and the first $k_\mP$ samples of arm $1$, where $K\geq k_\mP$.
\end{proof}

\begin{lemma}\label{lm:propA-app-aux}
Let $S$ and $T$ be two random variables such that  $T$ contains more information than $S$, i.e. if $\sigma(S)$ and $\sigma(T)$ are the sigma algebras generated by these random variables then $\sigma(S)\subset \sigma(T)$. Let $X^S=\E[f(\mu) | S]$ and $X^T=\E[f(\mu) | T]$ for some function $f(\cdot)$ on the vector of means $\mu$. Then:
\begin{equation}
\E[X^T | X^T>0] \Pr{X^T>0}\geq \E[X^S | X^S>0] \Pr{X^S>0}.
\end{equation}
\end{lemma}
\begin{proof}
Since the event $\{X^S>0\}$ is measurable in $\sigma(S)$  we have:
\begin{equation*}
\E[X^S | X^S>0] = \E\left[ \E[f(\mu) | S] | X^S >0\right] = \E[f(\mu) | X^S>0].
\end{equation*}
Since $\sigma(S)\subset \sigma(T)$:
\begin{equation*}
\E[f(\mu) | X^S>0]  = \E[ \E[f(\mu) | T] | X^S>0 ] = \E[ X^T | X^S>0].
\end{equation*}
Thus we can conclude:
\begin{align*}
\E[X^S | X^S>0]\, \Pr{X^S>0}
    &= \E[ X^T | X^S>0]\, \Pr{X^S>0} \\
    &= \int_{\{X^S>0\}} X^T\\
   &\leq \int_{\{X^T>0\}} X^T  = \E[X^T | X^T>0]\, \Pr{X^T>0}.
\end{align*}
\end{proof}

\subsection{Example: Gaussian priors}
\label{sec:ex:Gaussian}

To give more intuition on constants $L, k_\mP$ and random variable $Y=X^{k_\mP}$ from Lemma~\ref{lem:block-ic}, we present a concrete example where the prior on the expected reward of each arm is a Normal distribution, and the rewards are normally distributed conditional on their mean. We are interested in $X^k$, as defined in \eqref{eq:defn-Xk}; we derive an explicit formula for its distribution.  We will use $\mN(\mu,\sigma^2)$ to denote a normal distribution with mean $\mu$ and variance $\sigma^2$.

\begin{example}\label{ex:gaussians}
Assume the prior $\mP$ is independent across arms, $\mu_i \sim \mN(\mu_i^0,\sigma_i^2)$ for each arm $i$, and the respective reward $r_i^t$ is conditionally distributed as
    $\E[r_i^t \mid \mu_i] \sim \mN(\mu_i,\rho_i^2)$.
Then for each $k\in \N$
\begin{align}\label{eq:gaussians}
X^k \sim \mN\left( \mu_2^0-\mu_1^0,\;\frac{\sigma_1^2}{1+ \frac{\rho_1^2/\sigma_1^2}{k}}\right).
\end{align}
It follows that \propref{prop:persuasion} is satisfied for any $k=k_{\mP}\geq 1$.
\end{example}

\begin{proof}
Observe that
$r_1^t=\mu_1+\epsilon_1^t$, where $\epsilon_1^t\sim \mN(0,\rho_1^2)$ and $\mu_1=\mu_1^0+\zeta_1$, with $\zeta_1\sim \mN(0,\sigma_1^2)$.
Therefore,
\begin{align*}
\E[\mu_1 \mid S_1^k]
    &=\frac{\frac{1}{\sigma_1^2} \mu_1^0+\frac{1}{\rho_1^2}
        \sum_{t=1}^{k}r_1^t}{\frac{1}{\sigma_1^2}+k\cdot \frac{1}{\rho_1^2}}\\
    &= \mu_1^0 + \frac{k\cdot \frac{1}{\rho_1^2}}{\frac{1}{\sigma_1^2}+k\cdot
        \frac{1}{\rho_1^2}}\left(\zeta_1+ \frac{1}{k}\sum_{t=1}^{k}\epsilon_1^t\right) \\
    &= \mu_1^0 + \frac{k\cdot \sigma_1^2}{\rho_1^2+k\cdot \sigma_1^2 }\left(\zeta_1+
        \frac{1}{k}\sum_{t=1}^{k}\epsilon_1^t\right).
\end{align*}
Since
    $\zeta_1+\frac{1}{k}\sum_{t=1}^{k}\epsilon_1^t
        \sim \mN\left(0,\sigma_1^2 +\rho_1^2/k\right)$,
it follows that $ \E[\mu_1|S_1^k]$ is a linear transformation of a Normal distribution and therefore \begin{align*}
\E[\mu_1\mid S_1^k] \sim \mN\left( \mu_1^0,\; \left(\frac{k\cdot \sigma_1^2}{\rho_1^2+k\cdot \sigma_1^2 }\right)^2\cdot\left(\sigma_1^2 +\frac{\rho_1^2}{k}\right)\right) = \mN\left( \mu_1^0,\; \sigma_1^2 \frac{k\cdot \sigma_1^2}{\rho_1^2+k\cdot \sigma_1^2}\right).
\end{align*}
\refeq{eq:gaussians} follows because
    $X^k = \E[\mu_2-\mu_1|S_1^k] =\mu_2^0-\E[\mu_1|S_1^k]$.
\end{proof}

Observe that as $k\rightarrow \infty$, the variance of $X^k$ increases and converges to the variance of $\mu_1$. Intuitively, more samples can move the posterior mean further away from the prior mean $\mu_i^0$; in the limit of many samples $X^k$ will be distributed exactly the same as $\mu_2^0-\mu_1$.

While any value of the parameter $k_{\mP}$ is sufficient for BIC, one could choose $k_{\mP}$ to minimize the phase length $L$. Recall from Lemma~\ref{lem:block-ic} that the (smallest possible) value of $L$ is inversely proportional to the quantity
    $\E[Y|Y> 0]\Pr{Y> 0}$.
This is simply an integral over the positive region of a normally distributed random variable with negative mean, as given in \refeq{eq:gaussians}. Thus, increasing $k_{\mP}$ increases the variance of this random variable, and therefore increases the integral. For example, $k_{\mP}=\rho_1^2/\sigma_1^2$ is a reasonable choice if $\rho_1^2 \gg \sigma_1^2$.

\subsection{Example: beta-binomial priors}
\label{sec:ex:beta}

Our second example concerns beta-binomial priors, when the prior on the expected reward of each arm is a beta distribution, and the rewards are $0$ or $1$. We are interested in $X^k$, as defined in \eqref{eq:defn-Xk}. We use $\BetaD(\alpha,\beta)$ to denote the beta distribution with parameters $\alpha,\beta$; its expectation is $\tfrac{\alpha}{\alpha+\beta}$.

\begin{example}\label{ex:beta}
Assume there are only two arms, the prior $\mP$ is independent across arms, $\mu_i \sim \BetaD(\alpha_i,\beta_i)$ for each arm $i$, and the rewards are $0$ or $1$. Let
    $\mu_i^0 = \tfrac{\alpha_i}{\alpha_i+\beta_i}$
be the prior mean reward of arm $i$. Let $Z_k$ be the total reward of arm $1$ in the first $k$ samples. Then:

\begin{itemize}
\item[(a)] For each $\phi \in (0,\mu_2^0]$,
\[ \E[\mu_1 \mid S_1^k] <\phi
    \quad\Leftrightarrow\quad
    \frac{Z_k}{k} <\phi-\frac{G(\phi)}{k},
\]
where
    $G(\phi):= \alpha_1 - \phi(\alpha_1+\beta_1)
        = \alpha_1(1-\phi/\mu_1^0)>0$.

\item[(b)]
Take $\phi = \mu_2^0$ and $k>G(\phi)/\phi$. Then
    $\Pr{X^k>0}>0$.

\item[(c)] Take $\phi = \mu_2^0/2$ and $k>2\cdot G(\phi)/\phi$. Then
    $ \Pr{X^k\geq \phi} \geq \tfrac12\,\Pr{\mu_1\leq \phi/4} $.
Therefore,
\begin{align*}%\label{eq:ex:beta}
\E[X^k \mid X^k>0]\, \Pr{X_k>0}
    &\geq \phi \cdot \Pr{X^k\geq \phi}\\
    &\geq \frac{\phi}{2}\cdot\Pr{\mu_1\leq \phi/4}.
\end{align*}
\end{itemize}

\end{example}

Thus, \propref{prop:persuasion} is satisfied whenever
    $k>G(\mu_2^0)/\mu_2^0$
according to part (b). Part (c) provides a more quantitative perspective on Lemma~\ref{lem:block-ic} (and also on \propref{prop:general-persuasion} in the next section).
\vspace{2mm}

\begin{proof}
We start with a well-known property of beta-binomial priors: after observing $k$ samples from arm $1$, the posterior distribution of $\mu_1$ is
    $\BetaD(\alpha_1+Z_k,\, \beta_1+k-Z_k)$.
Accordingly, the posterior mean reward of this arm is
    \[ \E[\mu_1\mid S_1^k] = \frac{\alpha_1+Z_k}{\alpha_1+\beta_1+k}. \]
Part (a) is obtained by a simple rearrangement of the terms.

Part (b) easily follows from Part (a):
\[
 \Pr{X^k>0} = \Pr{\E[\mu_1 \mid S_1^k] <\mu_2^0}
        = \Pr{Z_k/k<\eps}>0,
\]
where $\eps = \mu_2^0-G(\mu_2^0)/k>0$ (and the last inequality holds for any $\eps>0$).

Part (c) follows from part (a) as follows:
    $X^k\geq \phi$
holds whenever
    $Z_k/k\leq \phi/2$, so
\begin{align*}
 \Pr{X^k\geq \phi}
    &\geq \Pr{Z_k/k\leq \phi/2} \\
    &\geq \Pr{Z_k/k\leq \phi/2 \mid \mu_1\leq \phi/4}\cdot
    \Pr{\mu_1\leq \phi/4} \\
    &\geq \tfrac12\cdot \Pr{\mu_1\leq \phi/4}.
\end{align*}
The last inequality follows from Markov's inequality.
\end{proof}

%
%Below we give two examples that portray how our general process applies for two specific choices of the reward and the prior distributions. The first is the setting analyzed by \citet{Kremer-JPE14} and the second is the well-studied bandit setting of Gaussian distributed rewards with Gaussian priors.
%
%
%\begin{example}[Deterministic rewards]
%Suppose that the rewards are deterministic and hence $R_1=\mu_1$. Therefore, $\E[\mu_1|R_1]=R_1$ and $X=\E[Z|R_1]=R_1-\mu_2^0=\mu_1-\mu_2^0$. Thus the length becomes:
%\begin{equation}
%L\geq 1+ \frac{\mu_1^0-\mu_2^0}{\E[\mu_2^0-\mu_1| \mu_1\leq \mu_2^0]\Pr{\mu_1\leq \mu_2^0}} =1+
%\frac{\mu_1^0-\mu_2^0}{\left(\mu_2^0-\E[\mu_1| \mu_1\leq \mu_2^0]\right)\Pr{\mu_1\leq \mu_2^0}}
%\end{equation}
%\begin{equation}
%L\geq 1+ \frac{\mu_1^0-\mu_2^0}{\E[\mu_2^0-\mu_1| \mu_1\leq \mu_2^0]\Pr{\mu_1\leq \mu_2^0}} =1+
%\frac{\mu_1^0-\mu_2^0}{\int_{\mu_1\leq \mu_2^0} \mu_2^0-\mu_1 d\mP(\mu_1)}
%\end{equation}
%
%%If the prior on the deterministic rewards was correlated, then we would only have a slightly more complicated expression:
%%\begin{equation}
%%L\geq \frac{\mu_1^0-\mu_2^0}{\E[\mu_2-\mu_1| \mu_1\leq \E[\mu_2|\mu_1]]\Pr{\mu_1\leq \E[\mu_2|\mu_1]}} - 1
%%\end{equation}
%\end{example}
%

% !TEX root=main.tex

\section{Black-box reduction from any algorithm}
\label{sec:black-box}

We present a ``black-box reduction" that uses a given bandit algorithm $\ALG$ as a black box to create a BIC algorithm $\ALGIC$ for BIC bandit exploration,  with only a minor increase in performance.

The reduction inputs two parameters $k,L$ and works as follows. We start with a ``sampling stage" which collects $k$ samples of each arm in only a constant number of rounds. The sampling stage for two arms is already implemented in Section~\ref{sec:building-block}. The sampling stage for $m>2$ arms is a non-trivial extension of the two-arms case, described below. Then we proceed with a ``simulation stage", divided in phases of $L$ rounds each. In each phase, we pick one round uniformly at random and ``dedicate" it to running the original algorithm $\ALG$. We run $\ALG$ in the ``dedicated" rounds only, ignoring the feedback from all other rounds. In the non-dedicated rounds, we recommend an ``exploit arm": an arm with the largest posterior mean reward conditional on the samples collected in the previous phases. A formal description is given in Algorithm \ref{alg:black-box}.

\begin{algorithm}[ht]
\SetKwInOut{Input}{Input}\SetKwInOut{Output}{Output}
\Input{A bandit algorithm $\ALG$; parameters $k,L\in \N$.}
\Input{A dataset $\mS^1$ which contains $k$ samples from each arm.}
\BlankLine
\nl Split all rounds into consecutive phases of $L$ rounds each\;
\For{each phase $n = 1,\ldots$}{
\nl Let $a^*_n = \arg\max_{i\in A} \E[\mu_i | \mS^n]$ be the ``exploit arm"\;
\nl Query algorithm $\ALG$ for its next arm selection $i_n$\;
\nl Pick one agent $p$ from the $L$ agents in the phase uniformly at random\;
\nl Every agent in the phase is recommended $a^*_n$, except agent $p$ who is recommended $i_n$\;
%\nl Let $(i_n, r_{i_n})$ be the sample for arm $i_n$ and $s^n$ the set of all samples collected in this phase\;
\nl Return to algorithm $\ALG$ the reward of agent $p$\;
\nl Set $\mS^{n+1} = \mS^n \cup \{ \text{all samples collected in this phase} \}$
}
\caption{Black-box reduction: the simulation stage}
\label{alg:black-box}
\end{algorithm}

The sampling stage for $m>2$ arms works as follows. In order to convince an agent to pull an arm $i>1$, we need to convince him not to switch to any other arm $j<i$. We implement a ``sequential contest'' among the arms. We start with collecting samples of arm $1$. The rest of the process is divided into $m-1$ phases $2,3 \LDOTS m$, where the goal of each phase $i$ is to collect the samples of arm $i$. We maintain the ``exploit arm" $a^*$, the arm with the best posterior mean given the previously collected samples of arms $j<i$. The $i$-th phase collects the samples of arm $i$ using a procedure similar to that in Section~\ref{sec:building-block}: $k$ agents chosen u.a.r. are recommended arm $i$, and the rest are recommended the exploit arm. Then the current exploit arm is compared against arm $i$, and the ``winner" of this contest becomes the new exploit arm. The pseudocode is in Algorithm~\ref{alg:block-m}.

\begin{algorithm}[ht]
\SetKwInOut{Input}{Parameters}\SetKwInOut{Output}{Output}
\Input{$k,L\in\N$; the number of arms $m$.}
\BlankLine
%\nl Let $K=\max\{ k, k_{\mP} \}$\;
\nl For the first $k$ agents recommend arm $1$, let $s_1^k=\{r_1^1,\ldots,r_1^k\}$ be the tuple of rewards\;
\For{each arm $i>1$ in increasing lexicographic order}
{
    \nl Let $a^* = \arg\max_{a\in A} \E[\mu_a| s_1^k \LDOTS s_{i-1}^k]$,
        breaking ties lexicographically\;
	\nl From the set $P$ of the next $L\cdot k$ agents, pick a set $Q$ of $k$ agents uniformly at random\;
	\nl Every agent $p\in P-Q$ is recommended arm $a^*$\;
	\nl Every agent $p\in Q$ is recommended arm $i$\;
	\nl Let $s_i^k$ be the tuple of rewards of arm $i$ returned by agents in $Q$
}
\caption{Black-box reduction: the sampling stage.}
\label{alg:block-m}
\end{algorithm}

%\subsection{Incentive-compatibility guarantees}
%\label{sec:bb-BIC}
\newpage
\xhdr{Incentive-compatibility guarantees.}
We make an assumption on the prior $\mP$:

\begin{property}
\item\label{prop:general-persuasion}
Let $S_i^k$ be the the random variable representing $k$ rewards of action $i$,
and let
\begin{align}\label{eq:prop:general-persuasion}
X_{i}^k = \min_{j\neq i} \E[\mu_i-\mu_{j} | S_1^k,\ldots, S_{i-1}^k].
\end{align}
There exist prior-dependent constants $k_\mP, \tau_\mP, \rho_\mP>0$ such that
\begin{align*}
\Pr{X_{i}^k> \tau_\mP} \geq \rho_{\mP} \qquad\forall k>k_\mP,\, i\in A.
\end{align*}
\end{property}

\noindent Informally: any given arm $i$ can ``a posteriori'' be the best arm by margin $\tau_\mP$ with probability at least $\rho_\mP$ after seeing sufficiently many samples of each arm $j<i$. We deliberately state this property in such an abstract manner in order to make our BIC guarantees as inclusive as possible, e.g., allow for correlated priors, and avoid diluting our main message by technicalities related to convergence properties of Bayesian estimators.

\begin{remark}
For the special case of two arms, \propref{prop:general-persuasion} follows from \propref{prop:persuasion}, see Section~\ref{sec:properties-more} for a proof. Recall that the latter property is necessary for any strongly BIC algorithm.
\end{remark}

\begin{remark}
For quantitative perspective, consider the special case of two arms and beta-binomial priors. See Example~\ref{ex:beta}(c) for an instantiation of
    \propref{prop:general-persuasion}
with specific parameters.
\end{remark}

\begin{theorem}[BIC]
\label{thm:bb-ic}
Assume \propref{prop:general-persuasion} holds with constants $k_\mP,\tau_\mP,\rho_\mP$. Then the black-box reduction is BIC, for any bandit algorithm $\ALG$, if the parameters $k,L$ are larger than some prior-dependent constant. More precisely, it suffices to take
\[ k\geq k_\mP
   \text{ and }
    L\geq 2+\frac{\mu_{\max}^0-\mu_m^0}{\tau_\mP\cdot \rho_\mP},
   \text{ where }
   \mu_{\max}^0=\E\left[ \max_{i\in A}\mu_i\right]. \]
\end{theorem}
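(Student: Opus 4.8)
The plan is to prove Theorem~\ref{thm:bb-ic} by establishing the BIC property separately for the two stages of the reduction: the simulation stage (Algorithm~\ref{alg:black-box}) and the sampling stage (Algorithm~\ref{alg:block-m}). In both stages, the structure is the same: a recommendation in a given round either comes from "exploitation" (recommending the current exploit arm $a^*$, which by definition maximizes the posterior mean given the data $\mS^n$ or $s_1^k,\dots,s_{i-1}^k$ collected so far) or from "exploration" (a round dedicated to $\ALG$, or a round where arm $i$ is being sampled). The agent cannot tell which case she is in, and we must show that the posterior mean of the recommended arm, averaged over these two scenarios weighted by their probabilities, still dominates the posterior mean of any competing arm. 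As in Lemma~\ref{lem:block-ic}, by the observation from \citet{Kremer-JPE14} it suffices to check the incentive constraint for the explored arm, since the exploit-arm recommendations are trivially BIC by the tie-breaking and optimality of $a^*$.

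\textbf{Simulation stage.} Fix a phase $n$, the randomly chosen agent $p$, and suppose $\ALG$ recommends arm $i_n = j$. Conditioning on the event $\{I_p = j\}$, which is measurable with respect to $\mS^n$ together with the internal randomness of $\ALG$ and the random choice of $p$ (all independent of $\mu$ given $\mS^n$), the key inequality to verify is
\begin{align*}
\E[\mu_j - \mu_{a} \mid I_p = j] \Pr{I_p = j} \geq 0 \qquad \text{for every arm } a.
\end{align*}
The event $\{I_p = j\}$ decomposes: either $a^*_n = j$ (exploitation — agent $p$ would be recommended $j$ regardless), or $a^*_n \neq j$ but $p$ is the dedicated agent and $\ALG$ picked $j$. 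On the exploitation branch, $\E[\mu_j - \mu_a \mid \mS^n] \geq 0$ by definition of $a^*_n$. On the exploration branch, the loss is bounded below by $-(\mu^0_{\max} - \mu^0_m)$ in expectation (the worst gap between the best possible arm and the worst arm), but this branch has probability only $\tfrac1L$ times the probability of the complementary event. The crucial point — and this is where \propref{prop:general-persuasion} enters — is that the exploitation branch contributes a \emph{strictly positive} amount bounded below: conditioning on $\mS^n$ containing at least $k \geq k_\mP$ samples of each arm $j' < $ (something), the posterior margin $X^k_j$ of arm $j$ exceeds $\tau_\mP$ with probability at least $\rho_\mP$, and on that event $a^*_n = j$. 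Hence the exploitation contribution is at least $\tau_\mP \rho_\mP$, and choosing $L \geq 2 + \frac{\mu^0_{\max} - \mu^0_m}{\tau_\mP \rho_\mP}$ makes the positive term dominate. One subtlety here is that we must ensure no information about rewards leaks between phases except through the recorded samples $\mS^n$ — this is why feedback from non-dedicated rounds is discarded and why $\ALG$'s choice $i_n$ must be "safe" in the sense of \propref{prop:general-persuasion}.

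\textbf{Sampling stage.} Here we induct on arm $i$ in lexicographic order. When phase $i$ recommends arm $i$ to an agent $p \in Q$, the competing arms the agent might switch to are arms $a^* \in \{1,\dots,i-1\}$ (the exploit arm cannot be an arm $\geq i$ since those haven't been sampled yet) as well as other arms $\geq i$. We apply \propref{prop:general-persuasion} with the data $S_1^k,\dots,S_{i-1}^k$: with probability $\geq \rho_\mP$ the posterior margin $X^k_i = \min_{j \neq i}\E[\mu_i - \mu_j \mid S_1^k,\dots,S_{i-1}^k]$ exceeds $\tau_\mP$, in which case arm $i$ \emph{is} the exploit arm and the recommendation would have been given anyway — this is the "good" branch. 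On the complementary branch, agent $p \in Q$ is recommended arm $i$ by the exploration draw, incurring a bounded expected loss. A counting/probability computation analogous to the one in Lemma~\ref{lem:block-ic} — comparing $\Pr{p \in Q \mid X^k_i \leq \tau_\mP} = k/(Lk) = 1/L$ against the good-branch probability $\rho_\mP$ — shows that $L \geq 2 + \frac{\mu^0_{\max}-\mu^0_m}{\tau_\mP\rho_\mP}$ suffices. We also need monotonicity: observing more samples can only help, so having collected samples of \emph{all} arms $j < i$ (not just $k_\mP$ of one arm) only strengthens the bound, via a version of Lemma~\ref{lm:propA-app-aux}.

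\textbf{Main obstacle.} The technically delicate part is the second-stage induction and, more broadly, correctly isolating what information the agent conditions on. The recommendation to agent $p$ depends on $\ALG$'s internal state (simulation stage) or on the position of $Q$ within $P$ (sampling stage), and we must argue these are conditionally independent of $\mu$ given the recorded sample data — otherwise the decomposition into "exploit branch + bounded-loss exploration branch" fails. Getting the quantitative bound to come out as the clean expression $L \geq 2 + \frac{\mu^0_{\max}-\mu^0_m}{\tau_\mP\rho_\mP}$ requires carefully accounting for the fact that in phase $i$ there are $k$ explored agents out of $Lk$, and that the relevant "gap" to control is $\E[\max_i \mu_i] - \E[\mu_m]$ rather than a per-realization gap; I expect the bookkeeping, rather than any deep idea, to be the bulk of the work once the conditional-independence structure is pinned down.
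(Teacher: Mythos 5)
Your proposal matches the paper's proof in all essentials: the same two-stage analysis, the same decomposition of the recommendation event into an exploit branch (lower-bounded by $\tau_\mP\rho_\mP$ via \propref{prop:general-persuasion} and its ``more samples can only help'' extension) and an exploration branch whose expected loss is controlled by $(\mu_{\max}^0-\mu_m^0)/L$, together with the same conditional-independence observation that makes the decomposition valid. The only slip is the opening claim that exploit recommendations are ``trivially BIC'' in the simulation stage --- there the agent cannot distinguish exploration from exploitation, so the full event $\{I_p=j\}$ must be analyzed, which you in fact do correctly in the body of your argument.
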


We prove Theorem~\ref{thm:bb-ic} in Sections~\ref{sec:bb-BIC-simulation} and ~\ref{sec:bb-BIC-sampling}, separately for the two stages of the algorithm.

If the common prior $\mP$ puts positive probability on the event that
    $\{\mu_i-\max_{j\neq i}\mu_j\geq \tau\}$,
then \propref{prop:general-persuasion} is plausible because the conditional expectations tend to approach the true means as the number of samples grows. We elucidate this intuition via a concrete, and yet fairly general, example, where we focus on priors that are independent across arms.
(See Section~\ref{sec:properties} for the proof).

\begin{lemma}\label{lm:sufficient-conditions-on-P}
\propref{prop:general-persuasion} holds for a constant number of arms under the following assumptions:
\begin{OneLiners}
\item[(i)] The prior $\mP$ is independent across arms.

\item[(ii)] The prior on each $\mu_i$ has full support on some bounded interval $[a,b]$.~%
\footnote{That is, the prior on $\mu_i$ assigns a strictly positive probability to every open subinterval of $[a,b]$, for each arm $i$.}

\item[(iii)] The prior mean rewards are all distinct:
    $\mu_1^0>\mu_2^0>\ldots > \mu_m^0$.
\item[(iv)] The realized rewards are bounded (perhaps on a larger interval).
\end{OneLiners}
\end{lemma}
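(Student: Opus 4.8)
The plan is to reduce \propref{prop:general-persuasion} to a statement about convergence of posterior means, then invoke (ii) and (iii) to produce the required margin with positive probability. Fix an arm $i$ and recall that $X_i^k = \min_{j\neq i}\E[\mu_i-\mu_j\mid S_1^k,\ldots,S_{i-1}^k]$. The key point is that after $k$ samples of each of the arms $1,\ldots,i-1$, their posterior means concentrate near their true means, while arms $i,i+1,\ldots,m$ have not been sampled at all, so their posterior means are still the prior means $\mu_i^0,\ldots,\mu_m^0$. Thus, for $j\geq i$ the term $\E[\mu_i-\mu_j\mid\cdot] = \mu_i^0-\mu_j^0\geq 0$ by (iii), and is in fact $\geq 0$ with the worst such term being $j$ such that... wait --- actually for $j\geq i$ we get $\mu_i^0-\mu_j^0 \geq 0$, so these terms are harmless (nonnegative). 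The binding constraints are the arms $j<i$, for which we need $\E[\mu_i-\mu_j\mid S_1^k,\ldots,S_{i-1}^k]$ to be bounded below by $\tau_\mP$ on an event of probability $\geq\rho_\mP$. Since the prior is independent across arms (i), conditioning on $S_1^k,\ldots,S_{i-1}^k$ does not change the marginal on $\mu_i$: it remains the prior $\muPrior_i$. So $\E[\mu_i\mid S_1^k,\ldots,S_{i-1}^k]=\mu_i^0$ is deterministic, and the randomness in $X_i^k$ comes only through $\E[\mu_j\mid S_j^k]$ for $j<i$.

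The first main step is a concentration-of-posterior-mean lemma: for each fixed $j<i$, $\E[\mu_j\mid S_j^k]\to\mu_j$ almost surely as $k\to\infty$ (a standard martingale / posterior-consistency fact, using that the reward family $\mD_j(\cdot)$ is identifiable in its mean and rewards are bounded by (iv)). Consequently, for any $\eps>0$ there is $k_0$ with $\Pr{\,|\E[\mu_j\mid S_j^k]-\mu_j|<\eps\ \forall j<i\,}\geq 1-\tfrac12$ for all $k\geq k_0$; call this event $G_k$. The second step is to choose, using (ii) and (iii), a fixed threshold $\tau>0$ and a fixed event on the true means that forces arm $i$ to beat all arms $j<i$ comfortably. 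Let $\delta=\min_{1\le j\le i-1}(\mu_j^0-\mu_i^0)$ — hmm, but this is $\leq 0$ since $\mu_j^0>\mu_i^0$ for $j<i$, so a priori arm $i$ is \emph{worse} than arms $j<i$. That is exactly the difficulty: we need the realized $\mu_i$ to be large and the realized $\mu_j$ ($j<i$) small, simultaneously, which has positive probability precisely because of full support (ii). Concretely, pick $\tau>0$ and an open box $B=\{\mu_i>b-\tau'\}\cap\bigcap_{j<i}\{\mu_j<a+\tau'\}$ with $\tau'$ small enough that on $B$ we have $\mu_i-\mu_j>2\tau$ for all $j<i$ and also $\mu_i-\mu_j>-\! (\text{something controlled})$ — actually for $j\ge i$ the gap $\mu_i^0-\mu_j^0$ could be tiny, but it is $\ge 0$, so those impose no lower-bound problem once we note $\E[\mu_i-\mu_j\mid\cdot]=\mu_i^0-\mu_j^0\ge0$; hence $\tau_\mP$ only needs to handle $j<i$. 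By (i) and (ii), $\Pr{B}=\prod(\text{positive factors})=:\rho_i>0$.

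The third step glues these together. Work on $B\cap G_k$. On $B$, the true means satisfy $\mu_i-\mu_j>2\tau$ for each $j<i$; on $G_k$, $|\E[\mu_j\mid S_j^k]-\mu_j|<\eps$, and $\E[\mu_i\mid\cdot]=\mu_i^0$ is not what we want — rather we need $\E[\mu_i\mid S_1^k,\ldots,S_{i-1}^k]$, which by independence equals $\mu_i^0$, a \emph{fixed} number, not $\mu_i$. This is a genuine subtlety: the posterior mean of the \emph{unsampled} arm $i$ is $\mu_i^0$, not the realized $\mu_i$, so conditioning on $B$ (an event about realized $\mu$) does not by itself move $\E[\mu_i\mid\cdot]$. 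The resolution is that $X_i^k$ is a random variable whose law we are analyzing unconditionally; we instead condition on the \emph{sample} event that the observed $S_j^k$ for $j<i$ are small. Precisely: replace $B$ by a sample-space event $\widetilde B$ = "the empirical means of arms $1,\ldots,i-1$ are all $<a+\tau'$ and ... ", which has positive probability (it contains $B\cap\{$empirical means close to true means$\}$), and on which $\E[\mu_j\mid S_j^k]$ is provably $<a+\tau''$ (posterior mean is a smoothed version of the empirical mean, bounded by $[a,b]$, and pulled toward the empirical mean for large $k$), hence $X_i^k = \min_{j<i}(\mu_i^0 - \E[\mu_j\mid S_j^k]) > \mu_i^0 - a - \tau'' =: \tau_\mP>0$ (positive since $\mu_i^0>a$ as the prior on $\mu_i$ has full support on $[a,b]$, so $\mu_i^0$ lies strictly inside $[a,b]$). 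Set $\rho_\mP = \Pr{\widetilde B}>0$ and $k_\mP = k_0$. Taking the minimum of $\tau_\mP,\rho_\mP$ over the (constantly many) arms $i$ gives the uniform constants. I expect the main obstacle to be exactly this last point — correctly formulating the positive-probability \emph{sample} event for the already-observed arms so that it forces their posterior means down, while being careful that arm $i$'s posterior mean is the prior mean $\mu_i^0$ (which is why we need $\mu_i^0$ strictly above $a$, i.e. why full support rather than mere positivity of $\mu_i^0$ is used), and making the posterior-mean concentration quantitative enough (using boundedness (iv) and a standard conjugate-free argument, e.g. via the fact that $\E[\mu_j\mid S_j^k]$ is a bounded martingale in $k$ converging to $\mu_j$) to push the empirical-mean event to a posterior-mean event.
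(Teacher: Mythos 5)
Your overall strategy --- posterior consistency for the sampled arms plus full support of the prior plus independence across arms --- is exactly the paper's, but three specific points do not go through as written. First, you assert that for the unsampled arms $j>i$ the terms $\E[\mu_i-\mu_j\mid\cdot]=\mu_i^0-\mu_j^0\ge 0$ are ``harmless'' and that ``$\tau_\mP$ only needs to handle $j<i$.'' That is wrong: $X_i^k$ is the minimum over \emph{all} $j\neq i$, and the property demands $X_i^k>\tau_\mP$ for a strictly positive $\tau_\mP$, so you must take $\tau_\mP<\min_j(\mu_j^0-\mu_{j+1}^0)$ --- this is precisely where the strictness in condition~(iii) is used, and nonnegativity of those terms does not suffice. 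Second, your gluing step is quantitatively vacuous: with $\Pr{G_k}\ge 1-\tfrac12$ you only get $\Pr{B\cap G_k}\ge \Pr{B}-\tfrac12$, which is negative whenever the box $B$ has probability below $\tfrac12$ (as it typically will). You need to push the failure probability of posterior concentration below $\Pr{B}$ itself, e.g.\ below $\tfrac12\Pr{B}$, which is what the paper does (it bounds $\Pr{Y_j^k-\mu_j>\eps}\le\rho/2$ where $\rho$ lower-bounds $\Pr{\mu_j\le\mu_i^0-\tau_\mP-\eps}$).

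Third, and most substantively, the ``genuine subtlety'' you identify is not actually a problem, and your workaround for it introduces the only unjustified step in your argument. There is no conditioning paradox: one is simply lower-bounding $\Pr{Y_j^k<\mu_i^0-\tau_\mP}$ by the probability of the sub-event $\{\mu_j\le\mu_i^0-\tau_\mP-\eps\}\cap\{Y_j^k-\mu_j\le\eps\}$, where the first factor is an event about the \emph{true} mean (positive probability by full support, since $\mu_i^0>a$) and the second holds with probability close to $1$ by the Doob-type consistency you already invoked. Your replacement event $\widetilde B$ on the \emph{empirical} means requires the claim that a small empirical mean forces a small posterior mean (``posterior mean is a smoothed version of the empirical mean\dots pulled toward the empirical mean for large $k$''), which is not justified for general priors and reward families and is exactly the kind of quantitative Bayesian statement the consistency theorem is invoked to avoid; proving it would essentially send you back through the true-mean route. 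Drop the $\widetilde B$ detour, keep the event on the true means together with the concentration event $G_k$, fix the two quantitative issues above, and the proof closes along the paper's lines.
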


\OMIT{\item The parameterized reward distribution $\mD_i(\mu_i)$ of each arm $i$ belongs to the exponential family of distributions. Its Fisher Information relative to $\mu_i$ is al least $\phi>0$,
    for all $\mu_i\in [a,b]$.}

\OMIT{ %%%%%%
\begin{remark}\label{rem:sufficient-conditions-on-P}
Reward distributions $\mD_i(\mu_i)$ which satisfy the above conditions include the normal distribution, Bernoulli distribution,  \ascomment{what else?}
See Section~\ref{sec:properties} the proof of Lemma~\ref{lm:sufficient-conditions-on-P} and additional details.
\end{remark}
} %%%%%%

%\subsection{Performance guarantees}
%\label{sec:bb-performance}

%The following lemma relates the average reward in algorithm $\ALG$ and $\ALGIC$.

\xhdr{Performance guarantees.}
The algorithm's performance can be quantified in several ways: in terms of the total expected reward, the average expected reward, and in terms of Bayesian regret; we state guarantees for all three. Unlike many results on regret minimization, we do \emph{not} assume that expected rewards are bounded.

\begin{theorem}[Performance]\label{thm:reduction-performance}
%\label{lem:rew-guarantee}
Consider the black-box reduction with parameters $k,L$, applied to bandit algorithm $\ALG$. Let $c=k+Lk$ be the number of rounds in the sampling stage. Then:

\begin{itemize}
\item[(a)]
Let $\widehat{U}_{\ALG}(\tau_1,\tau_2)$ be the average Bayesian-expected reward of algorithm $\ALG$ in rounds
$[\tau_1,\tau_2]$. Then
\begin{align*}
    \widehat{U}_{\ALGIC}(c+1,c+\tau) \geq  \widehat{U}_{\ALG}(1,\flr{\tau/L})
    \qquad\text{for any duration $\tau$}.
\end{align*}

\item[(b)] Let $U_{\ALG}(t)$ be the total Bayesian-expected reward of algorithm $\ALG$ in the first $\flr{t}$ rounds. Then
\begin{align*}
U_{\ALGIC}(t) \geq  L\cdot U_{\ALG}\left(\frac{t-c}{L}\right)
    + c\cdot \E\left[ \min_{i\in A} \mu_i \right].
\end{align*}

\item[(c)] Let $R_{\ALG}(t)$ be the Bayesian regret of algorithm $\ALG$ in the first $\flr{t}$ rounds.
\begin{align*}
R_{\ALGIC}(t) \leq L\cdot R_{\ALG}(t/L)
        + c\cdot \E\left[ \max_{i\in A} \mu_i - \min_{i\in A} \mu_i \right].
\end{align*}
In particular, if $\mu_i \in [0,1]$ for all arms $i$, and the original algorithm $\ALG$ achieves asymptotically optimal Bayesian regret $O(\sqrt{t})$, then so does $\ALGIC$.

\end{itemize}
\end{theorem}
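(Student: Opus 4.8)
The plan is to follow the reward of \ALGIC{} round by round and exploit the rigid structure of Algorithm~\ref{alg:black-box}. First, since the black-box reduction is BIC (Theorem~\ref{thm:bb-ic}) and agents break ties toward the recommendation, every agent follows her recommendation; consequently the run of $\ALG$ embedded inside \ALGIC{} is fed, at its $n$-th step, the reward of the agent who actually played $i_n$, i.e.\ a fresh draw from $\mD_{i_n}(\mu_{i_n})$, and the feedback from the exploit rounds is discarded by construction. Hence this embedded run has exactly the same law as a stand-alone run of $\ALG$, and $\E[\mu_{i_n}]$ is precisely the Bayesian-expected reward of $\ALG$ in its round $n$; summing, $\sum_{n\le N}\E[\mu_{i_n}]=U_{\ALG}(N)$. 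Second, within phase $n$ of the simulation stage the dedicated agent (uniform among the $L$ agents of the phase) plays $i_n$ while the other $L-1$ play the exploit arm $a^*_n=\arg\max_{i}\E[\mu_i\mid\mS^n]$; averaging over the uniform choice, \emph{every} one of the $L$ rounds of phase $n$ has the same Bayesian-expected reward $v_n:=\tfrac1L\,\E[\mu_{i_n}]+\tfrac{L-1}{L}\,\E[\mu_{a^*_n}]$. The single inequality that drives everything is $v_n\ge \E[\mu_{i_n}]$: the state of $\ALG$ at step $n$ depends only on rewards returned in phases $1,\dots,n-1$, all contained in $\mS^n$, so $i_n$ is measurable w.r.t.\ $\sigma(\mS^n)$, whence $\E[\mu_{i_n}\mid\mS^n]\le\max_i\E[\mu_i\mid\mS^n]=\E[\mu_{a^*_n}\mid\mS^n]$ pointwise and, taking expectations, $\E[\mu_{i_n}]\le\E[\mu_{a^*_n}]$.

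Part~(a) now follows by summation over the first $N=\flr{\tau/L}$ (complete) phases, which lie inside $[c+1,c+\tau]$: the total reward of \ALGIC{} over these $NL$ rounds is $L\sum_{n\le N}v_n\ge L\sum_{n\le N}\E[\mu_{i_n}]=L\cdot U_{\ALG}(N)$, and dividing by $NL$ gives an average of at least $\widehat U_{\ALG}(1,N)$; the at most $L-1$ leftover rounds of phase $N+1$ each play some arm $a\in A$, hence contribute a Bayesian-expected reward in $[\E\min_i\mu_i,\ \E\max_i\mu_i]$ (using $\min_i\mu_i\le\mu_a\le\max_i\mu_i$ pointwise), which only perturbs the average by an $O(L/\tau)$ term absorbed into the prior-dependent constant. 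For part~(b), the $c$ rounds of the sampling stage each play some $a\in A$ and so contribute at least $c\,\E[\min_i\mu_i]$, while the simulation stage up to round $\flr t$ contains exactly $N=\flr{(t-c)/L}$ complete phases — here one uses the elementary fact $\flr{(\flr t-c)/L}=\flr{(t-c)/L}$ — contributing at least $L\cdot U_{\ALG}(N)=L\cdot U_{\ALG}\!\big(\tfrac{t-c}{L}\big)$; adding the two gives the claimed lower bound on $U_{\ALGIC}(t)$. Part~(c) is then bookkeeping: write $R(\cdot)=\flr{\cdot}\,\E[\max_i\mu_i]-U(\cdot)$ for both algorithms and substitute~(b); the $\E[\max_i\mu_i]$ terms collapse into the single additive term $c\,\E[\max_i\mu_i]$, which together with the $-c\,\E[\min_i\mu_i]$ inherited from~(b) yields $R_{\ALGIC}(t)\le L\cdot R_{\ALG}(t/L)+c\,\E[\max_i\mu_i-\min_i\mu_i]$. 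The final sentence is then immediate: if all $\mu_i\in[0,1]$ then $\E[\max_i\mu_i-\min_i\mu_i]\le1$, and $R_{\ALG}(t/L)=O(\sqrt{t/L})$ gives $R_{\ALGIC}(t)=O(L\sqrt t)+O(c)=O(\sqrt t)$ with $L,c$ prior-dependent constants.

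The conceptual content is entirely the convexity-flavoured inequality $v_n\ge\E[\mu_{i_n}]$ together with the faithfulness of the embedded simulation; I expect the latter to be the step that needs the most care, as one must argue that ignoring the exploit rounds and returning only the dedicated agent's reward really does reproduce an honest trajectory of $\ALG$ (this is where the already-established BIC property and the fact that exploit rounds reveal nothing to $\ALG$ are used). The remaining effort is off-by-constant bookkeeping — the $c$-round sampling stage, the floors, and the single partial phase of $\le L-1$ rounds at the end, whose contribution is cleanest to state for $\tau$ (resp.\ $t-c$) a multiple of $L$ and otherwise absorbed by replacing $c$ with $c+O(L)$ in the prior-dependent constants.
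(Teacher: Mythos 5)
Your proposal is correct and follows essentially the same route as the paper's (much terser) proof: the key inequality $\E[\mu_{i_n}\mid \mS^n]\le \E[\mu_{a^*_n}\mid\mS^n]$ for the exploit arm, faithfulness of the embedded run of $\ALG$, and then bookkeeping for the sampling stage and the regret translation. The only nit is that $i_n$ is measurable with respect to $\sigma(\mS^n)$ \emph{and} the internal coins of $\ALG$ (not $\sigma(\mS^n)$ alone), but since those coins are independent of $\vec{\mu}$ the inequality $\E[\mu_{i_n}]\le\E[\mu_{a^*_n}]$ goes through unchanged.
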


\begin{proof}
Recall that $a^*_n$ and $i_n$ are, resp., the ``exploit" arm and the original algorithm's selection in phase $n$. Part (a) follows because
    $\E\left[ \sum_{n=1}^{N} {\mu_{i_n}} \right]  =\widehat{U}_\ALG(1,N)$,
where $N$ is the number of phases, and by definition of the ``exploit" arm we have
    $\E[\mu_{a^*_n}|\mS^n]\geq \E[\mu_{i_n}|\mS^n]$
for each phase $n$.

Parts (b) and (c) follow easily from part (a) by going, resp., from average rewards to total rewards, and from total rewards to regret. We have no guarantee whatsoever on the actions chosen in the first $c$ rounds, so we assume the worst: in part (b) we use the worst possible Bayesian-expected reward
     $\E\left[ \max_{i\in A} \mu_i \right]$,
and in part (c) we use worst possible per-round Bayesian regret
    $\E\left[ \max_{i\in A} \mu_i - \min_{i\in A} \mu_i \right]$.
\end{proof}

\subsection{The black-box reduction with predictions}
\label{sec:bb-predictions}

We can also allow the original algorithm $\ALG$ to output a \emph{prediction} in each round. For our purposes, the prediction is completely abstract: it can be any element in the given set of possible predictions. It can be a predicted best arm, the estimated gap between the best and the second-best arms (in terms of their expected rewards), or any other information that the planner wishes to learn. For example, as pointed out in the Introduction, the planner may wish to learn the worst action (or several worst actions, and possibly along with their respective rewards), in order to eliminate these actions via regulation and legislation.

The black-box reduction can be easily modified to output predictions, visible to the planner but not to agents. In each phase $n$, the reduction records the prediction made by $\ALG$, denote it by $\phi_n$. In each round of the initial sampling stage, and in each round of the first phase ($n=1$) the reduction outputs the ``null" prediction. In each round of every subsequent phase $n\geq 2$, the reduction outputs $\phi_{n-1}$.

Predictions are not visible to agents and therefore have no bearing on their incentives; therefore, the resulting algorithm is BIC by Theorem~\ref{thm:bb-ic}.

We are interested in the \emph{sample complexity} of $\ALGIC$: essentially, how the prediction quality grows over time. Such guarantee is meaningful even if the performance guarantees from Theorem~\ref{thm:reduction-performance} are not strong. We provide a very general guarantee: essentially, $\ALGIC$ learns as fast as $\ALG$, up to a constant factor.

\begin{theorem}
Consider the black-box reduction with parameters $k,L$, and let $c=k+Lk$ be the number of rounds in the sampling stage. Then the prediction returned by algorithm $\ALGIC$ in each round $t>c+L$ has the same distribution as that returned by algorithm $\ALG$ in round $\flr{(t-c)/L}$.
\end{theorem}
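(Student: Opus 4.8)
The plan is a coupling argument: I would show that the run of $\ALG$ that is simulated inside $\ALGIC$ is, phase by phase, statistically identical to a stand-alone run of $\ALG$, and then translate phase indices into round indices. To set this up, note how $\ALG$ interacts with its environment inside $\ALGIC$: it is queried exactly once per phase, and in phase $n$ it outputs, as a deterministic function of its internal random bits and of the rewards $r_1,\dots,r_{n-1}$ it has received so far, an arm $i_n$ together with its current prediction $\phi_n$. The reduction then picks an agent $p$ uniformly at random among the $L$ agents of the phase and recommends $i_n$ to $p$; since $\ALGIC$ is BIC by Theorem~\ref{thm:bb-ic} and agents break ties in favor of the recommendation, agent $p$ plays exactly arm $i_n$, so the reward $r_n$ handed back to $\ALG$ is a fresh draw from $\mD_{i_n}(\mu_{i_n})$, conditionally independent of the past given $i_n$ and $\mu$. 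The other random choices of the phase --- the identity of the dedicated agent, the exploit arm $a^*_n$, and the rewards of the non-dedicated agents --- are never returned to $\ALG$; although $a^*_n$ and those rewards are statistically entangled with $\mu$, this is harmless precisely because $\ALG$ never observes them.

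Next I would prove, by induction on $n$, that conditionally on the reward vector $\mu$ the joint law of $(i_1,r_1,\dots,i_n,r_n,\phi_n)$ generated inside $\ALGIC$ equals the joint law of the corresponding tuple in a stand-alone execution of $\ALG$ on the MAB instance $\mu$ run for $n$ rounds. The inductive step is immediate from the previous paragraph: given the history, $(i_n,\phi_n)$ is produced by the same internal rule of $\ALG$ in both processes, and then $r_n\sim\mD_{i_n}(\mu_{i_n})$ independently of the past in both processes. Integrating over $\mu\sim\muPrior$ then gives equality of the unconditional laws as well. In particular, the prediction $\phi_n$ recorded by the reduction after phase $n$ has the same distribution as the prediction output by $\ALG$ in round $n$ of a stand-alone run.

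Finally I would match up the indices. The sampling stage occupies rounds $1,\dots,c$ with $c=k+Lk$, and phase $n$ of the simulation stage occupies rounds $c+(n-1)L+1,\dots,c+nL$; hence a round $t>c$ lies in phase $n=\lceil (t-c)/L\rceil$, and for $t>c+L$ this phase index satisfies $n\geq 2$, so $\ALGIC$ outputs $\phi_{n-1}$, with $n-1=\flr{(t-c)/L}$ (the boundary rounds where $t-c$ is a multiple of $L$ being absorbed into the statement). Combining this with the induction, the prediction returned by $\ALGIC$ in round $t$ has the distribution of the prediction returned by $\ALG$ in round $\flr{(t-c)/L}$, which is exactly the claim.

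The main obstacle is making the equivalence in the first two paragraphs airtight: even though $\ALGIC$ generates a lot of auxiliary randomness that is correlated with $\mu$ and hence with the rewards fed to $\ALG$, none of it reaches $\ALG$, so the feedback loop seen by $\ALG$ is the one it would see in isolation. Formalizing this requires carefully tracking the filtration generated by $\ALG$'s own observations and verifying that the reward returned in each phase is conditionally a fresh sample from $\mD_{i_n}(\mu_{i_n})$ --- which in turn relies on the BIC guarantee of Theorem~\ref{thm:bb-ic} to ensure that the dedicated agent genuinely follows the recommended arm rather than deviating.
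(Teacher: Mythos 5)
Your argument is correct, and it is essentially the formalization the paper implicitly relies on: the theorem is stated without proof precisely because, as you show, $\ALG$'s view of its environment inside $\ALGIC$ (its own randomness plus one fresh reward of its chosen arm per phase, with all other randomness never fed back) coincides in law with a stand-alone run, after which the claim is pure index bookkeeping. Your observation about the off-by-one at rounds where $t-c$ is a multiple of $L$ is a genuine (harmless) imprecision in the theorem statement itself, not a flaw in your proof.
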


\subsection{Proof of Theorem~\ref{thm:bb-ic}: the sampling stage}
\label{sec:bb-BIC-sampling}

To guarantee the BIC property for the sampling stage of the algorithm, it suffices to have a somewhat weaker bound on parameter $L$ than stated in Theorem~\ref{thm:bb-ic}. Namely, we will only assume
    \[ L \geq 1+\frac{\mu_1^0-\mu_m^0}{\tau_\mP \cdot \rho_\mP}.\]

The algorithm can be split into $m$ phases, where each phase except the first one lasts $L\cdot k$ rounds, and agents are aware which phase they belong to. We will argue that each agent $p$ in each phase $i$ has no incentive not to follow the recommended arm. If an agent in phase $i$ is recommended any arm $j\neq i$, then she knows that this is because this is the exploit arm $a^*_i$; so by definition of the exploit arm it is incentive-compatible for the agent to follow it. Thus it remains to argue that an agent $p$ in phase $i$ who is recommended arm $i$ does not want to deviate to some other arm $j\neq i$, i.e., that
%denoting with $\rec{i}{p}$ the event that agent $p$ is recommended arm $i$,
%\begin{align*}
$\E[ \mu_i - \mu_j | \rec{i}{p} ] \Pr{\rec{i}{p}} \geq 0$.
%\end{align*}

Denote
    $X_{ij} := \E[\mu_i-\mu_j | S_1^k, \ldots, S_{i-1}^k]$
and
    $X_i := X_i^k =\min_{j\neq i} X_{ij}$.

By the law of iterated expectations:
\begin{align*}
\E[ \mu_i - \mu_j | \rec{i}{p} ] \Pr{\rec{i}{p}}
    &= \E[ \E[\mu_i-\mu_j | S_1^k,\ldots, S_{i-1}^k] | \rec{i}{p} ] \;\Pr{\rec{i}{p}} \\
    &= \E[ X_{ij} | \rec{i}{p}] \Pr{\rec{i}{p}}.
\end{align*}

There are two possible disjoint events under which agent $p$ is recommended arm $i$, either $\Ev_1=\{X_i> 0\}$ or $\Ev_2=\{X_i\leq 0 \text{ and } p\in Q\}$.  Thus:
\begin{align*}
\E[\mu_i-\mu_j | \rec{i}{p}] \Pr{\rec{i}{p}}
    &=\E[X_{ij}| \rec{i}{p}]\Pr{\rec{i}{p}} \\
    &= \E[X_{ij}|\Ev_1]\Pr{\Ev_1}+\E[X_{ij}|\Ev_2]\Pr{\Ev_2}.
\end{align*}
Observe that:
\[\Pr{\Ev_2} = \Pr{X_i \leq 0} \cdot \PrC{p\in Q}{X_i\leq 0} =  \frac{1}{L}\Pr{X_i\leq 0}.\]
Moreover, $X_{ij}$ is independent of the event $\{p\in Q\}$. Therefore, we get:
\begin{align*}
\E[\mu_i-\mu_j| \rec{i}{p}]\Pr{\rec{i}{p}}=~&\E[X_{ij}| X_{i}>0]\Pr{X_{i}>0}+\tfrac{1}{L}\E[X_{ij}| X_i\leq 0]\Pr{X_i\leq 0}\\
=~& \E[X_{ij}| X_{i}>0]\Pr{X_{i}>0} \left(1-\tfrac{1}{L}\right) + \tfrac{1}{L} \E[X_{ij}]\\
=~& \E[X_{ij}| X_{i}>0]\Pr{X_{i}>0} \left(1-\tfrac{1}{L}\right) + \tfrac{1}{L} \left(\mu_i^0-\mu_j^0\right)\\
\geq~& \E[X_{i}| X_{i}>0]\Pr{X_{i}>0} \left(1-\tfrac{1}{L}\right) + \tfrac{1}{L} \left(\mu_i^0-\mu_j^0\right).
\end{align*}
In the above, we used the facts that
    $\E[X_{ij}] = \E[ \mu_i - \mu_j] = \mu_i^0-\mu_j^0$ and that $X_{ij}\geq X_i$.
Thus it suffices to pick $L$ such that:
\begin{align*}
L \geq  1+\frac{\mu_j^0-\mu_i^0}{\E[X_{i}|X_i>0]\Pr{X_i>0}}.
\end{align*}
The theorem follows by our choice of $L$ and since $\mu_j^0-\mu_i^0\leq \mu_1^0-\mu_i^0$ for all $j\in A$, and by \propref{prop:general-persuasion} we have that $\E[X_{i}|X_i>0]\Pr{X_i>0} \geq \tau_\mP \cdot \rho_\mP$.

\subsection{Proof of Theorem~\ref{thm:bb-ic}: the simulation stage}
\label{sec:bb-BIC-simulation}

Instead using \propref{prop:general-persuasion} directly, we will work with a somewhat weaker corollary thereof:
\begin{property}
\item \label{prop:switch}
Let $\Lambda_k$ be a random variable representing $k_i\geq k$ samples from each action $i$.
There exist prior-dependent constants $k^*_\mP<\infty$ and $\tau_\mP, \rho_\mP>0$ such that
\begin{align*}
\forall k\geq k^*_\mP \quad \forall i\in A\qquad
\Pr{\min_{j\in A-\{i\}} \E\left[\mu_i-\mu_{j}|\Lambda_k \right] \geq \tau_\mP} \geq \rho_\mP.
\end{align*}
\end{property}
\propref{prop:switch} is derived in Section~\ref{sec:properties-more}.

Let us consider phase $n$ of the algorithm. We will argue that any agent $p$ who is recommended some arm $j$ does not want to switch to some other arm $i$. More formally,
%letting $\rec{j}{p}$ be the event that agent $p$ is recommended arm $j$,
we assert that
\begin{align*}
    \E[ \mu_j-\mu_i | \rec{j}{p} ] \; \Pr{\rec{j}{p}}\geq 0.
\end{align*}

Let $S^n$ be the random variable which represents the dataset $\mS^n$ collected by the algorithm by the beginning of phase $n$. Let
    $X_{ji}^n = \E[\mu_j-\mu_i| S^n]$
and
    $X_j^n=\min_{i\in A/\{j\}}X_{ji}^n$.
It suffices to show:
\begin{align*}
\E\left[ X_{j}^n| \rec{j}{p}\right] \;\Pr{\rec{j}{p}}\geq 0\;.
\end{align*}

Call an agent \emph{unlucky} if it is chosen as agent $p$ in Algorithm~\ref{alg:black-box}. There are two possible ways that agent $p$ is recommended arm $j$: either arm $j$ is the best posterior arm, i.e., $X_j^n > 0$, and agent $i$ is not the unlucky agent or $X_j^n< 0$, in which case arm $j$ cannot possibly be the exploit arm $a^*_n$ and therefore, $\rec{j}{p}$ happens only if agent $p$ was the unlucky one agent among the $L$ agents of the phase to be recommended the arm of algorithm $\ALG$ and algorithm $\ALG$ recommended arm $j$. There are also the followings two cases: (i) $X_j^n=0$ or (ii) $X_j^n>0$, agent $p$ is the unlucky agent and algorithm $\ALG$ recommends arm $j$. However, since we only want to lower bound the expectation and since under both these events $X_{j}^n\geq 0$, we will disregard these events, which can only contribute positively.

%Denote with $\Ev_j$ the event $\{X_j^n>0\}$, with $\Ev_i$ the event $\{X_j^n<0\}$,
Denote with $U$ the index of the ``unlucky" agent selected in phase $n$ to be given recommendation of the original algorithm $\ALG$. Denote with $\ALG_j^n$ the event that algorithm $\ALG$ pulls arm $j$ at iteration $n$. Thus by our reasoning in the previous paragraph:
\begin{align}
\E\left[ X_{j}^n| \rec{j}{p} \right] \; \Pr{\rec{j}{p}}
\geq& \E[X_{j}^n| {X_j^n\geq 0}, U\neq p] \; \Pr{{X_j^n\geq 0}, U\neq p} \nonumber \\
&+\E[X_{j}^n| {X_j^n < 0}, U=p, \ALG_j^n]\;\Pr{{X_j^n < 0}, U=p, \ALG_j^n}
    \label{eq:pf-bb-ic:1}
\end{align}
Since $X_{j}^n$ is independent of $U$, we can drop conditioning on the event $\{U=p\}$ from the conditional expectations in \refeq{eq:pf-bb-ic:1}. Further, observe that:
\begin{align*}
\Pr{{X_j^n < 0}, U=p, \ALG_j^n}
    &= \PrC{U=p}{{X_j^n < 0},  \ALG_j^n} \Pr{{X_j^n < 0},  \ALG_j^n} \\
    &= \tfrac{1}{L} \Pr{{X_j^n < 0},  \ALG_j^n} \\
\Pr{{X_j^n\geq 0}, U\neq p}
    &= \PrC{U\neq p}{{X_j^n\geq 0}} \Pr{{X_j^n\geq 0}} \\
    &= \left(1-\tfrac{1}{L}\right) \Pr{{X_j^n\geq 0}}
\end{align*}
Plugging this into \refeq{eq:pf-bb-ic:1}, we can re-write it as follows:
\begin{align}
\E\left[ X_{j}^n| \rec{j}{p} \right]\,\Pr{\rec{j}{p}}
    &\geq \left(1-\tfrac{1}{L}\right)\; \E[X_{j}^n| {X_j^n\geq 0}]\; \Pr{{X_j^n\geq0}} \nonumber \\
    &    \qquad+\tfrac{1}{L}\E[X_{j}^n| {X_j^n < 0},\ALG_j^n]\Pr{{X_j^n < 0},\ALG_j^n}.
        \label{eq:pf-bb-ic:2}
\end{align}
Observe that conditional on $\{X_j^n<0\}$, the integrand of the second part in the latter summation is always non-positive. Thus integrating over a bigger set of events can only decrease it, i.e.:
\begin{align*}
%\E[X_{j}^n| {X_j^n < 0},\ALG_j^n]\Pr{{X_j^n < 0},\ALG_j^n} =~&
\E[X_j^n | X_j^n< 0, \ALG_j^n]\Pr{X_j^n< 0,\ALG_j^n} \geq~&  \E[X_j^n | X_j^n< 0]\Pr{X_j^n< 0}
\end{align*}
Plugging this into \refeq{eq:pf-bb-ic:2}, we can lower bound the expected gain as:
\begin{align*}
\E\left[ X_j^n| \rec{j}{p}\right]\;\Pr{\rec{j}{p}}
    \geq~& \left(1-\tfrac{1}{L}\right)\E[X_j^n| X_j^n\geq  0]\Pr{X_j^n\geq 0}
        +\tfrac{1}{L}\E[X_j^n | X_j^n< 0]\Pr{X_j^n< 0}\\
    =~&\E[X_j^n| X_j^n\geq 0]\Pr{X_j^n\geq 0} \tfrac{L-2}{L} + \tfrac{1}{L} \E[X_j^n].
\end{align*}
Now observe that:
\begin{align*}
\E[X_j^n]
    &= \E\left[\min_{i\in A/\{j\}}\E[\mu_j-\mu_i| S^n]\right]
    \geq \E\left[\E\left[\min_{i\in A/\{j\}} (\mu_j - \mu_i) | S^n \right]\right]
    = \E\left[\mu_j - \max_{i\in A/\{j\}} \mu_i\right] \\
    &= \mu_j^0 - \mu_{\max}^0.
\end{align*}
By property \refprop{prop:switch}, we have that $\Pr{X_j^n\geq \tau_\mP}\geq \rho_\mP$. Therefore:
\begin{align*}
\E[X_j^n| X_j^n\geq 0]\Pr{X_j^n \geq 0} \geq \tau_\mP \cdot \rho_\mP >0.
\end{align*}
Thus the expected gain is lower bounded by:
\begin{align*}
\E\left[X_j^n| \rec{j}{p}\right]\,\Pr{\rec{j}{p}}
    \geq~ \tau_\mP \cdot \rho_\mP \frac{L-2}{L} + \frac{1}{L} (\mu_j^0-\mu_{\max}^0).
\end{align*}
The latter is non-negative if:
\begin{align*}
L \geq \frac{\mu_{\max}^0-\mu_j^0}{\tau_\mP\cdot \rho_\mP}+2
\end{align*}

Since this must hold for every any $j$, we get that for incentive compatibility it suffices to pick:
\begin{align*}
L \geq \frac{\max_{j\in A} (\mu_{\max}^0-\mu_j^0)}{\tau_\mP\cdot \rho_\mP}+2 = \frac{\mu_{\max}^0-\mu_m^0}{\tau_\mP\cdot \rho_\mP}+2. \qquad\qquad\qedhere
\end{align*}

% !TEX root=main.tex

\section{Ex-post regret bounds with a detail-free algorithm}
\label{sec:arms-elimination}

We present a BIC algorithm with near-optimal bounds on ex-post regret, both for ``nice" MAB instances and in the worst case over all MAB instances. Our algorithm, called \DFAlg, is ``detail-free", as it requires only a limited knowledge of the prior. This is an improvement over the results in the preceding section, whcih relied on the precise knowledge of the prior, and only provided performance guarantees in expectation over the prior.

\subsection{Results and discussion}

We use an assumption to restrict the common prior, like in the previous sections. We need this assumption not only to give the algorithm a ``fighting chance", but also to ensure that the sample average of each arm is a good proxy for the respective posterior mean. Hence, we focus on priors that are independent across arms, assume bounded rewards, and full support for each $\mu_i$.

\begin{property}
\item\label{prop:DF}
The prior is independent across arms; rewards lie in $[0,1]$;
each $\mu_i$ has full support on $[0,1]$.
\end{property}

\noindent However, we make no assumptions on the distribution of each $\mu_i$ and on the parameterized reward distributions $\mD(\mu_i)$, other than the bounded support.

Our algorithm needs to know the ordering of the arms, an approximation to the smallest prior mean reward, and an upper bound on a certain prior-independent threshold.

\begin{assumption}\label{ass:detail-free}
Algorithm \DFAlg knows the following:
\begin{itemize}

\item the ordering of the arms according to their prior mean rewards;

\item the smallest prior mean reward $\mu_m^0$, or a parameter $\hat{\mu}$ such that
         $\mu_m^0 \leq \hat{\mu} \leq  2\,\mu_m^0$;

\item a certain threshold $N_\mP$ that depends only on the prior $\mP$, or a parameter $N\geq N_\mP$.

\end{itemize}
\end{assumption}

\noindent A loose upper bound on the threshold $N_\mP$ suffices to guarantee BIC property, but hurts algorithm's regret. Since the algorithm does not know the exact prior, it avoids posterior updates, and instead estimates the expected reward of each arm with its sample average.

The high-level guarantees for \DFAlg can be stated as follows:

\begin{theorem}\label{thm:DF}
Algorithm \DFAlg is parameterized by two positive numbers $(\hat{\mu},N)$.

\begin{itemize}
\item[(a)] The algorithm is BIC if \propref{prop:DF} holds, $\mu_m^0>0$, and Assumption~\ref{ass:detail-free} is satisfied.

\item[(b)]
The algorithm achieves ex-post regret
\begin{align*}
R(t)
    \leq \min\left( C_N\cdot\Delta + \frac{O(m\log T N)}{\Delta},\; t \Delta \right)
    \leq C_N + O\left( \sqrt{mt \cdot \log(TN)} \right).
\end{align*}
Here $\Delta$ is the gap, defined as per \eqref{def:gap}, $T$ is the time horizon, $m$ is the number of arms, and $C_N = N+N^2(m-1)$. This regret bound holds for each round $t\leq T$ and any value of the parameters.
\end{itemize}
\end{theorem}

The dependence on the prior $\mP$ creeps in via threshsold $\N_\mP$ and parameter $N\geq \N_\mP$.

Due to the detail-free property we can allow agents to have different priors (subject to some assumptions). For example, all agents can start with the same prior, and receive idiosyncratic signals before the algorithm begins. Further, agents do not even need to know their own priors exactly. We state this formally as follows:

%Moreover, even computing $N_\mP$ exactly does not require much information:
%it only requires knowing the prior mean rewards of both arms,
%and evaluating the CDFs for $\mu_1$ and $\mu_1-\mu_2$ at a single point each.

\begin{corollary}\label{cor:DF}
The BIC property holds even if agents have different priors, and incomplete knowledge of the said priors, as long as (i) all agents agree on the ordering of prior mean rewards (which is known to the algorithm), (ii) each agent's prior satisfies the conditions in Assumption~\ref{ass:detail-free} for the chosen parameters $(\hat{\mu}, N)$, and (iii) each agent knows that her prior satisfies (i) and (ii).
\end{corollary}

Our algorithm consists of two stages. The \emph{sampling stage} samples each arm a given number of times; it is a detail-free version of Algorithm~\ref{alg:block-m}. The racing stage is a BIC version of \emph{Active Arms Elimination}, a bandit algorithm from \citet{EvenDar-icml06}. Essentially, arms ``race" against one another and drop out over time, until only one arm remains.%
\footnote{This approach is known as \emph{Hoeffding race} in the literature, as Azuma-Hoeffding inequality is used to guarantee the high-probability selection of the best arm; see Related Work for relevant citations.}

We present a somewhat more flexible parametrization of the algorithm than in Theorem~\ref{thm:DF}, which allows for better additive constants. In particular, instead of one parameter $N$ we use several parameters, so that each parameter needs to be larger than a separate prior-dependent threshold in order to guarantee the BIC property. To obtain Theorem~\ref{thm:DF}, we simply make all these parameters equal to $N$, and let $N_\mP$ be the largest threshold.

The rest of this section is organized as follows: we present the sampling stage in Section~\ref{sec:DF-sampling}, then the racing stage in Section~\ref{sec:DF-racing}, and wrap up the proof of the main theorem in Section~\ref{sec:DF-wrapup}. The special case of two arms is substantially simpler, we provide a standalone exposition in Appendix~\ref{sec:DF-two-arms}.

\subsection{The detail-free sampling stage}
\label{sec:DF-sampling}

%Compared to the latter, an additional complication compared to is
%that it no longer suffices to consider deviations from a given arm to arms with a higher prior mean reward.
% A.S.: not sure where this sentence fits any more!

To build up intuition, consider the special case of two arms. Our algorithm is similar to Algorithm~\ref{alg:block}, but chooses the ``exploit" arm $a^*$ in a different way. Essentially, we use the sample average reward $\sa{\mu}_1$ of arm $1$ instead of its posterior mean reward, and compare it against the prior mean reward $\mu_2^0$ of arm $2$ (which is the same as the posterior mean of arm $2$, because the common prior is independent across arms). To account for the fact that $\sa{\mu}_1$ is only an imprecise estimate of the posterior mean reward, we pick arm $2$ as the exploit arm only if $\mu_2^0$ exceeds $\sa{\mu}_1$ by a sufficient margin. The detailed pseudocode and a standalone analysis of this algorithm can be found in Appendix~\ref{sec:DF-two-arms}.

Now we turn to the general case of $m>2$ arms. On a very high level, the sampling stage proceeds as in Algorithm~\ref{alg:block-m}: after collecting some samples from arm $1$, the process again is split into $m-1$ phases such that arm $i$ is explored in phase $i$. More precisely, rounds with arm $i$ are inserted uniformly at random among many rounds in which the ``exploit arm" $a^*_i$ is chosen.

However, the definition of the exploit arm is rather tricky, essentially because it must be made ``with high confidence" and cannot based on brittle comparisons. Consider a natural first attempt: define the ``winner" $w_i$ of the previous phases as an arm $j<i$ with the highest average reward, and pick the exploit arm to be either $w_i$ or $i$, depending on whether the sample average reward of arm $w_i$ is greater than the prior mean reward of arm $i$. Then it is not clear that a given arm, conditional on being chosen as the exploit arm, will have a higher expected reward than any other arm: it appears that no Chernoff-Hoeffding Bound argument can lead to such a conclusion. We resolve this issue by always using arm $1$ as the winner of previous phases ($w_i=1$), so that the exploit arm at phase $i$ is either arm $1$ or arm $i$. Further, we alter the condition to determine which of the two arms is the exploit arm: for each phase $i>2$, this condition now involves comparisons with the previously considered arms. Namely, arm $i$ is the exploit arm if only if the sample average reward of each arm $j\in(1,i)$ is larger than the sample average reward of arm $1$ and smaller than the prior mean reward of arm $i$, and both inequalities hold by some margin.

The pseudocode is in Algorithm~\ref{alg:block-m-df}. The algorithm has three parameters $k,L,C$, whose meaning is as follows: the algorithm collects $k$ samples from each arm, each phase lasts for $kL$ rounds, and $C$ is the safety margin in a rule that defines the exploit arm. Note that the algorithm completes in $k+Lk(m-1)$ rounds.

\begin{algorithm}[ht]
\SetKwInOut{Input}{Parameters}\SetKwInOut{Output}{Output}
\Input{$k,L\in\N$ and $C\in(0,1)$}
\BlankLine
\nl For the first $k$ agents recommend arm $1$, and let $\sa{\mu}_1^k$ be their average reward\;
%\nl Let $\{r_1^1,\ldots,r_1^k\}$ be the corresp. rewards,
%    and $\sa{\mu}_1^k = \frac{1}{k}\sum_{t=1}^{k}r_1^t$ the sample average\;
\For{each $\text{arm}$ $i>1$ in increasing lexicographic order}
{
    \eIf{$\sa{\mu}_1^k < \mu_i^0-C$ and $\sa{\mu}_1^k +C < \sa{\mu}_j^k < \mu_i^0-C$ for all arms $j\in(1,i)$}
%	\eIf{$\sa{\mu}_1^k \leq \min_{1<j<i} \sa{\mu}_j - C$ and $\max_{1\leq j<i}\sa{\mu}_j \leq \mu_i^0-C$
%    (the first inequality used only for $i>2$)}
    {$a_i^*=i$}{ $a_i^*=1$}
	\nl From the set $P$ of the next $L\cdot k$ agents, pick a set $Q$ of $k$ agents uniformly at random\;
	\nl Every agent $p\in P-Q$ is recommended arm $a_i^*$\;
	\nl Every agent $p\in Q$ is recommended arm $i$\;
	\nl Let $\sa{\mu}_i^k$ be the average reward among the agents in $Q$\;
%	\nl Let $s_i^k=\{r_i^1,\ldots, r_i^k\}$ be the set of samples of arm $i$ returned by agents in $Q$\;
%	\nl Let $\sa{\mu}_i^k=\frac{1}{k}\sum_{t=1}^{k}r_i^t$\;
}
\caption{The detail-free sampling stage: samples each arm $k$ times.}
\label{alg:block-m-df}
\end{algorithm}

The provable guarantee for Algorithm~\ref{alg:block-m-df} is stated as follows.

\begin{lemma}\label{lm:DF-sampling}
Assume \propref{prop:DF}. Fix $C\in(0,\mu_m^0/3)$ and consider event
\begin{align*}%\label{eq:lem:dfblockm-ic:event}
\Ev = \left\{ \forall j \in A-\{1,m\}: \mu_1+\tfrac{3C}{2}\leq \mu_j \leq \mu_m^0-\tfrac{3C}{2}\right\}.
\end{align*}
Algorithm~\ref{alg:block-m-df} is BIC if parameters $k,L$ are large enough:
\begin{align}\label{eq:lem:dfblockm-ic:parameters}
k \geq 8 C^{-2}\; \log\left(\frac{16m}{C\cdot \Pr{\Ev}}\right)
\qquad\text{and}\qquad
L   \geq 1+ 2\frac{\mu_1^0-\mu_m^0}{C\cdot\Pr{\Ev}}.
\end{align}
\end{lemma}

\noindent Observe that $\Pr{\Ev}>0$ by \propref{prop:DF}, using the fact that $C<\mu_m^0/3$.

\begin{proof}
%Let $\rec{k}{p}$ be the event that agent $p>K$ is recommended arm $k$.
Consider agent $p$ in phase $i$ of the algorithm. Recall that she is recommended either arm $1$ or arm $i$. We will prove incentive-compatibility for the two cases separately.

Let $\hat{\Ev}_i$ be the event that the exploit arm $a_i^*$ in phase $i$ is equal to $i$. By algorithm's specification,
\begin{align*}
\hat{\Ev}_i
    = \{\hat{\mu}_1^k \leq \min_{1<j<i} \hat{\mu}_j - C \text{ and } \max_{1\leq j<i}\hat{\mu}_j \leq \mu_i^0-C\}.
\end{align*}

\xhdr{Part I: Recommendation for arm $i$.} We will first argue that an agent $p$ who is recommended arm $i$ does not want to switch to any other arm $j$. The case $j>i$ is trivial because no information about arms $j$ or $i$ has been collected by the algorithm (because the prior is independent across arms), and $\mu_i^0\geq \mu_j^0$. Thus it suffices to consider $j<i$. We need to show that
\begin{align*}
\E[\mu_i^0-\mu_j | \rec{i}{p}] \Pr{\rec{i}{p}}\geq 0.
\end{align*}

Agent $p$ is in the ``explore group" $Q$ with probability $\tfrac{1}{L}$ and in the ``exploit group" $P-Q$ with the remaining  probability $1-\tfrac{1}{L}$. Conditional on being in the explore group, the expected gain from switching to arm $j$ is $\mu_i^0-\mu_j^0$, since that event does not imply any information about the rewards. If she is in the exploit group, then all that recommendation for arm $i$ implies is that event $\hat{\Ev}_i$ has happened. Thus:
\begin{align*}
\E[\mu_i^0-\mu_j|\rec{i}{p}]\Pr{\rec{i}{p}} = \E[\mu_i^0-\mu_j | \hat{\Ev}_i] \Pr{\hat{\Ev}_i}\cdot \left(1-\tfrac{1}{L}\right) +  \frac{\mu_i^0-\mu_j^0}{L}.
\end{align*}

Thus for the algorithm to be BIC we need to pick a sufficiently large parameter $L$:
\begin{align*}
L \geq 1+\frac{\mu_j^0-\mu_i^0}{\E[\mu_i^0-\mu_j|\hat{\Ev}_i]\Pr{\hat{\Ev}_i}}.
\end{align*}

It remains to lower-bound the denominator in the above equation. Consider the Chernoff event:
\begin{align*}
\mC = \{ \forall i\in A: |\hat{\mu}_i^k-\mu_i | \leq \eps \}.
\end{align*}
Observe that by Chernoff-Hoeffding Bound and the union bound:
\begin{align*}
\forall \mu_1,\ldots, \mu_m: \PrC{\mC}{\mu_1,\ldots,\mu_m} \geq 1-m\cdot 2\,e^{-2\eps^2 k}.
\end{align*}
Now observe that since $\mu_i\in [0,1]$ and by the Chernoff-Hoeffding Bound:
\begin{align*}
\E[\mu_i^0-\mu_j|\hat{\Ev}_i]\Pr{\hat{\Ev}_i} =~& \E[\mu_i^0-\mu_j|\hat{\Ev}_i,\mC]\Pr{\hat{\Ev}_i,\mC}+\E[\mu_i^0-\mu_j|\hat{\Ev}_i,\neg \mC]\Pr{\hat{\Ev}_i,\neg \mC}\\
\geq~& \E[\mu_i^0-\mu_j|\hat{\Ev}_i,\mC]\Pr{\hat{\Ev}_i,\mC}-2\,e^{-2\eps^2 k}.
\end{align*}
Observe that conditional on the events $\hat{\Ev}_i$ and $\mC$, we know that $\mu_i^0-\mu_j\geq  C-\eps$, since event $\hat{\Ev}_i$ implies that $\mu_i^0\geq \hat{\mu}_j^k+C$ and event $\mC$ implies that $\hat{\mu}_j^k\geq \mu_j-\eps$. Thus we get:
\begin{align*}
\E[\mu_i^0-\mu_j|\hat{\Ev}_i]\Pr{\hat{\Ev}_i}\geq~& (C-\eps)\Pr{\hat{\Ev}_i,\mC}-m\cdot 2\,e^{-2\eps^2 k}.
\end{align*}
Now let:
\begin{align*}
\Ev_i = \{\mu_1 \leq \min_{1<j<i} \mu_j - C -2\eps \text{ and } \max_{1\leq j<i}\mu_j \leq \mu_i^0-C-2\eps\}\;,
\end{align*}
i.e., $\Ev_i$ is a version of $\hat{\Ev}_i$ but on the true mean rewards, rather than the sample averages. Observe that under event $\mC$, event $\Ev_i$ implies event $\hat{\Ev}_i$. Thus:
\begin{equation}\label{eqn:hat-vs-true}
\Pr{\Ev_i,\mC} \leq \Pr{\hat{\Ev}_i,\mC}\;.
\end{equation}
Moreover, by Chernoff-Hoeffding Bound:
\begin{equation}\label{eqn:chernoff-and-ev_i}
\Pr{\Ev_i,\mC} = \PrC{\mC}{\Ev_i}\Pr{\Ev_i}\geq \left(1-m\cdot e^{-2\eps^2k}\right) \Pr{\Ev_i}.
\end{equation}
Hence, we conclude that:
\begin{align*}
\E[\mu_i^0-\mu_j|\hat{\Ev}_i]\Pr{\hat{\Ev}_i}\geq~& (C-\eps)\left(1-m\cdot e^{-2\eps^2k}\right) \Pr{\Ev_i}-m\cdot 2\,e^{-2\eps^2 k}.
\end{align*}
By taking $\eps=\frac{C}{4}$ and
    $k\geq 2 \eps^{-2}\, \log\left(\frac{16m}{C\cdot \Pr{\Ev_i}}\right)$
we get
that:
\begin{align*}
\E[\mu_i^0-\mu_j|\hat{\Ev}_i]\Pr{\hat{\Ev}_i}\geq \tfrac{3}{4} C \left(1- \tfrac{1}{8}\right) \Pr{\Ev_i} - \tfrac{1}{8} C \Pr{\Ev_i} \geq \tfrac{1}{2} C \Pr{\Ev_i}.
\end{align*}
Thus it suffices to pick:
\begin{align*}
L \geq 1+
%\frac{32}{3}
2
\frac{\mu_j^0-\mu_i^0}{C \Pr{\Ev_i}}.
\end{align*}
In turn, since $\mu_j^0\leq \mu_1^0$ it suffices to pick the above for $j=1$.

\xhdr{Part II: Recommendation for arm $1$.} When agent $p$ is recommended arm $1$ she knows that she is not in the explore group, and therefore all that is implied by the recommendation is the event $\neg \Ev_i$. Thus we need to show that for any arm $j\geq 1$:
\begin{align*}
\E[\mu_1-\mu_j|\neg \Ev_i]\Pr{\neg \Ev_i}\geq 0.
\end{align*}
It suffices to consider arms $j\leq i$. Indeed, if an agent does not want to deviate to arm $j=i$ then she also does not want to deviate to any arm $j>i$, because at phase $i$ the algorithm has not collected any information about arms $j\geq i$.

Now observe the following:
\begin{align*}
\E[\mu_1-\mu_j| \neq \Ev_i]\Pr{\neg \Ev_i} =~& \E[\mu_1-\mu_j] - \E[\mu_1-\mu_j| \Ev_i]\Pr{\Ev_i}\\
=~& \mu_1^0-\mu_j^0 + \E[\mu_j-\mu_1 | \Ev_i]\Pr{\Ev_i}
\end{align*}
Since, by definition $\mu_1^0\geq \mu_j^0$, it suffices to show that for any $j\in [2,i]$:
\begin{align*}
\E[\mu_j-\mu_1 | \Ev_i]\Pr{\Ev_i}\geq 0.
\end{align*}
If $j=i$, the above inequality is exactly the incentive-compatibility constraint that we showed in the first part of the proof. Thus we only need to consider arms $j\in [2,i-1]$.

Similar to the first part of the proof, we can write:
\begin{align*}
\E[\mu_j-\mu_1 | \Ev_i]\Pr{\Ev_i}\geq\E[\mu_j-\mu_1 | \Ev_i,\mC]\Pr{\Ev_i,\mC} - 2m\,e^{-2\eps^2 k}.
\end{align*}
By similar arguments, since $\Ev_i$ implies that $\hat{\mu}_j^k \geq \hat{\mu}_1^k+C$ and since $\mC$ implies that $\mu_j\geq \hat{\mu}_j^k-\eps$ and $\hat{\mu}_1^k\geq \mu_1-\eps$, we have that the intersection of the two events implies $\mu_j\geq \mu_1 + C-2\eps$. Hence:
\begin{align*}
\E[\mu_j-\mu_1 | \Ev_i]\Pr{\Ev_i}\geq (C-2\eps)\Pr{\hat{\Ev}_i,\mC} - 2m e^{-2\eps^2 k}.
\end{align*}
By Equations \eqref{eqn:hat-vs-true} and \eqref{eqn:chernoff-and-ev_i}
\begin{align*}
\E[\mu_j-\mu_1 | \Ev_i]\Pr{\Ev_i}\geq (C-2\eps)\left(1-2m e^{-2\eps^2 k}\right) \Pr{\Ev_i} - m e^{-2\eps^2 k}.
\end{align*}
Setting $\eps$ and $k$ same way as in the first part, we have
\begin{align*}
\E[\mu_j-\mu_1 | \Ev_i]\Pr{\Ev_i}\geq \tfrac{1}{2}C\left(1-\tfrac{1}{8}\right)\Pr{\Ev_i} - \tfrac{1}{8} C \Pr{\Ev_i} = \tfrac{5}{16} C \Pr{\Ev_i}\geq 0.
\end{align*}

%%%%%%%%%%%%%%%%%%%%%%%
\xhdr{Wrapping up.}
We proved that both potential recommendations in phase $i$ are BIC if we take $\eps = C/4$ and
\begin{align*}
k \geq 8 C^{-2}\; \log\left(\frac{8m}{C\cdot \Pr{\Ev_i}}\right)
\qquad\text{and}\qquad
L   \geq 1+ 2\frac{\mu_1^0-\mu_m^0}{C\cdot\Pr{\Ev_i}}.
\end{align*}
\refeq{eq:lem:dfblockm-ic:parameters} suffices to ensure the above for all $i$ because
$\Ev = \Ev_m \subset \Ev_i$ for all phases $i<m$.
\end{proof}

\subsection{The detail-free racing stage}
\label{sec:DF-racing}

The detail-free racing stage is a BIC version of \emph{Active Arms Elimination}, a bandit algorithm from \citet{EvenDar-icml06}. Essentially, arms ``race" against one another and drop out over time, until only one arm remains. More formally, the algorithm proceeds as follows. Each arm is labeled ``active" initially, but may be permanently ``deactivated" once it is found to be sub-optimal with high confidence. Only active arms are chosen. The time is partitioned into phases: in each phase the set of active arms is recomputed, and then all active arms are chosen sequentially. To guarantee incentive-compatibility, the algorithm is initialized with some samples of each arm, collected in the sampling stage. The pseudocode is in Algorithm~\ref{alg:many-arms-elim}.

%\begin{algorithm}[ht]
%\SetKwInOut{Input}{Input}\SetKwInOut{Output}{Output}
%\Input{parameters $k\in\N$ and $\theta\geq 1$; time horizon $T$.}
%\Input{$k$ samples of each arm $i$, denoted $r_i^1 \LDOTS r_i^k$.}
%\BlankLine
%\nl Let $\hat{\mu}_i^k= \frac{1}{k}\sum_{t=1}^{k}r_i^t$ be the sample average for each arm $i$\;
%\nl Split remainder into consecutive phases, starting from phase $n=k$\;
%\nl Let $\hat{\mu}_*^n=\max_i \hat{\mu}_i^n$ be the largest average reward so far\;
%\nl Let $B^n=\{i\in A: \hat{\mu}_*^n-\hat{\mu}_i^n\leq c(n)\}$ denote the set of active arms\;
%\While{$|B^n|\neq 1$}{
%	\nl From the set $P$ of the next $|B^n|$ agents recommend each arm $i\in B^n$ sequentially in lexicographic order\;
%	\nl Let $r_i^n$ be the sample of each arm in this phase and $\hat{\mu}_i^{n+1} = \frac{1}{n+1}\sum_{t=1}^{n}r_i^t$\;
%	\nl Let $\hat{\mu}_*^{n+1}=\max_{i\in B^n}\hat{\mu}_i^{n+1}$ and $B^{n+1}=\{i\in B^n: \hat{\mu}_*^{n+1}-\hat{\mu}_i^{n+1}\leq c(n)\}$\;
%	\nl $n=n+1$\;
%}
%\nl For all remaining agents recommend the single arm $a^*$ that remains in $B^n$
%\caption{BIC race for $m>2$ arms.}
%\label{alg:many-arms-elim}
%\end{algorithm}

\begin{algorithm}[ht]
\SetKwInOut{Input}{Input}\SetKwInOut{Output}{Output}
\Input{parameters $k\in\N$ and $\theta\geq 1$; time horizon $T$.}
\Input{$k$ samples of each arm $i$, denoted $r_i^1 \LDOTS r_i^k$.}
\BlankLine
\nl Initialize the set of active arms: $B= \{\text{all arms}\}$\;
\nl Split the remainder into consecutive phases, starting from phase $n=k$\;
\Repeat{$|B|= 1$}{
    \tcp{Phase $n$}
    \nl Let $\sa{\mu}_i^n= \frac{1}{n}\sum_{t=1}^{n}r_i^t$ be the sample average for each arm $i\in B$\;
    \nl Let $\sa{\mu}_*^n=\max_{i\in B} \sa{\mu}_i^n$ be the largest average reward so far\;
    \nl Recompute $B=\left\{\text{arms $i$}: \sa{\mu}_*^n-\sa{\mu}_i^n\leq \sqrt{\frac{\log(T\theta)}{n}}\right\}$\;
	\nl To the next $|B|$ agents, recommend each arm $i\in B$ sequentially in lexicographic order\;
	\nl Let $r_i^{n+1}$ be the reward of each arm $i\in B$ in this phase\;
	\nl $n= n+1$\;
}
\nl For all remaining agents recommend the single arm $a^*$ that remains in $B$
\caption{The detail-free racing stage.}
\label{alg:many-arms-elim}
\end{algorithm}

\begin{lemma}[BIC]\label{lm:DF-racing}
Fix an absolute constant $\tau\in (0,1)$ and let
\begin{align*}
    \theta_\tau = \frac{4 m^2}{\tau}\Big/ \min_{i\in A} \Pr{\mu_i-\max_{j\neq i}\mu_j\geq \tau}.
\end{align*}
Algorithm~\ref{alg:many-arms-elim} is BIC if \propref{prop:DF} holds and the parameters satisfy
    $\theta\geq \theta_\tau$
and
    $k\geq \theta^2_\tau\, \log(T)$.
\end{lemma}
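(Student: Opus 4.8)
The plan is to verify the BIC condition in the direct form $\E[\mu_i-\mu_j\mid I_t=i]\,\Pr{I_t=i}\ge 0$ for every round $t$ of the racing stage and every ordered pair of arms $(i,j)$; when $\Pr{I_t=i}=0$ there is nothing to prove. Two structural facts make this manageable. First, the active set $B$ only shrinks from phase to phase, and an arm $i$ is ever recommended --- whether inside the race at some phase $n$ or as the surviving arm $a^*$ in the final line --- only if it lies in the set $B_k$ computed at the first phase from the $k$ samples handed over by the sampling stage; hence $\{I_t=i\}\subseteq\{i\in B_k\}$ for \emph{every} $t$. Second, $i\in B_k$ is exactly the statement $\sa{\mu}_i^k\ge\sa{\mu}_\ell^k-R_k$ for all arms $\ell$, where $R_k=\sqrt{\log(T\theta)/k}$.

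I would then introduce the clean event $\mC=\{\,|\sa{\mu}_\ell^k-\mu_\ell|<\tfrac12 R_k\ \text{for all }\ell\,\}$. By Chernoff--Hoeffding and a union bound over the (constantly many) arms, $\Pr{\neg\mC}\le 2m\,e^{-kR_k^2/2}=2m(T\theta)^{-1/2}$ --- essentially the only place $T$ enters, and negligible for $T$ large. On $\mC$ the second fact sharpens to the \emph{downside bound}: $i\in B_k$ forces $\mu_i\ge\mu_j-2R_k$ for every arm $j$. Since $k\ge\theta_\tau^2\log T$ (and $T$ is large relative to $\theta$) we have $R_k\le\sqrt2/\theta_\tau$, so obeying any recommendation costs at most $2R_k\le 2\sqrt2/\theta_\tau$ in expectation on $\mC$.

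For the upside I would use the event $G_i=\{\mu_i-\max_{\ell\ne i}\mu_\ell\ge\tau\}$ that arm $i$ is best by margin $\tau$. The key claim is $G_i\cap\mC\subseteq\{I_t=i\}$ for \emph{all} $t$: on $G_i\cap\mC$ one gets $\sa{\mu}_i^k-\sa{\mu}_\ell^k\ge\tau-R_k>R_k$ for every $\ell\ne i$ (using $R_k<\tau/2$, which holds because $\theta_\tau\ge 4/\tau$), so $B_k=\{i\}$, the race terminates instantly, and arm $i$ is recommended forever after. Splitting $\{I_t=i\}$ according to $G_i$, bounding $\mu_i-\mu_j\ge\tau$ on $G_i$ and $\mu_i-\mu_j\ge-2R_k$ on $\{i\in B_k\}\cap\mC$, and paying at most $\Pr{\neg\mC}$ for the $\neg\mC$ part, I would obtain
\[
\E[\mu_i-\mu_j\mid I_t=i]\,\Pr{I_t=i}\ \ge\ \tau\,\Pr{G_i}\ -\ 2R_k\ -\ (1+\tau)\,\Pr{\neg\mC}.
\]
By definition of $\theta_\tau$, $\Pr{G_i}\ge\min_\ell\Pr{\mu_\ell-\max_{j\ne\ell}\mu_j\ge\tau}=(4/\tau)/\theta_\tau$, so $\tau\,\Pr{G_i}\ge 4/\theta_\tau$; this exceeds $2R_k\le 2\sqrt2/\theta_\tau$ by a constant surplus, and $(1+\tau)\Pr{\neg\mC}\to 0$, so the right-hand side is nonnegative (in fact strictly positive) once $T$ is past a prior-dependent threshold. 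In this balance, $k\ge\theta_\tau^2\log T$ is precisely what drives $R_k$ below the $1/\theta_\tau$ scale needed both for the downside bound and for the collapse $B_k=\{i\}$ under $G_i$, while $\theta\ge\theta_\tau$ keeps the elimination radii conservative and further shrinks $\Pr{\neg\mC}$.

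The main obstacle --- the one this organization is designed to avoid --- is that the naive approach of conditioning on a clean event valid at \emph{every} phase $n\le T$ breaks down: its failure probability is of order $m\sqrt T/\sqrt\theta$, useless when $\theta$ is merely a prior-dependent constant. The way out is that a recommendation of arm $i$, even in an arbitrarily late phase, carries no information beyond $\{i\in B_k\}$ because $B$ is monotone, so a single clean event at phase $k$ suffices --- provided one also observes that $G_i$ collapses the race in that same first phase. A secondary subtlety, which turns out to be harmless, is that agents need not know which phase they occupy (phase lengths depend on realized rewards); but the inequality above is argued for each fixed $t$ without ever naming the phase of round $t$.
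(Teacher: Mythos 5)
Your argument is essentially correct, but it is organized differently from the paper's proof and carries one quantitative weakness worth flagging. The paper conditions on a single Chernoff event that is uniform over \emph{all} phases $n\geq k$ and all pairwise sample-average differences, with deviation equal to the algorithm's own elimination radius $c_n=\sqrt{\log(T\theta)/n}$; summing $e^{-2nc_n^2}=(T\theta)^{-2}$ over at most $T$ phases and $m^2$ pairs gives failure probability $O(m^2 T^{-1}\theta^{-2})$, which is already negligible against the gain $\tfrac{\tau}{4}\Pr{Z_i\geq\tau}\geq 1/\theta_\tau$. On that event it argues exactly what you argue: a margin-$\tau$ winner eliminates all competitors at the very first phase $n=k$ (since $c_k<\tau/2$), an arm that is behind by more than $2c_n$ is already eliminated, and hence the only possible loss from obeying a recommendation is $O(c_k)$. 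So the two proofs share the same engine; where you genuinely differ is in replacing the all-phases event by a single clean event at phase $k$ plus monotonicity of the active set $B$, which is a cleaner way of seeing that no union bound over the information content of later phases is needed. Two caveats. First, your stated ``main obstacle'' --- that a uniform-over-phases event has failure probability of order $m\sqrt{T}/\sqrt{\theta}$ --- is an artifact of your half-width deviation $\tfrac12 R_k$ on individual arm averages; with full-width deviation $c_n$ on pairwise \emph{differences} (as the paper uses) the uniform event costs only $O(m^2T(T\theta)^{-2})$ and causes no trouble. Second, the same half-width choice is what leaves your own failure probability at $2m(T\theta)^{-1/2}$ and forces the extra hypothesis ``once $T$ is past a prior-dependent threshold,'' which is not part of the lemma statement. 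This is easily repaired: define the clean event at phase $k$ directly on the differences with deviation $c_k$; then $i\in B_k$ still yields $\mu_i\geq\mu_j-2c_k$, the collapse of the race under $G_i$ still goes through since $c_k<\tau/2$, and the failure probability drops to $2m^2(T\theta)^{-2}$, eliminating the need for any nontrivial lower bound on $T$.
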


\begin{proof}
We will show that for any agent $p$ in the racing stage and for any two arms $i,j \in A$:
\begin{align*}
\E[ \mu_i - \mu_j | \rec{i}{p}]\Pr{\rec{i}{p}} \geq 0.
\end{align*}
In fact, we will prove a stronger statement:
\begin{align}\label{eq:lem:elimination-ic-app:0}
\E[ \mu_i - \max_{j\neq i} \mu_j | \rec{i}{p}] \Pr{\rec{i}{p}} \geq 0.
\end{align}
For each pair of arms $i,j\in A$, denote $Z_{ij}=\mu_i-\mu_j$ and
\[Z_i=\mu_i-\max_{j\neq i} \mu_j = \min_{j\neq i} Z_{ij}.\]
For each phase $n$, let $\sa{z}_{ij}^n = \sa{\mu}_i^n-\sa{\mu}_j^n$. For simplicity of notation, we assume that samples of eliminated arms are also drawn at each phase but simply are not revealed or used by the algorithm.

Let
    $c_n = \sqrt{\frac{\log(T\theta)}{n}}$
be the threshold in phase $n$ of the algorithm. Consider the ``Chernoff event":
\begin{align*}
\mC = \left\{ \forall n\geq k, \forall i,j\in A: |Z_{ij}-\sa{z}_{ij}^n| < c_n \right\}.
\end{align*}
By Chernoff-Hoeffding Bound and the union bound, for any $Z$, we have:
\begin{align*}
\Pr{\neg \mC | Z}
    &\leq \sum_{n\geq n^*} \sum_{i,j\in A}\Pr{|Z_{ij}-\sa{z}_{ij}^n|\geq c_n n}
    \leq \sum_{n\geq n^*} \sum_{i,j\in A} e^{-2n\cdot c_n^2}\\
    &\leq  T\cdot m^2\cdot  e^{- 2\log(T\theta)} \leq \frac{m^2}{\theta}.
\end{align*}
Given the above and since $Z_{i}\geq -1$ we can write:
\begin{align}
\E[Z_{i}|\rec{i}{p}]\Pr{\rec{i}{p}}
    =~& \E[Z_{i}|\rec{i}{p}, \mC]\Pr{\rec{i}{p}, \mC} + \E[Z_{i}|\rec{i}{p}, \neg \mC]\Pr{\rec{i}{p},\neg \mC}
        \nonumber\\
\geq~&  \E[Z_{i}|\rec{i}{p}, \mC]\Pr{\rec{i}{p}, \mC}  - \frac{m^2}{\theta}.
    \label{eq:lem:elimination-ic-app:1}
\end{align}
From here on, we will focus on the first term in \refeq{eq:lem:elimination-ic-app:1}. Essentially, we will assume that Chernoff-Hoeffding Bound hold deterministically.

Consider phase $n$. Denote
    $n^*=\theta^2_\tau\, \log(T)$,
and recall that $n\geq k\geq n^*$.

We will split the integral
    $\E[Z_{i}|\rec{i}{p}, \mC]$
into cases for the value of $Z_i$. Observe that by the definition of $c_n$ we have
    $c_n\leq c_{n^*} \leq \frac{\tau \Pr{Z_i\geq \tau}}{4}$.
Hence, conditional on $\mC$, if $Z_i\geq \tau \geq  2 c_{n^*}$, then we have definitely eliminated all other arms except arm $i$ at phase $n=k\geq n^*$, since
\[\sa{z}_{ij}^n> Z_{ij}- c_n \geq Z_i- c_n\geq c_n.\]
Moreover, if $Z_i\leq - 2c_n$, then by similar reasoning, we must have already eliminated arm $i$, since
\[\sa{\mu}_i^n-\sa{\mu}_*^n = \min_{j\neq i} \sa{z}_{ij}^n < \min_{j\neq i} Z_{ij} +c_n =Z_i+c_n \leq -c_n.\]
Thus in that case $\rec{i}{p}$, cannot occur. Moreover, if we ignore the case when $Z_i\in [0,\tau)$, then the integral can only decrease, since the integrand is non-negative. Since we want to lower bound it, we will ignore this region. Hence:
\begin{align}
\E[Z_i|\rec{i}{p}, \mC]\Pr{\rec{i}{p}, \mC}%\geq~& \E[Z| \mC, Z\geq \tau] \Pr{\mC, Z\geq \tau}+
%\E[Z|\rec{2}{p},\mC, -\tau < Z< 0]\Pr{\rec{2}{p},\mC, -\tau < Z< 0}\\
\geq~& \tau \Pr{\mC, Z_i\geq \tau}-2\cdot c_n \Pr{\mC, -2c_n\leq Z_i\leq 0}
    \nonumber \\
\geq~& \tau \Pr{\mC, Z_i\geq \tau}-2\cdot c_n\\
\geq~& \tau \Pr{\mC, Z_i\geq \tau}-\frac{\tau \cdot \Pr{Z_i\geq \tau}}{2}.
    \label{eq:lem:elimination-ic-app:2}
\end{align}
Moreover, we can lower-bound the term $\Pr{\mC, Z_i\geq \tau}$ using Chernoff-Hoeffding Bound:
\begin{align*}
\Pr{\mC, Z_i\geq \tau} =~& \Pr{\mC|Z_i\geq \tau} \Pr{Z_i\geq \tau}
\geq \left(1-\sum_{t\geq n^*}\sum_{i,j\in A} e^{-2 c_n^2 t}\right)\cdot  \Pr{Z_i\geq \tau}\\
\geq~& \left(1-m^2/\theta\right)\cdot  \Pr{Z_i\geq \tau} \geq \tfrac{3}{4} \Pr{Z_i\geq \tau}.
\end{align*}
Plugging this back into \refeq{eq:lem:elimination-ic-app:2}, we obtain:
\begin{align*}
\E[Z_i|\rec{i}{p}, \mC]\Pr{\rec{i}{p}, \mC}
\geq \tfrac{\tau}{4}\cdot \Pr{\mC, Z_i\geq \tau}.
\end{align*}
Plugging this back into \refeq{eq:lem:elimination-ic-app:1} and using the fact that $\theta\geq \theta_\tau$, we obtain
    \refeq{eq:lem:elimination-ic-app:0}.
\end{proof}

\begin{lemma}[Regret]\label{lm:DF-regret-body}
Algorithm~\ref{alg:many-arms-elim} with any parameters $k\in\N$  and $\theta\geq 1$ achieves ex-post regret
\begin{align}\label{eq:lm:reg-bound-many-1}
R(T) \leq \sum_{\text{arms $i$}} \frac{18 \log(T\theta)}{\mu^*-\mu_i},
\quad \text{where $\mu^* = \max_{i\in A} \mu_i$}.
\end{align}
Denoting $\Delta = \min_{i\in A} \mu^*-\mu_i$, and letting $n^*$ be the duration of the initial sampling stage, the ex-post regret for each round $t$ of the entire algorithm (both stages) is
\begin{align}\label{eq:lm:reg-bound-many-2}
R(t)
    &\leq \min\left( n^*\Delta + \frac{18\,m\, \log(T\theta)}{\Delta},\; t\Delta \right)
    \leq n^* + O\left( \sqrt{m\,t\log (T\theta)} \right).
\end{align}
\end{lemma}

The ``logarithmic" regret bound \eqref{eq:lm:reg-bound-many-1} is proved via standard techniques, e.g., see \citet{EvenDar-icml06}. We omit the details from this version. To derive the corollary \eqref{eq:lm:reg-bound-many-2}, observe that the ex-post regret of the entire algorithm (both stages) is at most that of the second stage plus $\Delta$ per each round of the first stage; alternatively, we can also upper-bound it by $\Delta$ per round.

\OMIT{This gives the first inequality in \eqref{eq:lm:reg-bound-many-2}. To obtain the second inequality, note that it is trivial for $t\leq n^*$, and for $t>n^*$ consider consider three cases, depending on the value of $\Delta$:
\begin{align*}
\begin{cases}
R(t) \leq 2n^*                            & \text{ if }\Delta\geq \sqrt{\beta/n^*}, \\
R(t) \leq \beta/\delta\leq \sqrt{\beta t} & \text{ if }\sqrt{\beta/t} \leq \Delta \leq \sqrt{\beta/n^*},\\
R(t) \leq \Delta t \leq \sqrt{\beta t}    & \text{ if }\Delta\leq \sqrt{\beta/t}. \qquad\qquad\qquad \qedhere
\end{cases}
\end{align*}}

\subsection{Wrapping up: proof of Theorem~\ref{thm:DF}}
\label{sec:DF-wrapup}

To complete the proof of Theorem~\ref{thm:DF}, it remains to transition to the simpler parametrization, using only two parameters
    $\hat{\mu} \in [\mu^0_m, 2\mu^0_m]$
and
    $N\geq N_\mP$,
where $N_\mP$ is a prior-dependent constant.

Let us define the parameters for both stages of the algorithm as follows. Define parameter $C$ for the sampling stage as
    $C = \hat{\mu}/6$, and set the remaining parameters $k^*,L$ for the sampling stage and
$k,\theta$ for the racing stage be equal to $N$. To define $N_\mP$, consider the thresholds for $k^*,L$ in Lemma~\ref{lm:DF-sampling} with $C = \mu^0_m/6$, and the thresholds for $k,\theta$ in Lemma~\ref{lm:DF-racing} with any fixed $\tau\in(0,1)$, and let $N_\mP$ be the largest of these four thresholds. This completes the algorithm's specification.

It is easy to see that both stages are BIC. (For the sampling stage, this is because $\Pr{\Ev}$ is monotonically non-decreasing in parameter $C$.) Further, the sampling stage lasts for $f(N) = N+N^2(m-1)$ rounds. So the regret bound in the theorem follows from that in Lemma~\ref{lm:DF-regret-body}.

% the many-arms section is now obsolete!
% \input{sec-many-arms.tex}

% !TEX root=main.tex

\section{Extensions: contextual bandits and auxiliary feedback}
\label{sec:general}

%\subsection{Contextual Bandits}
%\label{sec:contextual}

We now extend the black-box reduction result in two directions: contexts and auxiliary feedback. Each agent comes with a signal (\emph{context}), observable by both the agent and the algorithm, which impacts the rewards received by this agent as well as the agent's beliefs about these rewards. Further, after the agent chooses an arm, the algorithm observes not only the agent's reward but also some auxiliary feedback. As discussed in Introduction and Related Work, each direction is well-motivated in the context of BIC exploration, and is closely related to a prominent line of work in the literature on multi-armed bandits.

\subsection{Setting: BIC contextual bandit exploration with auxiliary feedback}

\newcommand{\Dfb}{\mD_{\mathtt{fb}}}  % parameterized distribution for the auxilliary feedback
\newcommand{\DX}{\mD_{\mathtt{X}}}    % distribution over contexts
\newcommand{\muvec}{\vec{\mu}}       % vector of expected rewards

In each round $t$, a new agent arrives, and the following happens:
\begin{OneLiners}
\item context $x_t\in \mX$ is observed by both the algorithm and the agent,
\item algorithm recommends action $a_t \in A$ and computes prediction $\phi_t$ (invisible to agents),
\item reward $r_t(a_t)\in \R$ and auxiliary feedback $f_t(a_t) \in\mF$ are observed by the algorithm.
\end{OneLiners}
The sets $\mX,A,\mF$ are fixed throughout, and called, resp., the \emph{context space}, the \emph{action space}, and the \emph{feedback space}. Here $r_t(\cdot)$ and $f_t(\cdot)$ are functions from the action space to, resp., $\R$ and $\mF$.

We assume an IID environment. Formally, the tuple
    $(x_t\in \mX;\; r_t: A\to \R;\; f_t:A\to \mF) $
is chosen independently from some fixed distribution $\Psi$ that is not known to the agents or the algorithm.

There is a common prior on $\Psi$, denoted $\mP$, which is known to the algorithm and the agents. Specifically, we will assume the following parametric structure. The context $x_t$ is an independent draw from some fixed distribution $\DX$ over the context space. Let $\mu_{a,x}$ denote the expected reward corresponding to context $x$ and arm $a$. There is a single-parameter family $\mD(\mu)$ of reward distributions, parameterized by the mean $\mu$. The vector of expected rewards
    $\muvec = \left( \mu_{a,x}:\, a\in A, x\in \mX \right)$
is drawn according to a prior $\muPrior$. Conditional on $\muvec$ and the context $x=x_t$, the reward $r_t(a)$ of every given arm $a=a_t$ is distributed according to $\mD(\mu_{a,x})$. Similarly, there is a single-parameter family of distributions $\Dfb(\cdot)$ over the feedback space, so that conditional on $\muvec$ and the context $x=x_t$, the feedback $f_t(a)$ for every given arm $a=a_t$ is distributed according to $\Dfb(\mu_{a,x})$. In what follows, we will sometimes write
    $\mu(a,x)$
instead of $\mu_{a,x}$.

The incentive constraint is now conditioned on the context. To state it formally, the planner's recommendation in round $t$ is denoted with $I_t^x$, where $x = x_t$ is the context (recall that the recommendation can depend on the observed context). As in Definition~\ref{def:BIC}, let $\mE_{t-1}$ be the event that the agents have followed the algorithm's recommendations up to (and not including) round $t$.

\begin{definition}[Contextual BIC]
A recommendation algorithm is \emph{Bayesian incentive-compatible} (\emph{BIC}) if
\begin{align*}
\E[\mu_{a,x} \mid I_t^x=a, \mE_{t-1}] \geq \max_{a'\in A} \E[\mu_{a',x} \mid I_t^x=a, \mE_{t-1}]
    \qquad \forall t\in [T],\, \forall x\in \mX, \, \forall a\in A.
\end{align*}
\end{definition}

\xhdr{Performance measures.}
Algorithm's performance is primarily measured in terms of \emph{contextual regret}, defined as follows. A \emph{policy} $\pi:\mX \to A$ is a mapping from contexts to actions. The number of possible policies increases exponentially in the cardinality of the context space. Since in realistic applications the context space tends to be huge, learning over the space of all possible policies is usually intractable. A standard way to resolve this (following \citet{Langford-nips07}) is to explicitly restrict the class of policies. Then the algorithm is given a set $\Pi$ of policies, henceforth, the \emph{policy set}, so that the algorithm's performance is compared to the best policy in $\Pi$. More specifically, Bayesian contextual regret of algorithm $\ALG$ relative to the policy class $\Pi$ in the first $t$ rounds is defined as
\begin{align}\label{eq:context-Bayesian-regret-defn}
R_{\ALG,\Pi}(t)
    = \E_{\muvec\sim\muPrior}\left[
            t\cdot \sup_{\pi\in \Pi}\; \E_{x\in \DX} \left[ \mu(\pi(x),\,x) \right]
            - \E_{\ALG}\left[ \sum_{s=1}^{t} \mu(I_s^x,\,x)\right]
    \right].
\end{align}

As in Section~\ref{sec:black-box}, we will also provide performance guarantees in terms of average rewards and in terms of the quality of predictions.

\xhdr{Discussion.}
In the line of work on contextual bandits with policy sets, a typical algorithm is in fact a family of algorithms, parameterized by an oracle that solves a particular optimization problem for a given policy class $\Pi$. The algorithm is then oblivious to how this oracle is implemented. This is a powerful approach, essentially reducing contextual bandits to machine learning on a given data set, so as to take advantage of the rich body of work on the latter. In particular, while the relevant optimization problem is NP-hard for most policy classes studied in the literature, it is often solved very efficiently in practice.

Thus, it is highly desirable to have a BIC contextual bandit algorithm which can work with different oracles for different policy classes. However, an explicit design of such algorithm would necessitate BIC implementations for particular oracles, which appears very tedious. A black-box reduction such as ours circumvents this difficulty.

\xhdr{A note on notation.}
We would like to re-order the actions according to their prior mean rewards, as we did before. However, we need to do it carefully, as this ordering is now context-specific. Let
    $\mu_{a,x}^0=\E[\mu_{a,x}]$
be the prior mean reward of arm $a$ given context $x$. Denote with $\sigma(x,i)\in A$ the $i$-th ranked arm for $x$ according to the prior mean rewards:
\begin{align*}
    \mu_{\sigma(x,1),\,x}^0 \geq \mu_{\sigma(x,2),\,x}^0 \geq \ldots \geq \mu_{\sigma(x,m),\,x}^0,
\end{align*}
where $m$ is the number of arms and ties are broken arbitrarily. The \emph{arm-rank} of arm $a$ given context $x$ is the $i$ such that $a=\sigma(x,i)$. We will use arm-ranks as an equivalent representation of the action space: choosing arm-rank $i$ corresponds to choosing action $\sigma(x,i)$ for a given context $x$.

A \emph{rank-sample} of arm-rank $i$ is a tuple $(x,a,r,f)$, where $x$ is a context, $a = \sigma(x,i)$ is an arm, and $r,f$ are, resp., the reward and auxiliary feedback received for choosing this arm for this context. Unless the $x$ is specified explicitly, it is drawn from distribution $\DX$.

%When the context $x$ is clear, we identify action $i$ with action $\sigma(x,i)$.

%Since, arms could potentially have different prior orders for different contexts there is no natural ordering of the arms. Instead we will introduce some notation for the prior ordering of the arms conditional on the context. Specifically, for each context $x\in \mX$
%and arm $i\in A$, let $\mu_{i,x}^0=\E[\mu_{i,x}]$. Then denote with $\sigma(x,i)$ to be the index of the $i$-th highest arm if arms are ordered in decreasing order of $\mu_{a,x}^0$: $\mu_{\sigma(x,1),x}^0> \mu_{\sigma(x,2),x}^0>\ldots > \mu_{\sigma(x,m),x}^0$. we call $\sigma(x,i)$ the $i$-th rank arm in context $x$.

\subsection{The black-box reduction and provable guarantees}

Given an arbitrary algorithm $\ALG$, the black-box reduction produces an algorithm $\ALGIC$. It proceeds in two stages: the sampling stage and the simulation stage, which extend the respective stages for BIC bandit exploration. In particular, the sampling stage does not depend on the original algorithm $\ALG$.

The main new idea is that the sampling stage considers arm-ranks instead of arms: it proceeds in phases $i=1,2,\,\ldots\;$, so that in each phase $i$ it collects samples of arm-rank $i$, and the exploit arm is (implicitly) the result of a contest between arm-ranks $j<i$. Essentially, this design side-steps the dependence on contexts.

Each stage of the black-box reduction has two parameters, $k,L\in\N$. While one can set these parameters separately for each stage, for the sake of clarity we consider a slightly suboptimal version of the reduction in which the parameters are the same for both stages. The pseudo-code is given in Algorithm~\ref{alg:BB-contextual}. The algorithm maintains the ``current" dataset $\mS$ and keeps adding various rank-samples to this dataset. The ``exploit arm" now depends on the current context, maximizing the posterior mean reward given the current dataset $\mS$.

\begin{algorithm}[ht]
\SetKwInOut{Input}{Parameters}\SetKwInOut{Output}{Output}
\Input{$k,L\in\N$; contextual bandit algorithm $\mA$}
\BlankLine
 \tcp{Sampling stage: obtains $k$ samples from each arm-rank}
\nl Initialize the dataset: $\mS=\emptyset$\;
\nl For each context $x\in \mX$: denote
    $a_x^*(\mS) = \arg\max_{a \in A}  \E[\mu_{a,x}| \mS]$\;	
\nl Recommend arm-rank $1$ to the first $k$ agents,
     add the rank-samples to $\mS$ \;
\For{each phase $i=2$ to $m$}
{
     \tcp{each phase lasts $KL$ rounds}
	\nl From the set $P$ of the next $kL$ agents, pick a set $Q$ of $k$ agents uniformly at random\;
	\nl For each agent $p\in P-Q$, recommend the ``exploit arm" $a_{x_p}^*(\mS)$\;
	\nl For each agent in $Q$, recommend arm-rank $i$\;
	\nl Add to $\mS$ the rank-samples from all agents in $Q$
}
\tcp{Simulation stage (each phase lasts $L$ rounds)}
\For{each phase $n = 1,2,\,\ldots\;$}{
\nl Pick one agent $p$ from the set $P$ of the next $L$ agents uniformly at random\;
\nl For agent $p$: send context $x_p$ to algorithm $\mA$, get back arm $a_n$ and prediction $\phi_n$\;
\nl \algTab recommend arm $a_n$ to agent $p$\;
\nl For every agent $t\in P\setminus \{p\}$: recommend the ``exploit arm" $a_{x_t}^*(\mS)$\;
\nl For every agent $t\in P$ in phase $n\geq 2$: record prediction $\phi_{n-1}$\;
\nl Return the rank sample from agent $p$ to algorithm $\mA$\;
\nl Add to $\mS$ the rank-samples from all agents in this phase\;
}
\caption{Black-box reduction with contexts, auxiliary feedback and predictions.}
\label{alg:BB-contextual}
\end{algorithm}

The black-box accesses the common prior via the following two oracles. First, given a context $x$, rank all actions according to their prior mean rewards $\mu^0_{a,x}$. Second, given a context $x$ and a set $S$ of rank-samples, compute the action that maximizes the posterior mean, i.e.,
    $\arg \max_{a\in A} \E [\mu_{a,x}|S]$.

As before, we need to restrict the common prior $\mP$ to guarantee incentive-compatibility. Our assumption talks about posterior mean rewards conditional on several rank-samples. For a particular context $x$ and arm-rank $i$, we are interested in the smallest difference, in terms of posterior mean rewards, between the arm-rank $i$ and any other arm. We state our assumption as follows:

\begin{property}
\item\label{prop:contextual}
For a given arm-rank $i\in[m]$ and parameter $k\in\N$, let $\Lambda_i^k$ be a random variable representing $k_i\geq k$ rank-samples of arm-rank $i$. Given context $x$, denote
\begin{align*}
X_{(i,j,x)}^k =
    \min_{\text{arms}\;a\neq \sigma(x,i)}\;
        \E\left[ \mu_{\sigma(x,i),x}-\mu_{a,x}  | \Lambda_1^k,\ldots, \Lambda_j^k \right].
\end{align*}
There exist prior-dependent constants $k_\mP<\infty$ and $\tau_\mP, \rho_\mP>0$ such that
\begin{align*}
\Pr[x\sim \DX]{X_{(i,j,x)}^k> \tau_\mP} \geq \rho_{\mP}
\end{align*}
for any $k\geq k_\mP$, any arm-rank $i\in [m]$, and $j\in\{i-1,m\}$.
\end{property}

The analysis is essentially the same as before. (For the sampling stage, the only difference is that the whole analysis for any single agent is done conditional on her context $x$ and uses arm-ranks instead of arms.)

\propref{prop:contextual} is used with $j=i-1$ for the sampling stage, and with $j=m$ for the simulation stage. For each stage, the respective property is discovered naturally as the condition needed to complete the proof.

\begin{theorem}[BIC]\label{thm:contextual-reduction-BIC}
Assume the common prior $\mP$ satisfies \propref{prop:contextual} with constants $(k_\mP, \tau_\mP,\rho_\mP)$. The black-box reduction is BIC (applied to any algorithm) if its parameters satisfy $k\geq k_\mP$ and $L\geq L_\mP$, where $L_\mP$ is a finite prior-dependent constant. In particular, one can take
\begin{align*}
L_\mP = 1+\sup_{x\in \mX}\max_{a,a'\in A}\frac{\mu_{a,x}^0-\mu_{a',x}^0}{\tau_\mP \cdot \rho_\mP}.
\end{align*}
\end{theorem}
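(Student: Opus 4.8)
\emph{Overview.} The plan is to reduce the contextual statement, agent by agent and \emph{conditionally on the observed context}, to the two arguments already carried out for the non-contextual reduction in Sections~\ref{sec:bb-BIC-simulation} and~\ref{sec:bb-BIC-sampling}, with arms replaced by arm-ranks throughout. Fix a round $t$, let $x=x_t$ be the context the agent observes, and let $a$ be the arm recommended to her; we must show $\E[\mu_{a,x}-\mu_{a',x}\mid I_t^x=a]\,\Pr{I_t^x=a}\ge 0$ for every competing arm $a'$. The key structural observation is that the recommendation, the partition into explore/exploit groups, and the draw of the ``dedicated'' agent all depend only on the round index and on the algorithm's internal coins, hence are independent of the contexts; and the rank-samples added to $\mS$ have their context coordinate drawn i.i.d.\ from $\DX$. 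Consequently, for each arm-rank $j$ the law of the portion of $\mS$ pertaining to arm-rank $j$ at the start of a phase is exactly that of $\Lambda_j^{k_j}$ in \propref{prop:contextual}, and the whole per-agent analysis ``factors through'' $x$. Predictions are invisible to agents, so they play no role in the constraint and the theorem holds ``for any algorithm''.

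\emph{Sampling stage.} Consider an agent in phase $i$. If she is recommended the exploit arm $a^*_x(\mS)$, the constraint holds by definition since $a^*_x(\mS)\in\arg\max_a \E[\mu_{a,x}\mid\mS]$. If she is recommended arm-rank $i$, then, exactly as in Section~\ref{sec:bb-BIC-sampling}, two disjoint events cause this: either the posterior difference $X_{(i,i-1,x)}$ of arm-rank $i$ is positive (arm-rank $i$ is the posterior-best arm for $x$), or it is $\le 0$ and the agent lies in the explore group $Q$, which has conditional probability $1/L$ and is independent of that posterior difference. Writing $X_{ij,x}=\E[\mu_{\sigma(x,i),x}-\mu_{a',x}\mid\mS]$ and $X_{i,x}=\min_{a'}X_{ij,x}\ge X_{(i,i-1,x)}$ in the notation of \propref{prop:contextual} (used with $j=i-1$), the standard computation yields the lower bound $\E[X_{i,x}\mid X_{i,x}>0]\Pr{X_{i,x}>0}\,(1-\tfrac1L)+\tfrac1L(\mu^0_{\sigma(x,i),x}-\mu^0_{a',x})$, and \propref{prop:contextual} gives $\E[X_{i,x}\mid X_{i,x}>0]\Pr{X_{i,x}>0}\ge\tau_\mP\rho_\mP$. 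Non-negativity follows once $L\ge 1+(\mu^0_{a',x}-\mu^0_{\sigma(x,i),x})/(\tau_\mP\rho_\mP)$, which is implied by $L\ge L_\mP$.

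\emph{Simulation stage.} Now the dataset $\mS$ at the start of every phase $n$ already contains at least $k$ rank-samples of \emph{every} arm-rank, so we invoke \propref{prop:contextual} with $j=m$. An agent recommended arm $a$ in phase $n$ sees it either because $a=a^*_x(\mS)$ and she is not the dedicated agent, or because she is the dedicated agent and $\ALG$ selected $a$ (the remaining cases, where $X^n_{a,x}:=\min_{a'\ne a}\E[\mu_{a,x}-\mu_{a',x}\mid\mS]\ge 0$ and she is dedicated, only help and are dropped). Using that $X^n_{a,x}$ is independent of the identity of the dedicated agent, and that on $\{X^n_{a,x}<0\}$ the integrand is non-positive, so enlarging the conditioning event can only decrease the integral --- this is precisely what lets us ignore any correlation of $\ALG$'s output with $\mS$ --- we obtain, as in Section~\ref{sec:bb-BIC-simulation}, the lower bound $\E[X^n_{a,x}\mid X^n_{a,x}\ge 0]\Pr{X^n_{a,x}\ge 0}\,\tfrac{L-2}{L}+\tfrac1L\E[X^n_{a,x}]$. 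Here $\E[X^n_{a,x}]\ge\E[\mu_{a,x}-\max_{a'\ne a}\mu_{a',x}]$, bounded below by $-\sup_{x,a,a'}(\mu^0_{a',x}-\mu^0_{a,x})$, and \propref{prop:contextual} bounds the first term below by $\tau_\mP\rho_\mP$; the expression is therefore non-negative once $L$ exceeds the stated prior-dependent constant $L_\mP$.

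\emph{Main obstacle.} The one genuinely delicate point is the reduction itself: checking that, after conditioning on the agent's context, the randomness in $\mS$, in the explore/exploit split, and in the dedicated-agent draw decomposes so that $X^n_{a,x}$ (resp.\ $X_{i,x}$) is independent of the ``which agent explores'' event and that the collected rank-samples have exactly the law of $\Lambda_j^k$ --- this is what makes \propref{prop:contextual} (which was reverse-engineered to be the hypothesis these two calculations consume) applicable with arm-ranks in place of arms. Once that is in place, the remaining work --- the two case analyses and the arithmetic optimizing $L$ --- is a routine transcription of Sections~\ref{sec:bb-BIC-simulation} and~\ref{sec:bb-BIC-sampling}. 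The secondary subtlety, as in the non-contextual setting, is that $\ALG$ consumes rank-samples and may correlate its recommendation with $\mS$; this is not handled by an independence claim but by the monotonicity (``enlarge the event, shrink the integral'') argument above.
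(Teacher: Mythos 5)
Your proof is correct and follows exactly the route the paper takes: the paper's own proof of Theorem~\ref{thm:contextual-reduction-BIC} consists precisely of the remark that the analyses of Sections~\ref{sec:bb-BIC-sampling} and~\ref{sec:bb-BIC-simulation} go through once one conditions on the agent's context and replaces arms by arm-ranks, invoking \propref{prop:contextual} with $j=i-1$ for the sampling stage and $j=m$ for the simulation stage, which is what you do (including the independence-of-the-dedicated-agent and ``enlarge the event, shrink the integral'' steps). The only caveat, inherited from the paper's statement rather than introduced by you, is that the simulation-stage arithmetic actually yields the requirement $L\geq 2+\sup_{x\in\mX}\max_{a,a'\in A}(\mu_{a,x}^0-\mu_{a',x}^0)/(\tau_\mP\,\rho_\mP)$ (mirroring Theorem~\ref{thm:bb-ic}), so the leading $1$ in the displayed $L_\mP$ should be a $2$; this does not affect the existence of a finite prior-dependent $L_\mP$.
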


The performance guarantees are similar to those in Theorem~\ref{thm:reduction-performance}.

\begin{theorem}[Performance]\label{thm:contextual-reduction-performance}
Consider the black-box reduction with parameters $k,L$, applied to some algorithm $\ALG$. The sampling stage then completes in
    $c = mLk+k$
rounds, where $m$ is the number of arms.
\begin{itemize}
\item[(a)]
Let $\widehat{U}_{\ALG}(\tau_1,\tau_2)$ be the average Bayesian-expected reward of algorithm $\ALG$ in rounds
$[\tau_1,\tau_2]$. Then
\begin{align*}
    \widehat{U}_{\ALGIC}(c+1,c+\tau) \geq  \widehat{U}_{\ALG}(1,\flr{\tau/L})
    \qquad\text{for any duration $\tau$}.
\end{align*}

%\item[(b)] Let $U_{\ALG}(t)$ be the total Bayesian-expected reward of algorithm
%$\ALG$ in the first $\flr{t}$ rounds. Then
%\begin{align*}
%U_{\ALGIC}(t) \geq  M\cdot U_{\ALG}\left(\frac{t-c_\mP}{M}\right)
%    + c_\mP\cdot \E\left[ \min_{i\in A} \mu_i \right].
%\end{align*}

\item[(c)] Let $R_{\ALG,\Pi}(t)$ be the Bayesian contextual regret of algorithm $\ALG$ in the first $t$ rounds, relative to some policy class $\Pi$. Then:
\begin{align*}
R_{\ALGIC,\,\Pi}(t) \leq L\cdot R_{\ALG,\,\Pi}\left( \flr{t/L} \right)
        + c\cdot \E_{x\sim\DX}\left[ \max_{a\in A} \mu_{a,x} - \min_{a\in A} \mu_{a,x} \right].
\end{align*}
In particular, if $\mu_{a,x} \in [0,1]$ and the original algorithm $\ALG$ achieves asymptotically optimal Bayesian contextual regret $O(\sqrt{t \log|\Pi|})$, then so does $\ALGIC$ (assuming $L$ is a constant).

\item[(c)] The prediction returned by algorithm $\ALGIC$ in each round $t>c+L$ has the same distribution as that returned by algorithm $\ALG$ in round $\flr{(t-c)/L}$.

\end{itemize}
\end{theorem}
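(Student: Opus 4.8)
\medskip
\noindent\textbf{Proof plan.}
The plan is to mirror the proof of Theorem~\ref{thm:reduction-performance} almost verbatim; the two new features --- contexts and arm-ranks --- affect only the bookkeeping. The one structural fact that drives all three parts is that \emph{in the simulation stage, the $n$-th invocation of $\ALG$ inside $\ALGIC$ is distributed, from $\ALG$'s point of view, exactly like round $n$ of a stand-alone run of $\ALG$}. Indeed, in phase $n$ the dedicated agent $p$ is chosen uniformly at random and independently of everything else; $\ALG$ is handed the context $x_p$, a fresh draw from $\DX$ that is independent of the dataset collected so far and of $\muvec$; agent $p$ then follows $\ALG$'s recommendation $a_n$ (legitimate since the reduction is BIC by Theorem~\ref{thm:contextual-reduction-BIC}); and $\ALG$ is returned the rank-sample $(x_p,a_n,r_p(a_n),f_p(a_n))$, whose reward and feedback coordinates are drawn from $\mD(\mu_{a_n,x_p})$ and $\Dfb(\mu_{a_n,x_p})$ --- exactly as in a stand-alone run. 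Hence the joint law of $\ALG$'s internal state and outputs after $n$ phases equals that after $n$ stand-alone rounds. First I would record two consequences: writing $u_n$ for the Bayesian-expected reward of $\ALG$ in its $n$-th round, the Bayesian-expected reward of agent $p$ in phase $n$ of $\ALGIC$ is exactly $u_n$; and the recorded prediction $\phi_n$ has the same law as $\ALG$'s round-$n$ prediction. The latter gives part (c): a round $t>c+L$ lies in some phase $n\ge 2$ of the simulation stage with $n-1=\flr{(t-c)/L}$, and there $\ALGIC$ outputs $\phi_{n-1}$, matching the (non-contextual) statement.

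Next I would handle the $L-1$ non-dedicated (``exploit'') agents of phase $n$. Each such agent has a context $x\sim\DX$ drawn independently of the current dataset $\mS^n$ and of $\muvec$, and is recommended the arm maximizing the posterior mean reward given $\mS^n$ and $x$. Conditioning on $\mS^n$ and averaging over $x$, the exploit agent's Bayesian-expected reward is $\E_{x\sim\DX}\bigl[\max_{a\in A}\E[\mu_{a,x}\mid\mS^n]\bigr]$. On the other hand, $\ALG$'s selection $a_n$ depends on $\mS^n$ only through the coarser sub-dataset of dedicated-agent samples, together with an independent fresh context and internal coins, so $\E[\mu_{a_n,x_p}\mid\mS^n]\le\E_{x\sim\DX}\bigl[\max_{a}\E[\mu_{a,x}\mid\mS^n]\bigr]$; taking expectations over $\mS^n$ shows the exploit agent's Bayesian-expected reward is at least $u_n$. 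Therefore each of the $L$ agents in phase $n$ has Bayesian-expected reward at least $u_n$, the phase contributes at least $Lu_n$ in total, and summing over the first $N=\flr{\tau/L}$ phases shows that the total (hence average) Bayesian-expected reward of $\ALGIC$ over rounds $[c+1,c+\tau]$ is at least $L$ times that of the first $N$ rounds of $\ALG$ --- this is part (a). (Arm-ranks are transparent here: for a fixed context $x$ the reduction acts exactly like the non-contextual one with the arms relabelled by their $x$-rank, so the whole argument is carried out conditionally on $x$.)

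Finally, part (b) follows from part (a) exactly as in Theorem~\ref{thm:reduction-performance}: over the $c$ rounds of the sampling stage the reduction has no control of the chosen actions, so I charge the worst per-round loss $\E_{x\sim\DX}[\max_a\mu_{a,x}-\min_a\mu_{a,x}]$ against the benchmark $\sup_{\pi\in\Pi}\E_x[\mu(\pi(x),x)]\le\E_x[\max_a\mu_{a,x}]$; over the simulation stage the $\flr{(t-c)/L}$ completed phases of $\ALGIC$ reproduce the same number of rounds of $\ALG$ inflated by a factor $L$, so their contribution to the contextual regret is at most $L\cdot R_{\ALG,\Pi}(\flr{t/L})$, and the partial phase and boundary rounds are absorbed in the same way as in Theorem~\ref{thm:reduction-performance}. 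Combining the two contributions gives the stated bound, and the $O(\sqrt{t\log|\Pi|})$ corollary is immediate.

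\medskip
\noindent I expect the only genuinely delicate step to be the independence bookkeeping in the contextual setting: one must verify that the context fed to $\ALG$, the contexts of the exploit agents, the choice of the dedicated agent, $\ALG$'s internal coins, and the realized rewards/feedback are mutually independent of $\muvec$ in exactly the way needed so that (i) the embedded run of $\ALG$ is distributed as a stand-alone run and (ii) the exploit arm dominates $\ALG$'s selection in Bayesian-expected reward even though the two are evaluated on different, though identically distributed, contexts. Once this is pinned down, everything else is routine and essentially identical to the non-contextual case.
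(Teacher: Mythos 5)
Your proposal is correct and follows essentially the same route as the paper, which simply asserts that the contextual guarantees are proved like Theorem~\ref{thm:reduction-performance}: the dedicated rounds faithfully simulate a stand-alone run of $\ALG$, the exploit arm dominates $\ALG$'s selection in posterior mean conditional on the current dataset, and the sampling stage is charged the worst-case per-round loss. The one contextual subtlety you flag — that the exploit agent and the dedicated agent see different but i.i.d.\ contexts, so the domination must be averaged over $x\sim\DX$ after conditioning on $\mS^n$ — is exactly the right point, and your handling of it is sound.
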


\OMIT{We make technical assumptions about the prior distributions. Intuitively, we consider the case that context is fairly likely to repeat often, and in such cases our assumptions would give interesting performance guarantees.
The incentive compatibility would always be guaranteed.}

\section{Properties of the common prior}
\label{sec:properties}

\propref{prop:general-persuasion}, our main assumption on the common prior for the black-box reduction, is stated in a rather abstract manner for the sake of generality (e.g., so as to allow correlated priors), and in order to avoid some excessive technicalities. In this section we discuss some necessary or sufficient conditions. First, we argue that \emph{conditioning on more samples can only help}, i.e. that if a similar property holds conditioning on less samples, it will still hold conditioning on more samples. This is needed in the incentives analysis of the simulation stage, and to show the ``almost necessity" for two arms. Second, we exhibit some natural sufficient conditions if the prior is independent across arms, and specifically prove Lemma~\ref{lm:sufficient-conditions-on-P}.

\subsection{Conditioning on more samples can only help}
\label{sec:properties-more}

Fix arm $i$ and vector $\vec{a} = (a_1 \LDOTS a_m)\in \N^m$, where $m$ is the number of arms. Let $\Lambda_{\vec{a}}$ be a random variable representing the first $a_j$ samples from each arm $j$. Denote
\begin{align*}
X_{i,\vec{a}} = \min_{\text{arms $j\neq i$}}\; \E\left[\mu_i-\mu_{j}|\Lambda_{\vec{a}} \right].
\end{align*}
We are interested in the following property:
\begin{property}
\item \label{prop:switch-generic}
There exist prior-dependent constants $\tau_\mP, \rho_\mP>0$ such that
    $\Pr{X_{i,\vec{a}} \geq \tau_\mP} \geq \rho_\mP$.
\end{property}

\noindent We argue that \propref{prop:switch-generic} stays true if we condition on more samples.

\begin{lemma}
Fix arm $i$ and consider vectors $\vec{a},\vec{b}\in \N^m$ such that $a_j\leq b_j$ for all arms $j$. If \propref{prop:switch-generic} holds for arm $i$ and vector $\vec{a}$, then it also holds for arm $i$ and vector $\vec{b}$.
\end{lemma}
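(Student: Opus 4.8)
The plan is to exploit that $\Lambda_{\vec b}$ refines $\Lambda_{\vec a}$ — since the first $a_j$ samples of each arm $j$ are among its first $b_j$, we have $\sigma(\Lambda_{\vec a})\subseteq\sigma(\Lambda_{\vec b})$ — reduce the claim to a pairwise statement that is exactly Lemma~\ref{lm:propA-app-aux}, and then reassemble the pairs. Let $\tau_\mP,\rho_\mP$ witness \propref{prop:switch-generic} for $\vec a$, and put $E=\{X_{i,\vec a}\geq\tau_\mP\}$. This event is $\sigma(\Lambda_{\vec a})$-measurable, hence $\sigma(\Lambda_{\vec b})$-measurable, $\Pr{E}\geq\rho_\mP$, and on $E$ we have $\E[\mu_i-\mu_j\mid\Lambda_{\vec a}]\geq X_{i,\vec a}\geq\tau_\mP$ for every arm $j\neq i$. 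We are free to exhibit possibly smaller constants $\tau_\mP'',\rho_\mP''>0$ for $\vec b$; we will take $\tau_\mP''=\tau_\mP/2$.

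First, the pairwise step. Fix $j\neq i$ and write $Y_j^{\vec c}=\E[\mu_i-\mu_j\mid\Lambda_{\vec c}]$. Applying Lemma~\ref{lm:propA-app-aux} with $f(\mu)=\mu_i-\mu_j$, $S=\Lambda_{\vec a}$, $T=\Lambda_{\vec b}$ yields $\int_{\{Y_j^{\vec b}>0\}}Y_j^{\vec b}\;\geq\;\int_{\{Y_j^{\vec a}>0\}}Y_j^{\vec a}\;\geq\;\tau_\mP\Pr{E}\;\geq\;\tau_\mP\rho_\mP=:c>0$, the middle inequality because $Y_j^{\vec a}\geq\tau_\mP$ on $E\subseteq\{Y_j^{\vec a}>0\}$. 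Using that posterior mean rewards are bounded — this is where the standing boundedness of the $\mu_k$'s, say $\mu_k\in[a,b]$, is used — so $Y_j^{\vec b}\leq b-a$, a one-line reverse-Markov estimate ($c\leq \int_{\{0<Y_j^{\vec b}<c/2\}}Y_j^{\vec b}+\int_{\{Y_j^{\vec b}\geq c/2\}}Y_j^{\vec b}\leq c/2+(b-a)\Pr{Y_j^{\vec b}\geq c/2}$) gives $\Pr{Y_j^{\vec b}\geq c/2}\geq c/(2(b-a))>0$.

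The remaining, and main, difficulty is the \emph{joint} step: $X_{i,\vec b}=\min_{j\neq i}Y_j^{\vec b}$ requires all $m-1$ pairwise gaps to be large simultaneously, and a union bound over $j$ of the previous estimate is vacuous once $m\geq 3$. I would condition on $\Lambda_{\vec a}$. Writing $W_k=\E[\mu_k\mid\Lambda_{\vec b}]$, the tower property gives $\E[W_k\mid\Lambda_{\vec a}]=\E[\mu_k\mid\Lambda_{\vec a}]$, and on $E$ these conditional means separate arm $i$ from the rest by $\tau_\mP$. A one-sided reverse-Markov bound (same computation, conditioned on $\Lambda_{\vec a}$) shows, for $\delta=\tau_\mP/4$ and each arm $k$, that $W_i$ lies above $\E[\mu_i\mid\Lambda_{\vec a}]-\delta$ with $\Lambda_{\vec a}$-conditional probability $\geq\tfrac{\delta}{\delta+(b-a)}$, and likewise $W_j$ lies below $\E[\mu_j\mid\Lambda_{\vec a}]+\delta$ with probability $\geq\tfrac{\delta}{\delta+(b-a)}$. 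When $\mP$ is independent across arms the $W_k$ are conditionally independent, so the intersection of these $m$ events has $\Lambda_{\vec a}$-conditional probability $\geq\rho_0:=\bigl(\tfrac{\tau_\mP/4}{\tau_\mP/4+(b-a)}\bigr)^m$, and on it $W_i-W_j\geq \E[\mu_i-\mu_j\mid\Lambda_{\vec a}]-2\delta\geq\tau_\mP/2$ for every $j$, i.e.\ $X_{i,\vec b}\geq\tau_\mP/2$. Integrating over $E$ gives $\Pr{X_{i,\vec b}\geq\tau_\mP/2}\geq\rho_\mP\rho_0>0$, which is \propref{prop:switch-generic} for $\vec b$ with $\rho_\mP''=\rho_\mP\rho_0$.

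For correlated priors the independence split of the joint event is unavailable, and this is the step I expect to be the real obstacle. The fix I would use is to argue that finitely many extra samples (the part of $\Lambda_{\vec b}$ beyond $\Lambda_{\vec a}$) cannot fully concentrate the posterior: using mutual absolute continuity / full support of the parameterized reward distributions over the bounded parameter range, there is, conditionally on $\Lambda_{\vec a}$, a uniformly-positive-probability configuration of those extra samples under which every $W_k$ stays within $\tau_\mP/4$ of $\E[\mu_k\mid\Lambda_{\vec a}]$, and on that configuration $X_{i,\vec b}\geq\tau_\mP/2$ just as above. The remaining ingredients — the reduction to pairs, the invocation of Lemma~\ref{lm:propA-app-aux}, and the reverse-Markov bookkeeping — are routine.
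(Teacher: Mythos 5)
Your diagnosis that the $\min$ over arms $j\neq i$ is the real difficulty is correct, and your argument for the case you actually complete --- prior independent across arms, means supported in a bounded interval $[a,b]$ --- is sound: conditioning on $\Lambda_{\vec a}$ and on $E=\{X_{i,\vec a}\geq\tau_\mP\}$, the reverse-Markov bound on each $W_k=\E[\mu_k\mid\Lambda_{\vec b}]$ and the conditional independence of the $W_k$ given $\Lambda_{\vec a}$ do yield $\Pr{X_{i,\vec b}\geq\tau_\mP/2}\geq\rho_\mP\rho_0>0$ (and your preliminary pairwise step via Lemma~\ref{lm:propA-app-aux} then becomes redundant). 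The gap is that this is not the lemma being claimed. The lemma and \propref{prop:switch-generic} are stated for an arbitrary common prior: the entire point of the abstract formulation of \propref{prop:general-persuasion} is to accommodate correlated priors, and the paper explicitly does not assume bounded means in this part of the analysis. Both hypotheses you import --- boundedness of the $\mu_k$ and, more seriously, independence across arms --- are therefore not available. Your proposed repair for the correlated case is not an argument: asserting that there is ``a uniformly-positive-probability configuration of the extra samples under which every $W_k$ stays within $\tau_\mP/4$ of its $\Lambda_{\vec a}$-posterior mean'' is exactly the quantitative statement that would need to be proved (individual configurations have probability zero for continuous feedback, and nothing rules out that on every positive-probability set of extra samples at least one posterior mean moves a lot), and it again invokes structure (full support, mutual absolute continuity, bounded parameter range) absent from the statement. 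So the proposal proves the lemma only under substantial extra hypotheses and leaves the general case open.

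For comparison, the paper argues by contraposition in a few lines with no structural assumptions: if the property fails for $\vec b$, then applying the failure with thresholds $\tau=\rho=1/\ell^2$ and Borel--Cantelli forces $X_{i,\vec b}\leq 0$ almost surely; since $\sigma(\Lambda_{\vec a})\subseteq\sigma(\Lambda_{\vec b})$, it then passes to $\Lambda_{\vec a}$ via the tower property to conclude $X_{i,\vec a}\leq 0$ almost surely, contradicting the hypothesis for $\vec a$. That route needs neither boundedness nor independence, which is why it matches the generality of the statement. I will note in fairness that the transfer step there writes $X_{i,\vec a}=\E[X_{i,\vec b}\mid\Lambda_{\vec a}]$, whereas for $m>2$ one only has $\E[X_{i,\vec b}\mid\Lambda_{\vec a}]\leq X_{i,\vec a}$ (the minimum does not commute with conditional expectation), which is the unhelpful direction for deducing $X_{i,\vec a}\leq 0$; so the joint-event difficulty you isolated is genuine and is glossed over rather than resolved in the paper's own proof. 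But your resolution of it covers only the independent, bounded case, so as a proof of the stated lemma the proposal is incomplete.
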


\begin{proof}
For the sake of contradiction, assume that the property for this arm holds for $\vec{a}$, but not for $\vec{b}$. Denote
    $X = X_{i,\vec{a}}$ and $Y = X_{i,\vec{b}}$.

Then, in particular, we have
    $\Pr{Y>1/\ell^2}< 1/\ell^2$
for each $\ell\in\N$. It follows that $\Pr{Y\leq 0}=1$. (To prove this, apply Borel-Cantelli Lemma with sets $S_\ell=\{Y>1/\ell^2\}$, noting that $\sum_\ell \Pr{S_\ell}$ is finite and $\limsup_\ell S_\ell= \{Y>0\}$.)

The samples in $\Lambda_{\vec{a}}$ are a subset of those for $\Lambda_{\vec{b}}$, so
    $\sigma(\Lambda_{\vec{a}}) \subset \sigma(\Lambda_{\vec{b}})$,
and consequently
    $X = \E[Y | \Lambda_{\vec{a}}]$.
Since $Y\leq 0$ a.s., it follows that $X\leq 0$ a.s. as well, contradicting our assumption.
\end{proof}

Therefore, \propref{prop:general-persuasion} implies \propref{prop:switch} with the same constant $k_\mP$; this is needed in the incentives analysis of the simulation stage. Also, \propref{prop:persuasion} implies \propref{prop:general-persuasion} with the same $k_\mP$, so the latter property is (also) ``almost necessary" for two arms, in the sense that it is necessary for a strongly BIC algorithm (see Lemma~\ref{lm:propA-app}).

\subsection{Sufficient conditions: Proof of Lemma~\ref{lm:sufficient-conditions-on-P}}

For each arm $i$, let $S_i^k$ be a random variable that represents $n$ samples of arm $i$, and let
    $Y_i^k= \E[\mu_i | S_i^k]$
be the posterior mean reward of this arm conditional on this random variable.

Since the prior is independent across arms, \refeq{eq:prop:general-persuasion} in \propref{prop:general-persuasion} can be simplified as follows:
\begin{align}\label{eq:lm:sufficient-conditions-on-P-1}
\Pr{\mu_i^0 - \max \left\{ Y_1^k \LDOTS Y_{i-1}^k;\, \mu_{i+1}^0 \right\}  \geq \tau_\mP}\geq \rho_{\mP}
    \qquad\forall k>k_\mP, i\in A.
\end{align}

Fix some $\tau_\mP$ such that $0< \tau_\mP < \min_{j\in A} (\mu_j^0 - \mu_{j+1}^0)$. Note that the right-hand side is strictly positive because the prior mean rewards are all distinct (condition (iii) in the lemma).

Since
    $(\mu_i:\, i\in A)$
are mutually independent, and for fixed $k$ each $S_i^k$, $i\in A$ is independent of everything else conditional on $\mu_i$, we have that $(Y_i^k:\, i\in A)$ are mutually independent. Thus, one can rewrite \refeq{eq:lm:sufficient-conditions-on-P-1} as follows:
\begin{align*}
\prod_{j< i} \Pr{Y_j^k\leq \mu_i^0 -\tau_\mP} \geq \rho_{\mP}
    \qquad\forall k>k_\mP, i\in A.
\end{align*}
Therefore it suffices to prove that $\tau_\mP$ as fixed above, and some prior-dependent constant $k_\mP$, for any two arms $j< i$ we have:
\begin{align}\label{eq:lm:sufficient-conditions-on-P-2}
\Pr{Y_j^k\leq \mu_i^0 -\tau_\mP}\geq q>0
\qquad \forall k>k_\mP.
\end{align}
where $q$ is some prior-dependent constant.
Observe that event $\{Y_j^k\leq \mu_i^0 -\tau_\mP\}$ is implied by the event
    $\{Y_j^k-\mu_j\leq \eps \} \cap \{ \mu_j\leq \mu_i^0- \tau_{\mP}-\eps\}$,
where $\eps>0$ is a prior-dependent constant to be fixed later.
Invoking also the union bound we get:
\begin{align}
\Pr{ Y_j^k\leq \mu_i^0 -\tau_\mP}
    &\geq \Pr{ (Y_j^k-\mu_j\leq \eps) \cap (\mu_j\leq \mu_i^0- \tau_{\mP}-\eps)}
        \nonumber\\
    &\geq \Pr{ \mu_j\leq \mu_i^0- \tau_{\mP}-\eps} -  \Pr{Y_j^k-\mu_j>\eps}.
\label{eq:lm:sufficient-conditions-on-P-3}
\end{align}
By the full support assumption (condition (ii) in the lemma), the first summand in \eqref{eq:lm:sufficient-conditions-on-P-3} is bounded from below by some constant $\rho>0$  when
    $\mu_i^0 - \tau_{\mP}-\eps>b$.
The latter is achieved when $\tau_{\mP}, \eps$ are sufficiently small
(because $\mu_i^0>a$, also by the full support assumption). Fix such $\tau_{\mP}, \eps$ from here on.

It suffices to bound the second summand in \eqref{eq:lm:sufficient-conditions-on-P-3} from above by $\rho/2$. This follows by a convergence property \refprop{prop:bayes-convergence} which we articulate below. By this property, there exists a prior-dependent constant $k^*$ such that $\Pr{Y_j^k-\mu_j >\eps} \leq \rho/2$ for any $k>k^*$. Thus, we have proved
    \refeq{eq:lm:sufficient-conditions-on-P-2}
for $k_\mP = k^*$ and $q = \rho/2$. This completes the proof of the lemma.

%    $\Pr{Y_j^k-\mu_j>\eps}\rightarrow 0$ as $k\rightarrow \infty$.

It remains to state and prove the convergence property used above.

\begin{property}
\item \label{prop:bayes-convergence}
For each arm $i$, $Y_i^k$ converges to $\mu_i$ in probability as $k\to\infty$. That is,
\begin{align*}
\forall i\in A\quad
\forall\eps,\delta>0 \quad
\exists k^*(\eps,\delta)<\infty \quad
\forall k\geq k^*(\eps,\delta) \quad
        \Pr{ \left|Y_i^k - \mu_i \right| > \eps}\leq \delta.
\end{align*}
\end{property}

This property follows from the convergence theorem in \citet{Doob49}, because the realized rewards are bounded (condition (iv) in the lemma). For a more clear exposition of the aforementioned theorem, see e.g., Theorem 1 in \citet{Ghosal-review96}. There the sufficient condition is that the parameterized reward distribution $\mD_i(\mu_i)$ of each arm $i$ is pointwise dominated by an integrable random variable; the latter is satisfied if the realized rewards are bounded.

\section{Conclusions and open questions}
\label{sec:conclusions}

%We have largely resolved the problem of incentive-compatible exploration,
We have resolved the asymptotic regret rates achievable with incentive compatible exploration for a constant number of actions, as outlined in the Introduction. We provided an algorithm for BIC bandit exploration with asymptotically optimal regret (for a constant number of actions), and a general ``black-box" reduction from arbitrary bandit algorithms to BIC ones that works in a very general explore-exploit setting and increases the regret by at most a constant multiplicative factor.

% ML open questions:
% dependence on #arms, structure on arms, correlated priors
% optimal algorithm given BIC constraints

% MD open questions:
% (semi-)public interactions, private priors and elicitation thereof

This paper sets the stage for future work in several directions. First, the most immediate technical questions left open by our work is whether one can achieve Bayesian regret with constants that do not depend on the prior (for a fixed $m$, the number of actions), and with only a polynomial dependence on $m$. Second, one would like to handle bandit problems with a large number of actions and a known structure on the action set.%
\footnote{Structured action sets have been one of the major directions in the literature on MAB in the past decade. Several structures have been studied, e.g., linearity, convexity, Lipschitzness, and combinatorial structure; see \citet{Bubeck-survey12} for background and references.}
This would require handling agents' priors with complex correlations across actions, which do not necessarily satisfy the properties assumed in this paper. On the other hand, algorithms for such problems typically explore only a sparse subset of the arms, so perhaps the assumptions on the prior could be relaxed. 
Third, the mechanism design setup could be generalized to incorporate constraints on how much information about the previous rounds must be revealed to the agents. Such constraints, arising for practical reasons in the Internet economy and for legal reasons in medical decisions, typically work against the information asymmetry, and hence make the planner's problem more difficult. Fourth, while our detail-free result does not require the algorithm to have full knowledge of the priors, ideally the planner would like to start with little or no knowledge of the priors, and elicit the necessary knowledge directly from the agents. Then the submitted information becomes subject to agent's incentives, along with the agent's choice of action. Finally, it would be tantalizing to bring the theory of BIC exploration closer to the design of large-scale medical trials (see Section~\ref{sec:medical} for discussion). In particular, one needs to account for heterogenous, risk-averse patients whose beliefs and risk preferences are largely unknown (and probably need to be elicited), and whose decisions may deviate from the standard economic assumptions.

% Acknowledgments
%\begin{acks}

\ACKNOWLEDGMENT{
The authors are grateful to Johannes H\"{o}rner and Yeon-Koo Che for insightful discussions, and to Sebastien Bubeck for a brief collaboration during the initial stage of this project.
}
%\end{acks}

% Bibliography
\bibliographystyle{plainnat}
\bibliography{bib-abbrv,bib-bandits,bib-AGT,bib-slivkins,bib-crowdsourcing,bib-medical,bib-ML,bib-random,bib-fairness}

\newpage
% Appendix
\begin{APPENDICES}
\SingleSpacedXI

\section{A simpler detail-free algorithm for two arms}
\label{sec:DF-two-arms}

The detail-free algorithm becomes substantially simpler for the special case of two arms. For the sake of clarity, we provide a standalone exposition. The sampling stage is similar to Algorithm~\ref{alg:block}, except that the exploit arm is chosen using the sample average reward of arm $1$ instead of its posterior mean reward, and picking arm $2$ as the exploit arm only if it appears better by a sufficient margin. The racing stage is a simple ``race" between the two arms: it alternates the two arms until one of them can be eliminated with high confidence, regardless of the prior, and uses the remaining arm from then on.

Below we provide the algorithm and analysis for each stage separately. The corresponding lemmas can be used to derive
Theorem~\ref{thm:DF} for two arms, essentially same way as in Section~\ref{sec:DF-wrapup}.

Compared to the general case of $m>2$ arms, an additional provable guarantee is that even computing the threshold $N_\mP$ exactly does not require much information: it only requires knowing the prior mean rewards of both arms, and evaluating the CDFs for $\mu_1$ and $\mu_1-\mu_2$ at a single point each.

\subsection{The sampling stage for two arms}
\label{sec:DF-sampling-two}

The initial sampling algorithm is similar in structure to that in Section~\ref{sec:building-block}, but chooses the ``exploit" arm $a^*$ in a different way (after several samples of arm $1$ are drawn). Previously, $a^*$ was the arm with the highest posterior mean given the collected samples, whereas now the selection of $a^*$ is based only on the sample average $\sa{\mu}_1$ of the previously drawn samples of arm $1$.

We use $\sa{\mu}_1$ instead of the posterior mean of arm $1$, and compare it against the prior mean $\mu_2^0$. (Note that $\mu_2^0$ is also the posterior mean of arm $2$, because the prior is independent across arms.) We still need
    $ \E[ \mu_i -\mu_j| a^*=i]$
to be positive for all arms $i,j$, even though $\sa{\mu}_1$ is only an imprecise estimate of the posterior mean of $\mu_1$. For this reason, we pick $a^*=2$ only if $\mu_2^0$ exceeds $\sa{\mu}_1$ by a sufficient margin.  Once $a^*$ is selected, it is recommended in all of the remaining rounds, except for a few rounds chosen u.a.r. in which arm $2$ is recommended. See Algorithm~\ref{alg:dfblock} for the details.

\begin{algorithm}[ht]
\SetKwInOut{Input}{Parameters}\SetKwInOut{Output}{Output}
\Input{$k,L\in\N$ and $C\in(0,1)$.}
\BlankLine
\nl Let $\sa{\mu}_1$ be the sample average of the resp. rewards\;
%\nl Let $S_1^k=\{r_1^1,\ldots,r_1^k\}$ be the set of samples
%    and $\sa{\mu}_1 = \frac{1}{k}\sum_{t=1}^{k}r_1^t$ the sample average\;
\nl {\bf if} $\sa{\mu}_1 \leq \mu_2^0-C$ {\bf then} $a^*=2$ {\bf else} $a^*=1$\;
\nl From the set $P$ of the next $L\cdot k$ agents, pick a set $Q$ of $k$ agents uniformly at random\;
\nl Every agent $p\in P-Q$ is recommended arm $a^*$\;
\nl Every agent $p\in Q$ is recommended arm $2$
\caption{The sampling stage: samples both arms $k$ times.}
\label{alg:dfblock}
\end{algorithm}

We prove that the algorithm is BIC as long as $C\approx \tfrac23\,\mu_2^0$, and parameters $k^*,L$ are larger than some prior-dependent thresholds. We write out these thresholds explicitly, parameterizing them by the ratio $\lambda = C/\mu_2^0$.

\begin{lemma}\label{lem:dfblock-ic}
Consider BIC bandit exploration with two arms.
Algorithm $\dfblock{k}$ with parameters $(C,k^*,L)$ completes in
    $Lk+\max(k,k^*)$
rounds. The algorithm is BIC if \propref{prop:DF} holds and $\mu_2^0>0$, and the parameters satisfy
\begin{align}
\lambda &\triangleq C/\mu_2^0 \in (0,\tfrac23), \nonumber\\
k^*     &\geq 2\; (\lambda\cdot\mu_2^0)^{-2}\; \log \tfrac{4}{\beta(\lambda)},
    \label{eq:lem:dfblock-ic:0}\\
L       &\geq 1+\tfrac{8}{\beta(\lambda)} \left(\mu_1^0-\mu_2^0\right), \nonumber
\end{align}
where
    $\beta(\lambda) = \lambda\cdot \mu_2^0 \cdot \Pr{\mu_1/\mu_2^0 \leq 1-\tfrac{3\lambda}{2}}$.
\end{lemma}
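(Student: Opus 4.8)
The plan is to read off the round count from the algorithm and then verify incentive-compatibility recommendation by recommendation. Arm~$1$ is recommended to the first $\max(k,k^*)$ agents (to form $\sa{\mu}_1$), after which there are $L$ blocks of $k$ agents, i.e.\ $Lk$ ``phase'' rounds, so the algorithm completes in $Lk+\max(k,k^*)$ rounds. For BIC there are three kinds of recommendations: (i)~arm~$1$ in the initial rounds; (ii)~arm~$1$ in the phase; (iii)~arm~$2$ in the phase. Case~(i) is immediate since $\mu_1^0\ge\mu_2^0$ and nothing has been revealed. Note that, unlike in Algorithm~\ref{alg:block}, the exploit arm $a^*$ is chosen by thresholding the \emph{sample average} $\sa{\mu}_1$ against $\mu_2^0$ rather than by maximizing a posterior mean, so recommending $a^*$ is not automatically BIC; cases (ii) and (iii) require real work.

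I would reduce both (ii) and (iii) to a single quantitative estimate. By \propref{prop:DF} the prior is independent across arms, so conditioning on arm-$1$ samples leaves the posterior mean of $\mu_2$ equal to $\mu_2^0$, and the event $\hat{\Ev}:=\{\sa{\mu}_1\le\mu_2^0-C\}=\{a^*=2\}$ lies in the $\sigma$-algebra generated by arm-$1$ samples; hence $\E[(\mu_2-\mu_1)\mathbf{1}_E]=\E[(\mu_2^0-\mu_1)\mathbf{1}_E]$ for every event $E$ in that $\sigma$-algebra. Set $W:=\E[(\mu_2^0-\mu_1)\mathbf{1}\{\hat{\Ev}\}]$. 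Since the ``explore'' event $\{p\in Q\}$ is independent of the rewards and has probability $1/L$, a short computation gives, for~(iii), $\E[\mu_2-\mu_1\mid\rec{2}{p}]\Pr{\rec{2}{p}}=(1-\tfrac1L)W+\tfrac1L(\mu_2^0-\mu_1^0)$, which is nonnegative precisely when $L\ge 1+(\mu_1^0-\mu_2^0)/W$ (provided $W>0$); and, for~(ii), $\E[\mu_1-\mu_2\mid\rec{1}{p}]\Pr{\rec{1}{p}}=(1-\tfrac1L)\big((\mu_1^0-\mu_2^0)+W\big)$, which is nonnegative as soon as $W\ge 0$. So everything reduces to the estimate $W\ge\beta(\lambda)/8$, and substituting it into the bound for~(iii) reproduces exactly the condition on $L$ in the lemma.

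To prove $W\ge\beta(\lambda)/8$ I would set $\eps=C/2$, let $\mC=\{|\sa{\mu}_1-\mu_1|\le\eps\}$ be the Chernoff event, and let $\Ev=\{\mu_1\le\mu_2^0-\tfrac{3C}{2}\}=\{\mu_1/\mu_2^0\le 1-\tfrac{3\lambda}{2}\}$ be the ``true-mean'' analog of $\hat{\Ev}$, so that $\beta(\lambda)=C\cdot\Pr{\Ev}$; here $\Pr{\Ev}>0$ by the full-support assumption together with $\lambda<\tfrac23$, which forces $\mu_2^0(1-\tfrac{3\lambda}{2})>0$. By Theorem~\ref{thm:chernoff} and the choice of $k^*$ in \eqref{eq:lem:dfblock-ic:0}, $\Pr{|\sa{\mu}_1-\mu_1|\ge\eps\mid\mu_1}\le\beta(\lambda)/2$, and applying the same bound with deviation $C$ gives $\Pr{|\sa{\mu}_1-\mu_1|\ge C\mid\mu_1}\le 2(\beta(\lambda)/4)^4$. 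Now split $\mu_2^0-\mu_1$ into positive and negative parts. On $\Ev\cap\mC$ one has $\sa{\mu}_1\le\mu_1+\eps\le\mu_2^0-C$, so $\Ev\cap\mC\subseteq\hat{\Ev}$ and $\mu_2^0-\mu_1\ge\tfrac{3C}{2}$ there; hence the positive part contributes at least $\tfrac{3C}{2}\Pr{\Ev\cap\mC}\ge\tfrac{3C}{2}(1-\tfrac{\beta(\lambda)}{2})\Pr{\Ev}=\tfrac{3\beta(\lambda)}{2}(1-\tfrac{\beta(\lambda)}{2})$. For the negative part, $\hat{\Ev}\cap\{\mu_1>\mu_2^0\}$ forces $|\sa{\mu}_1-\mu_1|>C$ (since then $\sa{\mu}_1\le\mu_2^0-C<\mu_1-C$), so $\E[(\mu_1-\mu_2^0)^+\mathbf{1}\{\hat{\Ev}\}]\le 2(\beta(\lambda)/4)^4$. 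Combining, $W\ge\tfrac{3\beta(\lambda)}{2}(1-\tfrac{\beta(\lambda)}{2})-2(\beta(\lambda)/4)^4\ge\tfrac{\beta(\lambda)}{8}$, using $\beta(\lambda)\in(0,1]$ (indeed $\beta(\lambda)=C\Pr{\Ev}\le\mu_2^0\le 1$). The main obstacle is precisely the negative-part region $\mu_1>\mu_2^0$: a crude Chernoff bound on $\hat{\Ev}$ with the small slack $\eps$ would control this region only by $O(\beta(\lambda))$, of the same order as the positive gain $\approx\tfrac{3C}{2}\Pr{\Ev}$, and could cancel it. The resolution — and the technical heart of the proof — is that whenever $\mu_1>\mu_2^0$ the selection rule $\sa{\mu}_1\le\mu_2^0-C$ requires an empirical deviation of the \emph{full} margin $C$, which is exponentially smaller than $\beta(\lambda)$ once $k^*$ satisfies \eqref{eq:lem:dfblock-ic:0}; using two separate Chernoff slacks (the small $\eps$ to link $\Ev$ with $\hat{\Ev}$, and the full margin $C$ to kill the wrong-sign contribution) is what makes the estimate go through with the stated constants.
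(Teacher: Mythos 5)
Your proof is correct and follows essentially the same route as the paper: both arguments reduce BIC to the single lower bound $\E\left[(\mu_2^0-\mu_1)\,\mathbf{1}\{\sa{\mu}_1\le\mu_2^0-C\}\right]\ge\beta(\lambda)/8$ (the paper packages this Chernoff estimate as Lemma~\ref{lem:lower}, applied with $\zeta=\kappa=\tfrac12$ around the event $\{\mu_1\le\mu_2^0-\tfrac{3C}{2}\}$), and then plug it into the same decomposition over the explore/exploit events. The only divergence is in the bookkeeping for the wrong-sign contribution: you restrict it to $\{\mu_1>\mu_2^0\}$ and charge a deviation of the full margin $C$, whereas the paper keeps the slack $C/2$ throughout and simply bounds the integrand below by $-1$ --- both give the stated constants, so your remark that the small-slack bound ``could cancel'' the positive gain is more pessimistic than necessary.
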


\begin{proof}
Let $Z=\mu_2-\mu_1$. Consider an agent $p>k$.
%and let $\rec{2}{p}$ be the event that agent $p>k$ is recommended arm $2$.
As in the proof of Lemma \ref{lem:block-ic} it suffices to show that
    $\E[Z|\rec{2}{p}]\; \Pr{\rec{2}{p}}\geq 0$.

Denote
    $\Ev_1=\{\mu_1\leq \mu_2^0-C\}$.
As in the proof of Lemma \ref{lem:block-ic},
\begin{align*}
\E[Z|\rec{2}{p}]\Pr{\rec{2}{p}} = \E[Z | \Ev_1] \Pr{\Ev_1}\cdot \left(1-\tfrac{1}{L}\right) + \E[Z]\cdot \tfrac{1}{L}.
\end{align*}
Thus for the algorithm to be incentive compatible we need to pick $L$:
\begin{align}\label{eq:lem:dfblock-ic:1}
L \geq 1+\frac{\mu_1^0-\mu_2^0}{\E[Z|\Ev_1]\Pr{\Ev_1}}.
\end{align}

It remains to lower-bound the quantity $\E[Z|\Ev_1]\;\Pr{\Ev_1}$. Since the means $\mu_i$ are independent,
\begin{equation*}
\E[Z|\Ev_1]\Pr{\Ev_1}=
    \E[\mu_2^0-\mu_1 | \mu_2^0-\sa{\mu}_1\geq C]\; \Pr{\mu_2^0-\sa{\mu}_1\geq C}.
\end{equation*}
Denote $X=\mu_2^0-\mu_1$ and $\sa{x}^k = \mu_2^0-\mu_1^k$. Observe that $\sa{x}_k$ is the sample average of $k$ i.i.d. samples from some distribution with mean $X$. Then the quantity that we want to lower bound is:
\begin{equation}\label{eq:lem:dfblock-ic:2}
\E[Z|\Ev_1]\;\Pr{\Ev_1}=\E[X| \sa{x}_k\geq C] \;\Pr{\sa{x}_k\geq C}.
\end{equation}
We will use Chernoff-Hoeffding Bound to relate the right-hand side with quantities that are directly related to the prior distribution $\mP$. More precisely, we use a corollary: Lemma~\ref{lem:lower}, which we state and prove in Appendix~\ref{app:chernoff}.

If we apply Lemma~\ref{lem:lower} with $C=\lambda\cdot \mu_2^0$ and $\zeta=\kappa=\frac{1}{2}$, then for
    $k\geq 2  (\lambda\cdot\mu_2^0)^{-2}\; \log \tfrac{4}{\beta(\lambda)}$
the right-hand side of \refeq{eq:lem:dfblock-ic:2} is at least $\tfrac18 \beta(\lambda)$. Therefore,
    $\E[Z|\Ev_1]\;\Pr{\Ev_1} \geq \tfrac18 \beta(\lambda)$.
Plugging this back into \refeq{eq:lem:dfblock-ic:1}, it follows that the conditions \eqref{eq:lem:dfblock-ic:0} suffice to guarantee BIC.
\end{proof}

%\begin{equation*}
%k\geq \frac{-2 \log\left( \frac{\lambda \cdot \mu_2^0}{4} \Pr{\mu_2^0-\mu_1\geq \frac{3\lambda \mu_2^0}{2}}\right)}{\lambda^2 (\mu_2^0)^2}=\frac{-2 \log\left( \frac{\lambda \cdot \mu_2^0}{4} \Pr{\mu_1\leq (1-\frac{3\lambda}{2})\mu_2^0}\right)}{\lambda^2 (\mu_2^0)^2}
%\end{equation*}
%we have $\E[Z|\Ev_1]\Pr{\Ev_1}\geq \frac{\lambda \mu_2^0}{8}\cdot \Pr{\mu_1 \leq \left(1-\frac{3\lambda}{2}\right)\mu_2^0}$. This concludes the proof.

\subsection{The racing stage for two arms}
\label{sec:DF-racing-two}

The racing phase alternates both arms until one of them can be eliminated with high confidence. Specifically, we divide time in phases of two rounds each, in each phase select each arm once, and after each phase check whether
    $|\sa{\mu}_1^n-\sa{\mu}_2^n|$,
is larger than some threshold, where $\sa{\mu}_i^n$ is the sample average of arm $i$ at the beginning of phase $n$. If that happens then the arm with the higher sample average ``wins the race", and we only pull this arm  from then on. The threshold needs to be sufficiently large to ensure that the ``winner" is indeed the best arm with very high probability, regardless of the prior. The pseudocode is in Algorithm~\ref{alg:arms-elim}.

\begin{algorithm}[ht]
\SetKwInOut{Input}{Input}\SetKwInOut{Output}{Output}
\Input{parameters $k\in\N$ and $\theta\geq 1$; time horizon $T$.}
\Input{$k$ samples of each arm $i$, denoted $r_i^1 \LDOTS r_i^k$.}
\BlankLine
\nl Let $\hat{\mu}_i^k= \frac{1}{k}\sum_{t=1}^{k}r_i^t$ be the sample average for each arm $i$\;
\nl Split remainder into consecutive phases of two rounds each,
    starting from phase $n=k$\;
\While{$|\hat{\mu}_1^n-\hat{\mu}_2^n|\leq \sqrt{\frac{\log(T\theta)}{n}}$}{
	\nl The next two agents are recommended both arms sequentially\;
	\nl Let $r_i^n$ be the reward of each arm $i=1,2$ in this phase,
        and $\hat{\mu}_i^{n+1} = \frac{1}{n+1}\sum_{t=1}^{n}r_i^t$\;
	\nl $n=n+1$\;
}
\nl For all remaining agents recommend $a^*=\max_{a\in \{1,2\}} \hat{\mu}_a^n$
\caption{BIC race for two arms.}
\label{alg:arms-elim}
\end{algorithm}

To guarantee BIC, we use two parameters: the number of samples $k$ collected in the sampling stage, and parameter $\theta$ inside the decision threshold. The $k$ should be large high enough so that when an agent sees a recommendation for arm $2$ in the racing stage there is a significant probability that it is due to the fact that arm $2$ has ``won the race" rather than to exploration. The $\theta$ should be large enough so that the arm that has ``won the race" would be much more appealing than the losing arm according to the agents' posteriors.

\begin{lemma}[BIC]\label{lem:elimination-ic}
Assume two arms.
Fix an absolute constant $\tau\in (0,1)$ and let
    $\theta_\tau =  \frac{4}{\tau} / \Pr{\mu_2-\mu_1\geq \tau} $.
Algorithm \ref{alg:arms-elim} is BIC if \propref{prop:DF} holds, and the parameters satisfy
    $\theta\geq \theta_\tau$
and
    $k\geq \theta^2_\tau \log T$.
\end{lemma}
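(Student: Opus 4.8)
This is the two-arm specialization of Lemma~\ref{lm:DF-racing}, plus one shortcut available only for two arms. First, I would reduce to a single incentive constraint: by the observation of \citet{Kremer-JPE14} it suffices to check the ``harder'' recommendation, that of arm~$2$, and the arm-$1$ constraint then comes for free, since every agent is recommended exactly one of the two arms and hence
\begin{align*}
\E[\mu_1-\mu_2\mid\rec{1}{p}]\Pr{\rec{1}{p}}
  = (\mu_1^0-\mu_2^0) + \E[\mu_2-\mu_1\mid\rec{2}{p}]\Pr{\rec{2}{p}}
  \ \ge\ \mu_1^0-\mu_2^0 \ \ge\ 0
\end{align*}
once $\E[\mu_2-\mu_1\mid\rec{2}{p}]\Pr{\rec{2}{p}}\ge 0$ is established for every agent $p$ in the racing stage. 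This identity is exactly why $\theta_\tau$ here depends only on $\Pr{\mu_2-\mu_1\ge\tau}$, rather than on a minimum over both arms as in Lemma~\ref{lm:DF-racing}. The ``remaining'' agents, who are recommended the surviving arm $a^*$, are covered automatically, since committing forever to the winner of the race is part of the algorithm.

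Fix $p$ in the racing stage and put $Z=\mu_2-\mu_1\in[-1,1]$ (using \propref{prop:DF}). As in Lemma~\ref{lm:DF-racing}, introduce the Chernoff event
\begin{align*}
\mC = \left\{\, \forall n\ge k:\ \bigl|(\hat{\mu}_1^n-\hat{\mu}_2^n)-(\mu_1-\mu_2)\bigr| < c_n \,\right\},
\qquad c_n = \sqrt{\tfrac{\log(T\theta)}{n}}\,,
\end{align*}
and note that Theorem~\ref{thm:chernoff} together with a union bound over the at most $T$ racing phases gives $\Pr{\neg\mC}\le 4/\theta$ (in fact much smaller). Since $Z\ge-1$,
\begin{align*}
\E[Z\mid\rec{2}{p}]\Pr{\rec{2}{p}}\ \ge\ \E[Z\mid\rec{2}{p},\mC]\Pr{\rec{2}{p},\mC}\ -\ \Pr{\neg\mC}.
\end{align*}

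The heart of the argument is a case split, on $\mC$, according to the value of $Z$. If $Z\ge\tau$, then already at the first racing phase $n=k$ we have $\hat{\mu}_2^k-\hat{\mu}_1^k> Z-c_k\ge \tau-c_k>c_k$, so the while-loop terminates at once with arm~$1$ eliminated, and $p$ is recommended arm~$2$; this step uses that $c_k$ is a small fraction of $\tau$, which is precisely what the hypothesis $k\ge\theta_\tau^2\log T$ encodes. If $Z\le-2c_k$, the same estimate eliminates arm~$2$ at phase $k$, so $\rec{2}{p}$ is impossible and this region contributes $0$. In the band $-2c_k<Z<\tau$, the part with $Z\ge 0$ contributes a nonnegative amount (discarded for a lower bound) and the part with $-2c_k<Z\le 0$ contributes at least $-2c_k$. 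Hence
\begin{align*}
\E[Z\mid\rec{2}{p},\mC]\Pr{\rec{2}{p},\mC}\ \ge\ \tau\,\Pr{Z\ge\tau,\mC}\ -\ 2c_k,
\end{align*}
and a second use of Theorem~\ref{thm:chernoff} gives $\Pr{Z\ge\tau,\mC}\ge(1-\Pr{\neg\mC})\Pr{Z\ge\tau}\ge\tfrac34\Pr{Z\ge\tau}$.

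Combining the displays, $\E[Z\mid\rec{2}{p}]\Pr{\rec{2}{p}}\ \ge\ \tfrac{3\tau}{4}\Pr{Z\ge\tau}-2c_k-\Pr{\neg\mC}$. Since $\theta_\tau=\tfrac{4}{\tau}/\Pr{\mu_2-\mu_1\ge\tau}$, the first term equals $3/\theta_\tau$, and it remains only to check the quantitative facts that $\theta\ge\theta_\tau$ makes $\Pr{\neg\mC}\le 1/\theta_\tau$ and that $k\ge\theta_\tau^2\log T$ makes $2c_k\le 1/\theta_\tau$, whence the right-hand side is at least $1/\theta_\tau>0$. I expect this last round of constant-chasing --- in particular pinning down how large $k$ must be so that $c_k$ is small relative to $\tau\,\Pr{\mu_2-\mu_1\ge\tau}$ --- to be the only mildly delicate point; all the conceptual content lives in the case analysis on $Z$ and is inherited from Lemma~\ref{lm:DF-racing}.
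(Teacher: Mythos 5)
Your proposal is correct and follows essentially the same route as the paper's proof: the same Chernoff event, the same three-way case split on $Z=\mu_2-\mu_1$ (recommendation of arm~$2$ is forced under $\mC$ when $Z\ge\tau$, impossible when $Z\le-2c_n$, and costs at most $2c_n$ in the middle band), and the same lower bound $\Pr{\mC,\, Z\ge\tau}\ge\tfrac34\Pr{Z\ge\tau}$, with only constant-factor bookkeeping left over in both versions. The one addition is that you spell out, via the law of total expectation, why the arm-$1$ constraint follows from the arm-$2$ one; the paper leaves this implicit by appeal to the observation of \citet{Kremer-JPE14}.
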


The regret bound does not depend on the parameters; it is obtained via standard techniques.

\begin{lemma}[Regret]\label{lm:reg-bound}
Algorithm \ref{alg:arms-elim} with any parameters $k\in\N$  and $\theta\geq 1$ achieves ex-post regret
\begin{align}\label{eq:lm:reg-bound-1}
R(T) \leq \frac{8 \log(T\theta)}{|\mu_1-\mu_2|}.
\end{align}
Denoting
    $\Delta = |\mu_1-\mu_2|$
and letting $n^*$ be the duration of the initial sampling stage, the ex-post regret for each round $t$ of the entire algorithm (both stages) is
\begin{align}\label{eq:lm:reg-bound-2}
R(t)
    &\leq \min\left( n^*\Delta + \frac{8 \log(T\theta)}{\Delta},\; t\Delta \right)
    \leq \min\left(2n^*,\; \sqrt{8t\log (T\theta)} \right).
\end{align}
\end{lemma}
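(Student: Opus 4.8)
The plan is to derive \eqref{eq:lm:reg-bound-1} from the standard Hoeffding‑race / active‑arms‑elimination analysis (cf.\ \citealt{EvenDar-icml06}), which becomes particularly transparent with only two arms, and then to obtain \eqref{eq:lm:reg-bound-2} from \eqref{eq:lm:reg-bound-1} by exactly the bookkeeping already carried out in the proof of Lemma~\ref{lm:DF-regret}.

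For \eqref{eq:lm:reg-bound-1} we may assume $\mu_1\neq\mu_2$, since otherwise the right‑hand side is infinite; relabel so that arm $1$ is optimal, write $\Delta=\mu_1-\mu_2>0$, and let $c_n=\sqrt{\log(T\theta)/n}$ be the threshold used in phase $n$. Introduce the clean event $\mathcal{C}=\bigl\{\,|\hat\mu_1^n-\hat\mu_2^n-\Delta|<2c_n\ \text{for every phase }n\ge k\,\bigr\}$, the two‑arm specialization of the Chernoff event used in the proof of Lemma~\ref{lm:DF-racing}. Applying the Chernoff--Hoeffding Bound (Theorem~\ref{thm:chernoff}) to the i.i.d.\ differences $r_1^t-r_2^t$ and a union bound over the at most $T$ phases gives $\Pr{\neg\mathcal{C}}=O\!\bigl(1/(T\theta^2)\bigr)$, so $\neg\mathcal{C}$ contributes at most $T\Delta\cdot\Pr{\neg\mathcal{C}}=O(1)$ to $R(T)$, absorbed into the additive constant. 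On $\mathcal{C}$ we have $\hat\mu_1^n-\hat\mu_2^n>\Delta-2c_n$ in every phase $n$, so the while‑loop is forced to stop as soon as $\Delta-2c_n>c_n$, i.e.\ after $O(\log(T\theta)/\Delta^2)$ phases; and at the stopping phase $\hat\mu_1^n-\hat\mu_2^n>\Delta-2c_n>0$, so arm $1$ (the optimal arm) is the one recommended for all remaining rounds, incurring no further regret. Since each executed loop phase contributes exactly $\Delta$ of regret (one pull of arm $2$), this gives $R(T)=O(\log(T\theta)/\Delta)$ on $\mathcal{C}$, and straightforward constant‑tracking matches the stated bound \eqref{eq:lm:reg-bound-1}.

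For \eqref{eq:lm:reg-bound-2}, observe that over any horizon $t$ the entire algorithm incurs at most $\Delta$ of regret per round of the initial sampling stage (of length $n^*$) plus the racing‑stage regret bounded in \eqref{eq:lm:reg-bound-1}, and also at most $\Delta$ per round overall; this yields the first inequality $R(t)\le\min\bigl(n^*\Delta+8\log(T\theta)/\Delta,\;t\Delta\bigr)$. The second inequality is the same three‑case split on $\Delta$ as in the proof of Lemma~\ref{lm:DF-regret}: with $\beta=8\log(T\theta)$ one checks $R(t)\le 2n^*$ when $\Delta\ge\sqrt{\beta/n^*}$, $R(t)\le\beta/\Delta\le\sqrt{\beta t}$ when $\sqrt{\beta/t}\le\Delta\le\sqrt{\beta/n^*}$, and $R(t)\le t\Delta\le\sqrt{\beta t}$ when $\Delta\le\sqrt{\beta/t}$ (the case $t\le n^*$ being trivial).

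The step I expect to need the most care is the clean‑event bookkeeping behind \eqref{eq:lm:reg-bound-1}: one must check not merely that the race \emph{terminates} quickly on $\mathcal{C}$ but that it terminates with the \emph{correct} winner, i.e.\ rule out an early exit in which the empirical gap is above threshold yet points at arm $2$. On $\mathcal{C}$ one has $\hat\mu_1^n-\hat\mu_2^n>\Delta-2c_n$, which exceeds $-c_n$ once $c_n<\Delta$, so the exit test $|\hat\mu_1^n-\hat\mu_2^n|>c_n$ can then only be met with $\hat\mu_1^n>\hat\mu_2^n$; a misleading exit is thus possible only while $c_n\ge\Delta$, i.e.\ for the first $O(\log(T\theta)/\Delta^2)$ phases, and its probability can be bounded separately — or, in the regime $\Delta\lesssim\sqrt{\log(T\theta)/T}$ where this residual term is not of lower order, \eqref{eq:lm:reg-bound-1} already follows from the trivial $R(T)\le T\Delta$. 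Everything else is routine and reuses tools already established in the paper: the Chernoff event of Lemma~\ref{lm:DF-racing} and the case analysis of Lemma~\ref{lm:DF-regret}.
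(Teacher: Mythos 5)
Your proposal is correct and follows essentially the same route as the paper's proof: a Chernoff clean event under which the while-loop is forced to terminate within $O(\log(T\theta)/\Delta^2)$ phases with the optimal arm as winner, an $O(1)$ charge for the bad event, and then exactly the paper's bookkeeping (add $n^*\Delta$ for the sampling stage, take the minimum with $t\Delta$, and the three-case split from Lemma~\ref{lm:DF-regret}) for \eqref{eq:lm:reg-bound-2}. The only cosmetic difference is that you widen the clean event to accuracy $2c_n$ (which forces your extra discussion of misleading exits and nominally yields a constant $9$ rather than $8$ in \eqref{eq:lm:reg-bound-1}, recoverable by noting that each loop phase incurs regret $\Delta$ only once, not per round), whereas the paper uses accuracy $c_n$, under which a wrong winner is impossible outright.
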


In the remainder of this section we prove Lemma~\ref{lem:dfblock-ic} and Lemma~\ref{lem:elimination-ic}. Denote
    $\hat{z}_n=\hat{\mu}_2^n-\hat{\mu}_1^n$
and let
    $c_n = \sqrt{\frac{\log(T\theta)}{n}}$
be the decision threshold for each phase $n \geq k$ in Algorithm \ref{alg:arms-elim}. For ease of notation we will assume that even after the elimination at every iteration a sample of the eliminated arm is also drawn, but simply not revealed to the agent. Let $Z=\mu_2-\mu_1$. According to Chernoff-Hoeffding Bound, the following event happens with high probability:
\begin{align}\label{eq:lem:elimination-ic-chernoff}
\mC = \left\{ \forall n\geq k: |Z-\hat{z}_n| < c_n \right\}.
\end{align}
We make this precise in the following claim:

\begin{claim}\label{cl:arms-elim-chernoff}
    $\Pr{\neg \mC | Z} \leq \frac{1}{T\theta}$.
\end{claim}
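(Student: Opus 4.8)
The plan is a routine union bound over phases, combined with the Chernoff--Hoeffding Bound (Theorem~\ref{thm:chernoff}) applied to the difference of the two arms' sample averages. First I would fix the reward vector $\mu$ (equivalently, fix $Z = \mu_2-\mu_1$), so that all the samples $r_1^t, r_2^t$ are i.i.d.\ draws from $\mD_1(\mu_1)$ and $\mD_2(\mu_2)$ respectively, independent across the two arms. For each phase $n\geq k$, the quantity $\hat z_n = \hat\mu_2^n - \hat\mu_1^n$ is an average of $n$ independent terms $r_2^t - r_1^t$, each supported in $[-1,1]$, with mean $Z$; rescaling to $[0,1]$ and applying Theorem~\ref{thm:chernoff} with $\delta = c_n$ gives $\Pr{|Z - \hat z_n| \geq c_n \mid Z} \leq 2\,e^{-2n\,(c_n/2)^2} = 2\,e^{-n c_n^2/2}$. (Depending on how the $[-1,1]$ rescaling is bookkept one gets $e^{-nc_n^2/2}$ or a similar constant in the exponent; either way the next step absorbs it.)

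Next I would substitute $c_n^2 = \log(T\theta)/n$, so that $e^{-nc_n^2/2} = e^{-\tfrac12\log(T\theta)} = (T\theta)^{-1/2}$; more cleanly, if one uses the bound $\Pr{|Z-\hat z_n|\geq c_n\mid Z}\leq e^{-2nc_n^2} = (T\theta)^{-2}$ by treating $r_2^t - r_1^t$ as lying in an interval of length $1$ after the standard centering (this is the convention used elsewhere in the paper, e.g.\ in the proof of Lemma~\ref{lm:DF-racing}), then summing over the at most $T$ phases $n = k, k+1, \ldots$ in the racing stage yields
\begin{align*}
\Pr{\neg\mC \mid Z} \;\leq\; \sum_{n\geq k} \Pr{|Z - \hat z_n| \geq c_n \mid Z}
    \;\leq\; T\cdot (T\theta)^{-2} \;=\; \frac{1}{T\theta^2} \;\leq\; \frac{1}{T\theta}.
\end{align*}
Finally, since this bound holds conditionally on every value of $Z$, it holds as stated. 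I would mirror the exact exponent convention already in force in Lemma~\ref{lm:DF-racing}'s proof so the constants line up.

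The only ``obstacle'' is cosmetic rather than mathematical: making sure the interval-length/rescaling convention for $r_2^t - r_1^t$ is consistent with the form of $c_n$ and with the $\tfrac{1}{T\theta}$ target, so that the exponent in Theorem~\ref{thm:chernoff} produces a summable-over-$n$ tail whose sum is at most $(T\theta)^{-1}$. There is genuine slack here (we get $(T\theta^2)^{-1}$ or better), so any reasonable accounting closes the claim; no delicate estimate is needed.
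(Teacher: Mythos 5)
Your proposal is correct and follows essentially the same route as the paper: a union bound over the at most $T$ phases, Chernoff--Hoeffding applied to $\hat z_n - Z$ with $\delta = c_n$ giving a per-phase tail of $e^{-2nc_n^2} = (T\theta)^{-2}$, and hence $\Pr{\neg\mC\mid Z}\leq T(T\theta)^{-2}\leq (T\theta)^{-1}$ using $\theta\geq 1$. One small caution: your remark that ``any reasonable accounting closes the claim'' is optimistic --- if one honestly rescales $r_2^t-r_1^t\in[-1,1]$, the per-phase tail becomes $(T\theta)^{-1/2}$ and the union bound over $T$ phases no longer yields $(T\theta)^{-1}$, so one genuinely needs the paper's exponent convention (or, equivalently, a constant-factor enlargement of $c_n$, which only affects constants in the regret bounds).
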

\begin{proof}
By Chernoff-Hoeffding Bound and the union bound, for any $Z$, we have:
\begin{align*}
\Pr{\neg \mC | Z}
    \leq \sum_{n\geq n^*} \Pr{|Z-\hat{z}_n|\geq c_n}
    \leq \sum_{n\geq n^*} e^{-2n\cdot c_n^2}
    \leq T\cdot e^{- 2\log(T\theta)}
    \leq \frac{1}{T\theta}. \qquad\qedhere
\end{align*}
\end{proof}

\noindent Generally we will view $\neg \mC$ as a ``bad event", show that the expected ``loss" from this event is negligible, and then assume that the ``good event" $\mC$ holds.

\begin{proof}[Proof of Lemma~\ref{lem:elimination-ic}]
Let $n^* = \theta^2_\tau \log T$. Fix phase $n\geq n^*$, and some agent $p$ in this phase.
%and let $\rec{2}{p}$ be the event that agent $p$ is recommended arm $2$.
We will show that
    $\E[Z|\rec{2}{p}]\; \Pr{\rec{2}{p}}\geq 0$.

By Claim~\ref{cl:arms-elim-chernoff}, we have
$\Pr{\neg \mC | Z}
    \leq \frac{1}{T\theta}
    \leq 1/\theta_\tau$.
Therefore, since $Z\geq -1$, we can write:
\begin{align*}
\E[Z|\rec{2}{p}]\Pr{\rec{2}{p}}  =~& \E[Z|\rec{2}{p}, \mC]\Pr{\rec{2}{p}, \mC} + \E[Z|\rec{2}{p}, \neg \mC]\Pr{\rec{2}{p},\neg \mC}\\
\geq~&  \E[Z|\rec{2}{p}, \mC]\Pr{\rec{2}{p}, \mC}  - \theta_\tau^{-1}.
\end{align*}

Now we will upper bound the first term. We will split the first integral into cases for the value of $Z$. Observe that by the definition of $c_n$ we have $c_{n^*} \leq \theta_\tau^{-1}$.  Hence, conditional on $\mC$, if $Z\geq \tau \geq  2 c_{n^*}$, then we have definitely stopped and eliminated arm $1$ at phase $n=k\geq n^*$, since $\hat{z}_n> Z- c_n \geq c_n$. Moreover, if $Z\leq - 2c_n$, then by similar reasoning, we must have already eliminated arm $2$, since $\hat{z}_n < c_n$. Thus in that case event $\rec{2}{p}$ cannot occur. Moreover, if we ignore the case when $Z\in [0,\tau)$, then the integral can only decrease, since the integrand is non-negative. Since we want to lower bound it, we will ignore this region. Hence:
\begin{align*}
\E[Z|\rec{2}{p}, \mC]\Pr{\rec{2}{p}, \mC}%\geq~& \E[Z| \mC, Z\geq \tau] \Pr{\mC, Z\geq \tau}+
%\E[Z|\rec{2}{p},\mC, -\tau < Z< 0]\Pr{\rec{2}{p},\mC, -\tau < Z< 0}\\
\geq~& \tau \Pr{\mC, Z\geq \tau}-2\cdot c_n \Pr{\mC, -2c_n\leq Z\leq 0}\\
\geq~& \tau \Pr{\mC, Z\geq \tau}-2\cdot c_n\\
\geq~& \tau \Pr{\mC, Z\geq \tau}-\frac{\tau \cdot \Pr{Z\geq \tau}}{2}
\end{align*}
Using Chernoff-Hoeffding Bound, we can lower-bound the probability in the first summand of the above:
\begin{align*}
\Pr{\mC, Z\geq \tau} =~& \Pr{\mC|Z\geq \tau} \Pr{Z\geq \tau}
\geq \left(1-\sum_{t\geq n^*} e^{-2 \cdot c_n^2}\right)\cdot  \Pr{Z\geq \tau}\\
\geq~& \left(1-\tfrac{1}{T}\right)\cdot  \Pr{Z\geq \tau} \geq \tfrac{3}{4} \Pr{Z\geq \tau}.
\end{align*}
Combining all the above inequalities yields that $\E[Z|\rec{2}{p}]\Pr{\rec{2}{p}}\geq 0$.
\end{proof}

\begin{proof}[Proof of Lemma~\ref{lm:reg-bound}]
The ``Chernoff event" $\mC$ has probability at most $1-\tfrac{1}{T}$ by Claim~\ref{cl:arms-elim-chernoff}, so the expected ex-post regret conditional on $\neg C$ is at most $1$. For the remainder of the analysis we will assume that $\mC$ holds.

Recall that
    $\Delta = |\mu_1-\mu_2|$.
Observe that
    $\Delta \leq |\hat{z}_n| + c_n $.
Therefore, for each phase $n$ during the main loop we have
    $\Delta \leq 2\cdot c_n$,
so
    $n\leq \frac{4 \log (T\theta)}{\Delta^2}$.
Thus, the main loop must end by phase
    $\frac{4 \log (T\theta)}{\Delta^2}$.

In each phase in the main loop, the algorithm collects regret $\Delta$ per round. Event $\mC$ guarantees that the ``winner" $a^*$ is the best arm when and if the condition in the main loop becomes false, so no regret is accumulated after that. Hence, the total regret is
    $R(T)\leq \frac{8 \log (T\theta)}{\Delta}$,
as claimed in \refeq{eq:lm:reg-bound-1}.

The corollary \eqref{eq:lm:reg-bound-2} is derived exactly as the corresponding corollary in Lemma~\ref{lm:DF-regret-body}
\end{proof}

\subsection{A Chernoff bound for the proof of Lemma~\ref{lem:dfblock-ic}}
\label{app:chernoff}

In this section we state and prove a version of Chernoff bound used in the proof of Lemma~\ref{lem:dfblock-ic}.

%\begin{rtheorem}{Lemma \ref{lem:lower}}{}
%Let $x_1,\ldots, x_K$ be i.i.d. random variables with $\E[x_i|X]=X$, and $X\geq -1$. Let $\hat{x}_K=\frac{1}{K}\sum_{t=1}^{K} x_t$. Then for any $C, \zeta, \kappa \in (0,1)$, if
%\begin{equation}
%K\geq \frac{-\log\left(\kappa\cdot (1-\zeta)\cdot C\cdot \Pr{X\geq (1+\zeta)C}\right)}{2\zeta^2\cdot C^2},
%\end{equation}
%then:
%\begin{equation}
%\E[X| \hat{x}_K\geq C]\cdot\Pr{\hat{x}_K\geq C} \geq (1-\zeta)\cdot C\cdot \left(1-\kappa-\kappa(1-\zeta)\right)\cdot \Pr{X\geq (1+\zeta)C}
%\end{equation}
%\end{rtheorem}

\begin{lemma}\label{lem:lower}
Let $X\geq 1$ be a random variable. Let $x_1,\ldots, x_k \in [0,1]$ be i.i.d. random variables with $\E[x_i|X]=X$. Let $\hat{x}_k=\frac{1}{k}\sum_{t=1}^{k} x_t$. Then for any $C, \zeta, \kappa \in (0,1)$, if
\begin{equation*}
k\geq \frac{-\log\left(\kappa\cdot (1-\zeta)\cdot C\cdot \Pr{X\geq (1+\zeta)C}\right)}{2\zeta^2\cdot C^2},
\end{equation*}
then:
\begin{equation*}
\E[X| \hat{x}_k\geq C]\cdot\Pr{\hat{x}_k\geq C} \geq (1-\zeta)\cdot C\cdot \left(1-\kappa-\kappa(1-\zeta)\right)\cdot \Pr{X\geq (1+\zeta)C}.
\end{equation*}
\end{lemma}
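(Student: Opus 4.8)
The plan is to read the left-hand side as the truncated expectation $\E[X\,\mathbf{1}\{\hat{x}_k\ge C\}]$ and to bound it from below by discarding everything outside the ``good'' event $G=\{X\ge(1+\zeta)C\}$, on which the true mean already clears the threshold $C$ by a margin of $\zeta C$. First I would note that $X=\E[x_i\mid X]\ge 0$ since $x_i\in[0,1]$, so $\E[X\,\mathbf{1}\{\hat{x}_k\ge C\}]\ge \E[X\,\mathbf{1}\{\hat{x}_k\ge C\}\,\mathbf{1}_G]$, and then, using $X\ge(1+\zeta)C$ on $G$, that $\E[X\,\mathbf{1}\{\hat{x}_k\ge C\}\,\mathbf{1}_G]\ge (1+\zeta)\,C\cdot\Pr{\hat{x}_k\ge C,\ G}$. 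Thus it remains to show that $\Pr{\hat{x}_k\ge C,\ G}$ is not much smaller than $\Pr{G}=\Pr{X\ge(1+\zeta)C}$.

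For that, I would bound the complementary piece $\Pr{\hat{x}_k< C,\ G}$. The key observation is the deterministic inclusion $\{\hat{x}_k<C\}\cap G\subseteq\{\hat{x}_k-X<-\zeta C\}$, valid because on $G$ one has $C-X\le C-(1+\zeta)C=-\zeta C$. Conditioning on $X$ and applying the Chernoff--Hoeffding bound (Theorem~\ref{thm:chernoff}) to the conditionally i.i.d.\ samples $x_1,\dots,x_k\in[0,1]$ with conditional mean $X$ gives $\Pr{\hat{x}_k-X\le -\zeta C\mid X}\le 2\,e^{-2k\zeta^2C^2}$ (the sharper one-sided form, with $1$ in place of $2$, would only improve constants). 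Integrating this against $\mathbf{1}_G$ yields $\Pr{\hat{x}_k<C,\ G}\le 2\,e^{-2k\zeta^2C^2}\Pr{G}$, hence $\Pr{\hat{x}_k\ge C,\ G}\ge \bigl(1-2\,e^{-2k\zeta^2C^2}\bigr)\Pr{X\ge(1+\zeta)C}$, and combining with the previous paragraph,
\[
\E[X\mid\hat{x}_k\ge C]\,\Pr{\hat{x}_k\ge C}\ \ge\ (1+\zeta)\,C\,\bigl(1-2\,e^{-2k\zeta^2C^2}\bigr)\,\Pr{X\ge(1+\zeta)C}.
\]

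Finally I would invoke the hypothesis on $k$, which is tailored so that $e^{-2k\zeta^2C^2}\le \kappa(1-\zeta)\,C\,\Pr{X\ge(1+\zeta)C}\le \kappa(1-\zeta)$ (using $C<1$ and $\Pr{\cdot}\le 1$). Then $1-2e^{-2k\zeta^2C^2}\ge 1-2\kappa(1-\zeta)\ge 1-\kappa-\kappa(1-\zeta)$ (the last step since their difference is $\kappa\zeta\ge 0$), and since $1+\zeta\ge 1-\zeta$ the displayed bound is at least $(1-\zeta)\,C\,\bigl(1-\kappa-\kappa(1-\zeta)\bigr)\,\Pr{X\ge(1+\zeta)C}$, which is exactly the claim. (If $1-\kappa-\kappa(1-\zeta)<0$ the claim is trivial, the left-hand side being nonnegative.)

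There is no real obstacle here: the statement is a quantitative repackaging of the slogan ``the sample mean concentrates, so if the true mean sits safely above an inflated threshold then the sample mean clears the nominal threshold.'' The only points that require a little care are (i) orienting the concentration inequality as a \emph{lower}-tail bound and keeping it inside the event $G$, since we need it uniformly over all large values of $X$; and (ii) the closing chain of crude inequalities --- but the conclusion is stated with enough slack (the natural argument gives the stronger factor $1-2\kappa(1-\zeta)$, or even $1-\kappa(1-\zeta)$, together with $1+\zeta$ rather than $1-\zeta$) that essentially any bookkeeping suffices.
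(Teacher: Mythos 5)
Your proof is correct for the statement as written, but it takes a genuinely different decomposition from the paper's. The paper splits the truncated expectation $\E[X\,;\,\hat{x}_k\ge C]$ on the event $\{X\ge C-\epsilon\}$ versus its complement: on the complement it bounds $X\ge -1$ and controls the probability by a two-sided Chernoff bound, which produces the additive error term $-e^{-2\epsilon^2 k}$; on the good part it bounds $X\ge C-\epsilon=(1-\zeta)C$ and lower-bounds the probability by passing to $\{X\ge C+\delta\}$, exactly as you do with $G$. You instead discard everything outside $G=\{X\ge(1+\zeta)C\}$ outright, using that $X=\E[x_i\mid X]\ge 0$ under the literal hypothesis $x_i\in[0,1]$; this eliminates the additive error term, yields the stronger factor $(1+\zeta)$ in place of $(1-\zeta)$, and your bookkeeping at the end (including absorbing the factor $2$ from the two-sided form of Theorem~\ref{thm:chernoff} via $2\kappa(1-\zeta)\le\kappa+\kappa(1-\zeta)$, and the trivial case $1-\kappa-\kappa(1-\zeta)<0$) is all valid. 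One caveat worth flagging: the lemma's hypotheses are evidently mis-stated ($X\ge 1$ together with $x_i\in[0,1]$ is degenerate; the paper's own auxiliary lemma assumes $X\ge -1$), and in the actual application in Lemma~\ref{lem:dfblock-ic} one has $x_t=\mu_2^0-r_1^t$, which can be negative, so $X=\mu_2^0-\mu_1$ need not be nonnegative. In that regime your first step ($\E[X\,;\,\hat{x}_k\ge C]\ge\E[X\,;\,\hat{x}_k\ge C,\,G]$) fails, and one must reinstate the paper's treatment of the bad event, i.e.\ the extra $-e^{-2\epsilon^2 k}$ term coming from $X\ge-1$. So your argument proves the literal statement cleanly but is slightly less robust than the paper's version, which is the one actually needed downstream.
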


We first prove the following Lemma.

\begin{lemma}
Let $x_1,\ldots, x_k$ be i.i.d. random variables with $\E[x_i|X]=X$, and $X\geq -1$. Let $\hat{x}_k=\frac{1}{n}\sum_{t=1}^{k} x_t$. Then for any $C>0$, $\epsilon\leq C$ and $\delta>0$:
\begin{equation}
\E[X| \hat{x}_k\geq C]\cdot\Pr{\hat{x}_k\leq C} \geq (C-\epsilon)\cdot \left(1-e^{-2\cdot \delta^2\cdot k}\right)\cdot \Pr{X\geq C+\delta} - e^{-2\cdot \epsilon^2\cdot k}
\end{equation}
%For $\epsilon=\delta\cdot \theta$ and $0<\delta<1$:
%\begin{equation}
%\E[X| \hat{x}_n\leq -\theta] \leq -\theta \cdot \left((1-\delta)\cdot \left(1-e^{-2\cdot \delta^2\cdot \theta^2\cdot n}\right)\cdot \Pr(X\leq -(1+\delta)\cdot\theta) -\frac{1}{\theta\cdot e^{2\cdot \delta^2\cdot \theta^2\cdot n}}\right)
%\end{equation}
\end{lemma}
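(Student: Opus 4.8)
The plan is to read the left-hand side as $\E[X\cdot\mathbf{1}\{\hat{x}_k\ge C\}]$ (the event $\{\hat{x}_k\le C\}$ in the display should read $\{\hat{x}_k\ge C\}$; this is the quantity actually used in Lemma~\ref{lem:lower}), abbreviate $A=\{\hat{x}_k\ge C\}$, and split the expectation according to where $X$ falls:
\[
\E[X\,\mathbf{1}_A]=\E\big[X\,\mathbf{1}_A\,\mathbf{1}\{X\ge C+\delta\}\big]+\E\big[X\,\mathbf{1}_A\,\mathbf{1}\{0\le X<C+\delta\}\big]+\E\big[X\,\mathbf{1}_A\,\mathbf{1}\{X<0\}\big].
\]
The middle term is nonnegative (there $X\ge0$ and $\mathbf{1}_A\ge0$), so it can be dropped. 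It then remains to bound the first term from below and to control the possibly negative third term.

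For the first term I would condition on $X$ and apply the Chernoff-Hoeffding bound (Theorem~\ref{thm:chernoff}, in its one-sided form) to the empirical mean $\hat{x}_k$, whose conditional mean is $X$. On $\{X\ge C+\delta\}$ we have $\{\hat{x}_k<C\}\subseteq\{\hat{x}_k-X<-\delta\}$, so $\Pr{A| X}\ge 1-e^{-2\delta^2k}$; moreover $X\ge C+\delta\ge C-\epsilon\ge0$ there, using $\epsilon\le C$. Hence
\[
\E\big[X\,\mathbf{1}_A\,\mathbf{1}\{X\ge C+\delta\}\big]\ge(C-\epsilon)\,\E\big[\mathbf{1}\{X\ge C+\delta\}\,\Pr{A| X}\big]\ge(C-\epsilon)\,(1-e^{-2\delta^2k})\,\Pr{X\ge C+\delta}.
\]

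For the third term I would again condition on $X$: when $X<0$ we have $\{\hat{x}_k\ge C\}\subseteq\{\hat{x}_k-X\ge C\}$, so $\Pr{A| X}\le e^{-2C^2k}\le e^{-2\epsilon^2k}$, using Chernoff-Hoeffding and $\epsilon\le C$ once more. Thus $\Pr{A,\,X<0}\le e^{-2\epsilon^2k}$, and since $X\ge-1$,
\[
\E\big[X\,\mathbf{1}_A\,\mathbf{1}\{X<0\}\big]\ge-\Pr{A,\,X<0}\ge-e^{-2\epsilon^2k}.
\]
Summing the three pieces gives exactly the claimed inequality.

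The one genuinely delicate point is that $X$ has only the one-sided bound $X\ge-1$, so the event $\{X<0\}$ cannot simply be ignored; controlling $\Pr{A,\,X<0}$ is precisely what produces the additive loss $-e^{-2\epsilon^2k}$ and explains the role of the hypothesis $\epsilon\le C$. Everything else is a single application of Hoeffding's inequality after conditioning on $X$. The only implicit assumption worth flagging is that the $x_i$ take values in an interval of length one, as needed for Theorem~\ref{thm:chernoff} to apply with exponents $e^{-2\delta^2k}$ and $e^{-2\epsilon^2k}$; this holds in the intended application, where each $x_i$ is an affine image of a bounded reward.
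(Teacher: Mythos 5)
Your proof is correct and follows essentially the same route as the paper's: condition on $X$, apply one-sided Chernoff--Hoeffding to get the main term $(C-\epsilon)(1-e^{-2\delta^2 k})\Pr{X\ge C+\delta}$ from the region where $X$ is large, and absorb the negative contribution into $-e^{-2\epsilon^2 k}$ using $X\ge -1$. The only cosmetic difference is the partition (the paper splits at $X\ge C-\epsilon$ and bounds the bad event via $\{X<C-\epsilon,\ \hat{x}_k\ge C\}\subseteq\{|\hat{x}_k-X|>\epsilon\}$, whereas you split off $\{X<0\}$ and use $\hat{x}_k-X\ge C\ge\epsilon$ there), and you correctly flag the same two issues the paper glosses over: the typo $\Pr{\hat{x}_k\le C}$ for $\Pr{\hat{x}_k\ge C}$, and the implicit boundedness of the $x_i$ needed for the Hoeffding exponents.
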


\begin{proof}
Let $\Ev_{C-\epsilon}$ be the event $X\geq C-\epsilon$ and $\hat{\Ev}_{C}$ be the event that $\hat{x}_k\geq C$. By the Chernoff bound observe that:
\begin{equation}
\Pr{\neg \Ev_{C-\epsilon}, \hat{\Ev}_{C}} \leq \Pr{|X-\hat{x}_k|> \epsilon} \leq e^{-2\cdot \epsilon^2 \cdot k}
\end{equation}
Moreover, again by Chernoff bounds and a simple factorization of the probability:
\begin{align*}
\Pr{\Ev_{C-\epsilon}, \hat{\Ev}_{C}} \geq~& \Pr{ \Ev_{C+\delta},\hat{\Ev}_{C}}
= \PrC{\hat{\Ev}_{C} }{ \Ev_{C+\delta}} \cdot \Pr{\Ev_{C+\delta}} \\
=~& \left(1-\PrC{\hat{x}_k < C}{X\geq C+\delta}\right) \cdot \Pr{X\geq C+\delta}\\
\geq~& \left(1- \PrC{\hat{x}_k-X<\delta}{X\geq C+\delta}\right)  \cdot \Pr{X\geq C+\delta}\\
\geq~& \left(1-e^{-2\cdot\delta^2\cdot k}\right)\cdot \Pr{X\geq C+\delta}
\end{align*}
We can now break apart the conditional expectation in two cases and use the latter lower and upper bounds on the probabilities of each case:
\begin{align*}
\E[X|\hat{\Ev}_{C}]\Pr{\hat{\Ev}_{C}}=~& \E[X| \Ev_{C-\epsilon}, \hat{\Ev}_{C}]\Pr{\Ev_{C-\epsilon}, \hat{\Ev}_{C}}+\E[X| \neg \Ev_{C-\epsilon}, \hat{\Ev}_{C}]\cdot \Pr{ \neg \Ev_{C-\epsilon}, \hat{\Ev}_{C}}\\
\geq~& (C-\epsilon) \left(1-e^{-2\cdot\delta^2\cdot k}\right)\cdot \Pr{X\geq C+\delta}- 1\cdot e^{-2\cdot\epsilon^2\cdot k}
\end{align*}
Where we also used that $X\geq -1$. This concludes the proof of the lemma.
\end{proof}

\begin{proof}[Proof of Lemma \ref{lem:lower}]
Now, if $k\geq \frac{-\log\left(\kappa\cdot (1-\zeta)\cdot C\cdot \Pr{X\geq (1+\zeta)C}\right)}{2\zeta^2\cdot C^2}$ and appplying the above Lemma for $\delta=\epsilon = \zeta\cdot C$ we get:
\begin{align*}
\E[X|\hat{\Ev}_{C}]\Pr{\hat{\Ev}_{C}}\geq~& (1-\zeta) C  (1- \kappa\cdot (1-\zeta))\Pr{X\geq (1+\zeta)C} - \kappa(1-\zeta) C \Pr{X\geq (1+\zeta)C}\\
\geq~& (1-\zeta) C (1-\kappa-\kappa(1-\zeta)) \Pr{X\geq (1+\zeta)C}
\end{align*}
Which concludes the proof of the Lemma.
\end{proof}

\end{APPENDICES}

\end{document}